\newtheorem{theorem}{Theorem}
\newtheorem{lemma}[theorem]{Lemma}
\newtheorem{proposition}[theorem]{Proposition}
\newtheorem{corollary}[theorem]{Corollary}
\newcommand{\FF}{\mathbb F}
\newcommand{\NN}{\mathbb N}
\newcommand{\PP}{\mathbb P}
\newcommand{\ZZ}{\mathbb Z}
\newcommand{\cA}{\mathcal A}
\newcommand{\cC}{\mathcal C}
\newcommand{\cD}{\mathcal D}
\newcommand{\cE}{\mathcal E}
\newcommand{\cG}{\mathcal G}
\newcommand{\cH}{\mathcal H}
\newcommand{\cI}{\mathcal I}
\newcommand{\cO}{\mathcal O}
\newcommand{\cP}{\mathcal P}
\newcommand{\cS}{\mathcal S}
\newcommand{\cW}{\mathcal W}
\def\myarrow{\ \hbox to 2em{\leaders
\hbox to 0.5ex{\hss\raise 0.55ex\hbox to 0.3ex{\hrulefill}\hss}
\hfill\,\llap{$>$}}\ }
\title{  Tangent Codes  \footnote{{\it  2010 Mathematics Subject Classification:} Primary: 94B27, 14G50; Secondary:  14G17, 11T71   \protect\\
{\it Key words and phrases:}
Zariski tangent space,
 minimum distance of a  tangent  code,
 simultaneous decoding of tangent codes,
 gradient codes,
 operations on codes and affine varieties.
 \protect\\ } }
\author{        }
\date{      }
\begin{document}
\maketitle

\centerline{\scshape Azniv Kasparian }
\medskip
{\footnotesize
 \centerline{Section of Algebra, Department of Mathematics and Informatics}
   \centerline{Kliment Ohridski University of Sofia}
   \centerline{ 5 James Bouchier Blvd., Sofia 1164, Bulgaria}
    \centerline{ {\bf email:} kasparia@fmi.uni-soifa.bg }}

\medskip

\centerline{\scshape Evgeniya Velikova   }
\medskip
{\footnotesize
 \centerline{ Section of Algebra, Department of Mathematics and Informatics }
   \centerline{Kliment Ohridski University of Sofia }
   \centerline{ James Bouchier Blvd., Sofia 1164, Bulgaria }
     \centerline{ {\bf email:} velikova@fmi.uni-sofia.bg }
}

\thispagestyle{empty}

\begin{abstract}
The present article  studies the finite Zariski tangent spaces to an affine variety $X$ as linear codes, in order to characterize their typical or exceptional properties by global geometric conditions on $X$.
The discussion concerns the generic minimum distance of a tangent code to $X$, its lower semi-continuity under a  deformation of $X$, as well as the existence of Zariski tangent spaces to $X$ with exceptional minimum distance.
Tangent codes are shown to admit simultaneous decoding.
The duals of the tangent codes to $X$ are realized by gradients of polynomials from the ideal of $X$.
We provide constructions of affine varieties with near MDS, cyclic or Hamming tangent codes.
Puncturing, shortening and extending finite Zariski tangent spaces are related to the corresponding operations on affine varieties.
The $(u \vert u+v)$ construction of tangent codes is associated with a fibered product of varieties.
Explicit constructions realize   linear Hamming isometries as differentials of morphisms of affine varieties.
\end{abstract}

\section{ Introduction  }

Codes with additional structure are usually equipped with a priori properties, which facilitate their characterization and decoding.
For instance, algebro-geometric Goppa codes allowed Tsfasman, Vl\v{a}dut and Zink to improve the asymptotic  Gilbert-Var\-sha\-mov bound  on the information rate for a fixed relative minimum distance (cf. \cite{TVZ}).
Justesen, Larsen, Elbr{\o}nd, Jensen, Havemose, Hoholdt, Skorobogatov, Vl\v{a}dut, Krachkovskii, Porter, Duursma, Feng, Rao and others developed efficient algorithms for decoding Goppa codes after obtaining the support of the error of the received word
 (Pellikaan's \cite{P} is a survey on these results.)
Duursma's considerations from \cite{D1} imply that the averaged homogeneous weight enumerator of Goppa codes, associated with a complete set of representatives of the linear equivalence classes of divisors of fixed degree is related to the $\zeta$-polynomial of the underlying curve.
The realizations of codes by points of a Grassmannian, a determinantal variety or a  modification of an arc provide other examples for exploiting "an extra structure" on the objects under study.
The interpretation of the finite Zariski tangent spaces to an affine variety $X$, defined over a finite field $\FF_q$ promisees to be useful for construction of extremal codes (see Corollary \ref{MinimizingLength}) and for simultaneous  decoding  of a family of codes, after recognizing the error support  of the received word (Corollary  \ref{SimultaneousDecoding}).
By grouping linear codes in families we acquire a "dynamic" point of  view, which is a natural prerequisite for studying  optimization problems.
Besides, our families are "integrable"   and  "geometric", so that various properties  of the tangent codes are characterized by global geometric conditions on the corresponding affine variety $X$ (see Proposition \ref{MinimumDistance}, Proposition \ref{ShorteningTangentCodes}, Proposition \ref{ExtendingDirectSumUV}).
Explicit constructions of affine varieties realize as finite Zariski tangent spaces near MDS codes (Proposition \ref{NearMDS}), cyclic codes (Corollary \ref{CyclicTangentCodes}), Hamming codes (Proposition \ref{HammingTangentCodes}).

Here is a synopsis of the paper.
Section 2 comprises some preliminaries on the Zariski topology and the Zariski tangent spaces $T_a (X, \FF_{q^m})$ of an affine variety $X$.

Our research starts  in section 3  by studying  the  minimum distance $d(T_a (X, \FF_{q^m}))$  of $T_a(X, \FF_{q^m})$.
Proposition \ref{MinimumDistance} (i) from subsection 3.1  establishes  that   lower bounds on  $d(T_a (X, \FF_{q^m}))$  hold "almost everywhere"
(i.e., on a Zariski  open subset of $X$)  if true at some point $a \in X( \FF_{q^m}) := X \cap \FF_{q^m}$.
It provides explicit equations of the exceptional locus
  $\{ b \in X \, \vert \, d(T_b (X, \FF_{q^m})) < d \,  \mbox{  for some } \,   m \in \NN \, \mbox{ with  } \,  \FF_{q^m}^n \ni b  \}$.
The deleting $\Pi _{\gamma} : X \rightarrow \Pi _{\gamma}(X) \subseteq \overline{\FF_q} ^{n - |\gamma|}$ of the components, labeled by
 $\gamma \subseteq \{ 1, \ldots , n \}$ is called the  puncturing of $X$ at $\gamma$.
The presence of a non-finite puncturing $\Pi _{\gamma} : X \rightarrow \Pi _{\gamma}(X)$ at $|\gamma| =d$ coordinates is shown to  require  all the tangent codes to $X$ to be of minimum distance $\leq d$ (Proposition \ref{MinimumDistance} (ii)).
The last part (iii) of Proposition \ref{MinimumDistance} verifies that $d(T_a (X, \FF_{q^m})) \geq d+1$ at "almost all"  the points of $X$ whenever all the puncturings $\Pi _{\gamma} : X \rightarrow \Pi _{\gamma} (X)$ at $|\gamma| = d$ variables are finite and separable (i.e., induce finite separable extensions $\overline{\FF_q}( \Pi _{\gamma} (X)) \subset \overline{\FF_q}(X)$ of the corresponding function fields).
Corollary \ref{MinimizingLength} from Proposition \ref{MinimumDistance} provides a sufficient condition for a puncturing $\Pi _i : X \rightarrow \Pi _i (X)$  of $X$ at a single variable $x_i$ to preserve the dimension $k$  and the minimum distance $d$ of a generic tangent space.
We hope that  it could serve as a reasonable base for choice of equations of affine varieties  $X$, whose puncturings realize extremal codes as their Zariski tangent spaces.
Corollary \ref{StabilizationMinDist} from subsection 3.2 constructs an "exotic"  embedding of $\overline{\FF_q}^k$ in $\overline{\FF_q}^n$, whose generic  Zariski tangent spaces are $[n,k,d]$-codes.
Any family  $\cC$ of linear codes   $\cC(a) \subset \FF_q^n$, parameterized by a subset of $\FF_q^n$ is interpolated by the union of the  Zariski tangent bundles of the irreducible components of an affine variety $X$, given by explicit equations  (cf. Proposition \ref{DestabilizationMinDist} from subsection 3.3).

Proposition \ref{ErrorSupport} from subsection 4.1  proves  that the set of the received words $w \in \FF_{q^m} ^n$  with a  $T_a (X, \FF_{q^m})$-error
$e \in \FF_{q^m} ^n$, supported by $i = \{ i_1, \ldots , i_t \} \subset \{ 1, \ldots , n \}$ coincides with  the Zariski tangent space
$T_a ( \Pi _i ^{-1} ( \overline{\Pi _i (X)}), \FF_{q^m})$ to the cylinder $ \overline{\Pi _i (X)}) \simeq \overline{\Pi _i (X)} \times \overline{\FF_q}^t$ over the Zariski closure $\overline{\Pi _i (X)}$ of the puncturing $\Pi _i (X)$ of $X$ at $i$.
That reduces the calculation of the error $e$ to solving a linear system of $t <n$ variables.
Moreover, Corollary \ref{SimultaneousDecoding}    provides an algorithm for simultaneous decoding of finite Zariski tangent spaces to $X$ of minimum distance $\geq 2t +1$.
The procedure  supplies several polynomial matrices  by the means of Groebner bases computations.
For an arbitrary received word $w \in \FF_{q^m} ^n$   with a $T_a (X, \FF_{q^m})$-error  $e \in \FF_{q^m} ^n$  of weight $\leq t$, one evaluates a part of the constructed polynomial matrices  at $a$ and calculates their products with the transposed $w^t$ of the received word  $w$, in order to recognize a $t$-tuple $i$, containing the support of $e$ and to obtain the components of $e$, labeled by $i$.
Subsection 4.2 discusses the dependence of the generic minimum distance of a  tangent code on the equations of $X$.
More precisely, Proposition \ref{LowerSemicontinuityMinDist} shows that a generic deformation of $X$ through a fixed point $a \in X$ with tangent code
 $T_a (X, \FF_{q^m})$ of minimum distance $d$ has generic minimum distance $\geq d$ of the finite Zariski tangent spaces.
 The last subsection 4.3 of section 4 is devoted to the dual codes $T_a (X, \FF_{q^m})^{\perp}$ of the finite Zariski tangent spaces.
By Lemma \ref{GradientBundleIsDualToTangentBundle}, $T_a (X, \FF_{q^m})^{\perp}$ consists of the gradients of the polynomials from $I(X, \FF_{q^m})$ at $a$.
Let $\beta \subset \{ 1, \ldots , n \}$ be a $d$-tuple of indices with complement $\neg \beta := \{ 1, \ldots , n \} \setminus \beta$.
Proposition \ref{DominantPuncturing} establishes that if "almost all" the points of $\overline{\FF_q}^d$ are from  the image of the  puncturing
 $\Pi _{\neg \beta} : X \rightarrow  \overline{\FF_q}^d$ and the absolute ideal $I(X, \overline{\FF_q})$ of $X$ contains a non-zero  polynomial in
  $x_{\beta} = \{ x_{\beta _1}, \ldots , x_{\beta _d} \}$ then the generic minimum distance of a gradient code to $X$ is $\leq d$.

Section 5 discusses tangent codes of special type.
Proposition \ref{NearMDS} from subsection 5.1 establishes that if $X$ admits a non-finite puncturing $\Pi _{\alpha} : X \rightarrow \Pi _{\alpha} (X)$ at
$|\alpha| = n-k$ variables and at least one tangent code to $X$ at a smooth point is a near MDS code then "almost all" the finite Zariski tangent spaces to $X$ are near MDS.
In a contrast, the locus of the cyclic tangent codes is shown to be exceptional in subsection 5.2.
Corollary \ref{CyclicTangentCodes} (i) provides the equations of this locus in the set   $X^{\rm smooth}$  of the smooth points of $X$.
Let $p$ be a prime integer, relatively prime to   $n \in \NN$ and $\FF_{p^m}$ be the splitting field of $t^n-1 \in \FF_p[t]$ over $\FF_p$.
We say that $T_a (X, \FF_{p^{s}})$ with $a \in X( \FF_{p^{s}})$ is a cyclic tangent code   if there is a cyclic code $C \subset \FF_{p^m} ^n$ with
$T_a (X, \FF_{p^{s}}) \otimes _{\FF_{p^s}} \FF_{p^{ms}} = C \otimes _{\FF_{p^{m}}}  \FF_{p^{ms}}$.
All cyclic codes of length $n$ over $\overline{\FF_p}$ are realized as finite Zariski tangent spaces to $X$ if for any cyclic code $C \subset \FF_{p^m} ^n$ there exists a  point $a \in X(\FF_{p^s})$, such that $C \otimes _{\FF_{p^m}} \FF_{p^{ms}} = T_a (X, \FF_{p^s}) \otimes _{\FF_{p^s}} \FF_{p^{ms}}$.
Explicit constructions provide   affine varieties, whose finite Zariski tangent spaces realize all the cyclic codes of length $n$ over $\overline{\FF_p}$.
Among them, there  is an example, whose all tangent codes are cyclic (cf. Corollary \ref{CyclicTangentCodes} (ii)).
 For any natural number $M$ we provide an  affine variety, whose finite Zariski tangent spaces realize all cyclic codes of length $n$ over $\overline{\FF_p}$ and contain at least $M$ non-cyclic tangent codes (Corollary  \ref{CyclicTangentCodes} (iii)).
 For an arbitrary finite field $\FF_q$ and an arbitrary natural number $r$, let us put $n := \frac{q^r-1}{q-1}$.
 Proposition \ref{HammingTangentCodes} from subsection 5.3  constructs an exotic embedding of $\overline{\FF_q}^{n-r}$ in $\overline{\FF_q}^n$, which has $q(q-1)^{n-r-1}$ Hamming tangent codes in the case of an odd characteristic ${\rm char} \FF_q$ or $q^{n-r}$ Hamming tangent codes for ${\rm char} \FF_q =2$.

 Section 6 draws parallels between operations on tangent codes and operations on the corresponding affine varieties.
 If  the    tangent codes to $X$ are of generic minimum distance $d$ and the differentials of a puncturing $\Pi _{\gamma} : X \rightarrow \Pi _{\gamma}(X)$ at $|\gamma| < d$ coordinates are generically  surjective   then the generic tangent vectors to $\Pi _{\gamma} (X)$ of minimum weight $d - |\gamma|$ are
shown in  Corollary \ref{PropertiesPuncturedTangentCodes} to be the puncturings of the generic tangent vectors to $X$ of weight $d$, containing $\gamma$ in its support.
For an appropriate index set $\gamma$ of cardinality $|\gamma| \leq \dim X$, Proposition \ref{ShorteningTangentCodes}  establishes the coincidence of the shortenings of the tangent codes to $X$ on $\gamma$ with the tangent codes to the shortening $X \cap V( x_i \, \vert \, i \in \gamma)$ of $X$ on $\gamma$.
Similarly,  the puncturings of the gradient codes to $X$ at $\gamma$  are exactly  the gradient codes to the shortening
 $X \cap V( x_i \, \vert \, i \in \gamma)$ of $X$ on $\gamma$.
 Another set of assumptions guarantee the coincidence of the shortenings of the  gradient codes to $X$  on $\gamma$ with the gradient codes to the puncturing
 $\Pi _{\gamma}(X)$ of $X$ at $\gamma$.
 The extension of a tangent code to $X$ is proved to be a Zariski tangent space to the extension of $X$ in subsection 6.3.
 The direct sum of finite Zariski tangent spaces to affine varieties $X,Y$ turns to be a Zariski tangent space to the direct product $X \times Y$.
 The $(u \, \vert \, u+v)$ construction of tangent codes is realized as a Zariski tangent space to a fibered product of appropriate affine varieties.

 The final, seventh section constructs a morphism $\overline{\FF_q}^n \rightarrow \overline{\FF_q}^n$, whose differentials are linear Hamming isometries of "almost all"  Zariski tangent spaces to "almost all" affine varieties $X \subset \overline{\FF_q}^n$ (Proposition \ref{ExistenceMorphismIsometricDifferentials}).
 An arbitrary family of linear Hamming isometries $\FF_q ^n \rightarrow \FF_q ^n$, parameterized by $\FF_q^n$ is interpolated by differentials of an explicit morphism $\overline{\FF_q} ^n \rightarrow \overline{\FF_q}^n$.

We conclude with an immediate consequence of results of Duursma, which relates the $\zeta$-polynomials of an appropriate family of Goppa codes with the $\zeta$-polynomial of the underlying curve.
This is one more evidence that the algebraic geometry provides a reasonable grouping of linear codes in families.

{\bf Acknowledgements:}   The authors are grateful  to  the referee of Finite Fields and their Applications for the useful remarks and suggestions.
The research is partially supported by    Contract 015/2014 and Contract  144/2015  with the Scientific Foundation of Kliment Ohridski University of Sofia.

\section{   Algebraic geometry preliminaries  }

Let $\overline{\FF_q} = \cup _{m=1} ^{\infty} \FF_{q^m}$ be the algebraic closure of the finite field $\FF_q$ with $q$ elements  and $\overline{\FF_q}^n$ be the $n$-dimensional affine space over $\overline{\FF_q}$.
An affine variety $X \subset \overline{\FF_q}^n$ is the common zero set
\[
X = V( f_1, \ldots , f_m) = \{ a \in \overline{\FF_q} ^n \ \ \vert  f_1(a) = \ldots = f_m(a) =0 \}
\]
of polynomials $f_1, \ldots , f_m \in \overline{\FF_q} [ x_1, \ldots , x_n]$.
We say that   $X \subset \overline{\FF_q}^n$ is defined over ${\mathbb F}_q$ and denote  $X / \FF_q \subset \overline{\FF_q}^n$ if the absolute ideal
\[
I(X, \overline{\FF_q}) := \{ f \in \overline{\FF_q} [ x_1, \ldots , x_n] \ \ \vert \ \  f(a) =0, \ \ \forall a \in X \}
\]
of $X$ is generated by polynomials $f_1, \ldots , f_m \in \FF_q [ x_1, \ldots , x_n]$ with coefficients from $\FF_q$.

The affine subvarieties of $X$ form  a  family of closed subsets.
The corresponding topology is referred to as the  Zariski topology on $X$.
The Zariski closure $\overline{M}$ of a subset $M \subseteq X$ is defined as the intersection of the Zariski closed subsets $Z$ of $X$, containing $M$.
It is easy to observe that $\overline{M} = VI(M, \overline{\FF_q})$ is the affine variety of the absolute ideal
$I(M, \overline{\FF_q}) \triangleleft \overline{\FF_q} [ x_1, \ldots , x_n] $ of $M$.
A subset $M \subseteq X$ is Zariski dense if its Zariski closure $\overline{M} = X$ coincides with $X$.
A property $\cP (a)$, depending on a point $a \in \overline{\FF_q}^n$ holds at a generic point of an affine variety $X \subset \overline{\FF_q} ^n$ if there is a Zariski dense subset $M \subseteq X$, such that $\cP (a)$ is true for all $a \in M$.

An affine variety $X \subset \overline{\FF_q}^n$ is irreducible if any decomposition $X = Z_1 \cup Z_2$ into a union of Zariski closed subsets $Z_j \subseteq X$ has $ Z_1 =X$ or $Z_2 =  X$.
This holds  exactly when the absolute  ideal $I(X, \overline{\FF_q}) \triangleleft \overline{\FF_q} [ x_1, \ldots , x_n]$ of $X$ is prime, i.e.
 $fg \in I(X, \overline{\FF_q})$ for $f,g \in \overline{\FF_q} [ x_1, \ldots , x_n]$  requires $f \in I(X, \overline{\FF_q})$ or $g \in I(X, \overline{\FF_q})$.
 A prominent property of the irreducible affine varieties $X$ is the Zariski density of an arbitrary non-empty Zariski open subset $U \subseteq X$.
This is equivalent to $U \cap W \neq \emptyset$ for any non-empty Zariski open subsets $U \subseteq X$ and $W \subseteq X$.

For an arbitrary irreducible affine variety $X / \FF_q \subset \overline{\FF_q}^n$, defined over $\FF_q$ and an arbitrary constant field
 $\FF_q \subseteq F \subseteq \overline{\FF_q}$, the affine coordinate ring
 \[
 F[X] := F [ x_1, \ldots , x_n] / I(X,F)
 \]
 of $X$over $F$ is an integral domain.
 The fraction field
 \[
 F(X) := \left \{  \frac{\varphi _1}{\varphi _2} \ \ \Big \vert \ \ \varphi _1, \varphi _2 \in F[X], \ \ \varphi _2 \neq 0 \in F[X] \right \}
 \]
 of $F[X]$ is called the functional field of $X$ over $F$.
 The points $a \in X$ correspond to the maximal ideals $I(a, \overline{\FF_q}) \triangleleft \overline{\FF_q} [ x_1, \ldots , x_n]$, containing
  $I(X, \overline{\FF_q})$.
  For any $F$-rational point $a \in X(F) := X \cap F^n$ the localization
  \[
  \cO _a (X,F) := \left \{ \frac{\varphi _1}{\varphi _2} \ \ \Big \vert \ \ \varphi _1, \varphi _2 \in F[X], \ \ \varphi _2 (a) \neq 0 \right \}
  \]
  of $F[X]$ at $F[X] \setminus \left( I(a,F) / I(X,F) \right)$ is the local ring of $a$ in $X$ over $F$.
 An $F$-linear derivation $D_a : \mathcal{O}_a (X,F) \rightarrow F$ at $a \in X(F)$ is an $F$-linear map, subject to Leibnitz-Newton rule
   $D_a ( \psi _1 \psi _2) = D_a ( \psi _1) \psi _2 (a) + \psi _1 (a) D_a ( \psi _2)$  for   $\forall \psi _1, \psi _2 \in \mathcal{O}_a (X,F)$.
 The $F$-linear space
 \[
 T_a (X,F) := {\rm Der} _a ( \mathcal{O}_a (X,F), F)
  \]
  of the $F$-linear derivations $D_a : \mathcal{O}_a (X,F) \rightarrow F$ at $a \in X(F)$ is called the Zariski tangent space to $X$ at $a$ over $F$.

  In order to derive a coordinate description of $T_a (X,F)$, note that any $F$-linear derivation $D_a : \cO _a (X,F) \rightarrow F$  at $a \in X(F)$ restricts to an $F$-linear derivation $D_a : F[X] \rightarrow F$ at $a$.
  According to
  \[
  D_a ( \varphi _1) = D_a \left( \frac{\varphi _1}{\varphi _2} \right) \varphi _2 (a) + \frac{\varphi _1(a)}{\varphi _2 (a)} D_a ( \varphi _2) \ \ \mbox{  for  } \ \  \forall \varphi _1, \varphi _2 \in F[X] \ \ \mbox{  with  } \ \  \varphi _2 (a) \neq 0,
  \]
  any $F$-linear derivation $D_a :F [X] \rightarrow F$ at $a  \in X(F)$ has unique extension to an $F$-linear  derivation $D_a : \cO_a (X,F) \rightarrow F$ at $a$.
In such a way, there arises an $F$-linear isomorphism
\[
T_a (X,F) \simeq {\rm Der} _a (F[X], F).
\]
Any $F$-linear  derivation $D_a : F [X] \rightarrow F$ of the affine ring $F[X]$ of $X$ at $a \in X(F)$ lifts to an $F$-linear  derivation
 $D_a : F [ x_1, \ldots , x_n] \rightarrow F$ of the polynomial ring at $a$, vanishing on the ideal $I(X, F)$ of $X$ over $F$.
If $I(X,F) = \langle f_1, \ldots , f_m \rangle _F \triangleleft F [ x_1, \ldots , x_n]$ is generated by $f_1, \ldots , f_m \in F [ x_1, \ldots , x_n]$
 then for arbitrary $g_1, \ldots , g_m \in F[ x_1, \ldots , x_n]$ one has
\[
D_a \left( \sum\limits _{i=1} ^m f_i g_i \right) = \sum\limits _{i=1} ^m D_a ( f_i) g_i (a)
\]
 and the Zariski tangent space
 \[
 T_a (X,F) \simeq \{ D_a \in {\rm Der} _a (F[ x_1, \ldots , x_n],F) \ \ \vert \ \  D_a (f_1) = \ldots =  D_a (f_m) =0 \}
 \]
 to $X$ at $a$ consists of the derivations $D_a : F [ x_1, \ldots , x_n] \rightarrow F$ at $a$, vanishing on $f_1, \ldots , f_m$.
 In such a way, the coordinate description  of  $T_a (X,F)$ reduces to the coordinate description of
 \[
 {\rm Der} _a (F[ x_1, \ldots , x_n], F) = {\rm Der} _a (F [ \overline{\FF_q} ^n], F) = T_a ( \overline{\FF_q}^n,F).
 \]
  In order to endow $T_a ( \overline{\FF_q} ^n, F)$ with a basis  over $F$,  let us note that the polynomial ring
  \[
  F [ x_1, \ldots , x_n] = F [ x_1 - a_1, \ldots , x_n - a_n] = \oplus _{i=0} ^{\infty} F [ x_1 - a_1, \ldots , x_n - a_n] ^{(i)}
  \]
  has a natural grading by the $F$-linear spaces $F [ x_1 - a_1, \ldots , x_n - a_n] ^{(i)}$ of the
    homogeneous polynomials  on $x_1 - a_1, \ldots , x_n - a_n$ of degree $i \geq 0$.
  An arbitrary $F$-linear derivation $D_a : F [ x_1, \ldots , x_n] \rightarrow F$ at $a \in F^n$ vanishes on
  $F [ x_1 - a_1, \ldots , x_n - a_n] ^{(0)} = F$  and on the homogeneous polynomials $F [ x_1 - a_1, \ldots , x_n - a_n] ^{(i)}$ of degree $i \geq 2$.
  Thus, $D_a$ is uniquely determined by its restriction to the $n$-dimensional space
  \[
  F [ x_1 - a_1, \ldots , x_n - a_n] ^{(1)} = {\rm Span} _F ( x_1 - a_1, \ldots , x_n - a_n)
  \]
  over $F$.
  That enables to  identify  the Zariski tangent space
  \[
  T_a ( \overline{\FF_q} ^n, F) \simeq {\rm Der} _a (F [ x_1, \ldots , x_n], F) \simeq {\rm Hom} _F ( F[ x_1 - a_1, \ldots , x_n - a_n] ^{(1)}, F)
  \]
  to $\overline{\FF_q} ^n$ at $a$ with the space of the $F$-linear  functionals on $F [ x_1 - a_1, \ldots , x_n - a_n] ^{(1)}$.
  Note that $x_1 - a_1, \ldots , x_n - a_n$ is a basis of $F [ x_1 - a_1, \ldots , x_n - a_n] ^{(1)}$ over $F$ and denote by
  $\left( \frac{\partial }{\partial x_1} \right) _a, \ldots , \left( \frac{\partial}{\partial x_n} \right) _a$ its dual basis.
In other words, $\left( \frac{\partial}{\partial x_j} \right) _a \in T_a ( \overline{\FF_q}^n, F)$ are the uniquely determined $F$-linear functionals on
 $F [ x_1 - a_1, \ldots , x_n - a_n] ^{(1)}$ with
 \[
 \left( \frac{\partial}{\partial x_j} \right) _a (x_i - a_i) = \delta _{ij} =
 \begin{cases}
 1  &  \text{ for $ 1 \leq i = j \leq n$,  }  \\
 0  &  \text{ for $1 \leq i \neq j \leq n$.   }
 \end{cases}
 \]
 As a result, the Zariski tangent space to $X$ at $a \in X(F)$ over $F$ can be described as
 \[
 T_a (X,F)  = \left \{ v = \sum\limits _{j=1} ^n v_j \left( \frac{\partial}{\partial x_j} \right) _a \ \ \Big \vert \ \
  \sum\limits _{j=1} ^n v_j \frac{\partial f_i}{\partial x_j} (a) =0, \ \ 1 \leq   i \leq m \right \}
\]
for any generating set $f_1, \ldots , f_m$ of $I(X,F) = \langle f_1, \ldots , f_m \rangle _F$.
If
\[
\frac{\partial  f}{\partial  x} = \frac{\partial ( f_1, \ldots , f_m)}{\partial ( x_1, \ldots , x_n)} =
\left(  \begin{array}{ccc}
\frac{\partial f_1}{\partial x_1}   &  \ldots  &  \frac{\partial f_1}{\partial x_n}  \\
\mbox{   }   &  \mbox{   }   &  \mbox{   }  \\
\frac{\partial f_m}{\partial x_1}  &  \ldots  &  \frac{\partial f_m}{\partial x_n}
\end{array}   \right)
\]
is the Jacobian matrix of $f_1, \ldots , f_m$ and $F = \FF_{q^s}$ is a finite field then $T_a(X, \FF_{q^s}) \subset \FF_{q^s} ^n$ is the $\FF_{q^s}$-linear code with parity check matrix $\frac{\partial f}{\partial x} (a) \in M_{m \times n} ( \FF_{q^s})$.

 Let   $X \subset \overline{\FF _q} ^n$  be  an irreducible affine variety with
$I(X, \overline{\FF _q}) = \langle f_1, \ldots , f_m \rangle _{\overline{\FF _q}}$, $f_1, \ldots , f_m \in \FF _q [ x_1, \ldots , x_n]$.
For  an arbitrary point $a = (a_1, \ldots , a_n) \in X$, let us denote by $\delta (a)$ the minimal natural number,  for which
 $a \in X( \FF _{q^{\delta(a)}}) := X \cap \FF _{q^{\delta (a)}}$ is an $\FF_{q^{\delta (a)}}$-rational point of $X$.
We say that $\FF _{q^{\delta (a)}}$ is the definition field of $a$ over $\FF_q$.
If $\FF _{q^{\delta (a_i)}}$ are the definition fields of $a_i \in \overline{\FF _q}$ over $\FF_q$  then $\delta (a)$ is the least common multiple of
 $\delta (a_1), \ldots , \delta (a_n)$.
 Note that $a \in X( \FF _{q^m})$ is an $\FF _{q^m}$-rational point if and only if $\delta (a)$ divides $m$.
 For all $l \in \NN$ the Zariski tangent spaces $T_a (X, \FF _{q^{l \delta (a)}})$ have one and a same parity check matrix
 \[
 \frac{\partial f}{\partial x}  (a) := \frac{\partial (f_1, \ldots , f_m)}{\partial (x_1, \ldots , x_n)} (a) \in M_{m \times n} ( \FF _{q^{\delta (a)}})
  \]
  and are uniquely determined by $T_a (X, \FF _{q^{\delta (a)}})$ as the  tensor products
  \[
  T_a (X, \FF _{q^{l \delta (a)}} ) = T_a (X, \FF _{q^{\delta (a)}}) \otimes _{\FF _{q^{\delta (a)}}} \FF ^{l \delta (a)}.
  \]
 In particular, $T_a (X, \FF _{q^{\delta(a)}})$ and $T_a (X, \FF _{q^{l \delta (a)}})$  have equal   dimension
  $n - {\rm rk} _{\FF _{q^{\delta (a)}}} \frac{\partial f}{\partial x} (a)$ over $\FF _{q^{\delta (a)}}$, respectively, over $\FF _{q^{l \delta (a)}}$.
 The minimum distances of $T_a (X, \FF _{q^{\delta (a)}})$ and $T_a (X, \FF _{q^{l \delta (a)}})$ coincide, as far as they equal  the minimal natural number $d$
 for which $\frac{\partial f}{\partial x}(a)$ has $d$ linearly dependent columns.
From now on, we write $\dim T_a (X, \FF_{q^{\delta (a)}})$ for the dimension of $T_a (X, \FF_{q^{\delta (a)}})$ over $\FF_{q^{\delta (a)}}$.

Let $X = X_1 \cup \ldots \cup X_s$ be a reducible affine variety and $a \in X_{i_1} \cap \ldots \cap X_{i_r}$ with $1 \leq i_1 < \ldots < i_r \leq s$ be a common point of $r \geq 2$ irreducible components $X_{i_j}$ of $X$.
In general, $X_{i_j}$ have different Zariski tangent spaces at $a$ and the union
$T_a (X_{i_1}, \FF_{q^{\delta (a)}}) \cup \ldots \cup T_a (X_{i_r}, \FF_{q^{\delta (a)}})$ is not an $\FF_{q^{\delta (a)}}$-linear subspace of
 $\FF_{q^{\delta (a)}} ^n$.
 That is why we define the Zariski tangent space $T_a (X, \FF_{q^{\delta (a)}})$ to a reducible variety $X \subset \overline{\FF_q}^n$ at a point $a \in X$ as the $\FF_{q^{\delta (A)}}$-linear code of length $n$ with parity check matrix
 \[
 \frac{\partial f}{\partial x} (a) = \frac{\partial ( f_1, \ldots , f_m)}{\partial (x_1, \ldots , x_n)} (a) \in M_{m \times n} ( \FF_{q^{\delta (a)}}),
 \]
  for some generators $f_1, \ldots , f_m \in \FF_q [ x_1, \ldots , x_n]$ of the absolute ideal
   $I(X, \overline{\FF_q}) = \langle f_1, \ldots , f_m \rangle _{\overline{\FF_q}}$ of $X$.

For an arbitrary finite set $S$ and an arbitrary natural number $t \leq |S|$ let us denote by $\Sigma _t (S)$ the set of the $t$-tuples with entries from $S$.
In the case of $S = \{ 1, \ldots , n \}$, we write  $\Sigma _t(1, \ldots , n)$ instead of $\Sigma _t( \{ 1, \ldots , n \})$.

For  a  systematic study of the Zariski tangent spaces  to an affine variety see \cite{Sh}, \cite{BCGB}, \cite{OG}, \cite{R} or \cite{H}.

\section{Minimum distance of a tangent code}

\subsection{Typical minimum distance of a tangent code}


The minimum distance of a linear code $C \subset \FF_q^n$ is related to the kernels of the puncturings of $C$.
Note that the puncturing
\[
\Pi _{\gamma} :  T_a (X, \FF_{q^{\delta (a)}}) \longrightarrow \Pi _{\gamma}T_a (X, \FF_{q^{\delta (a)}}) \subseteq  \FF_{q^{\delta (a)}} ^{n-|\gamma|}
\]
of a finite Zariski tangent space to $X$ coincides with the differential
\[
\Pi _{\gamma} = ( d \Pi _{\gamma}) _a : T_a (X, \FF_{q^{\delta (a)}}) \longrightarrow T_{\Pi _{\gamma}(a)}  ( \Pi _{\gamma} (X), \FF_{q^{\delta (a)}})
\]
of the puncturing
\[
\Pi _{\gamma} : X \longrightarrow \Pi _{\gamma} (X)
\]
 of the corresponding irreducible affine variety $X$.
That allows to study the minimum distance of $T_a (X, \FF_{q^{\delta (a)}})$ by the global properties of $\Pi _{\gamma}: X \rightarrow \Pi _{\gamma} (X)$.

In order to formulate precisely, let us recall that a finite morphism $\varphi : X \rightarrow \varphi (X)$ is called separable if the finite extension
$\overline{\FF_q} ( \varphi (X)) \subseteq \overline{\FF_q}(X)$ of the corresponding function fields is separable.
This  means that the minimal polynomial $g_{\xi} (t) \in \overline{\FF_q} ( \varphi (X)) [t]$ of an arbitrary element $\xi  \in \overline{\FF_q}(X)$ over $\overline{\FF_q} ( \varphi (X))$ has no multiple roots.

A morphism $\varphi : X \rightarrow \varphi (X)$ is etale at  some point  $a \in X$, if the differential
$(d \varphi) _a : T_a (X, \overline{\FF_q}) \rightarrow T_{\varphi (a)} (\varphi (X), \overline{\FF_q}) $
of $\varphi$ at $a$ is an $\overline{\FF_q}$-linear embedding.
Let us denote by ${\rm Etale}(\varphi)$ the set of the points $a \in X$, at which the morphism  $\varphi : X \rightarrow \varphi (X)$ is etale.

 \begin{lemma}      \label{FiniteSeparableAndEtalePuncturings}
 Let $X / \FF_q \subset \overline{\FF_q}^n$ be an irreducible affine variety, defined over $\FF_q$ and
 $\Pi _{\gamma} : X \rightarrow \Pi _{\gamma} (X) \subseteq \overline{\FF_q} ^{n - |\gamma|}$ be a puncturing at a subset
 $\gamma \subseteq \{ 1, \ldots , n \}$.

(i) The puncturing $\Pi _{\gamma} : X \rightarrow \Pi _{\gamma} (X)$ is etale at $a \in X$ if and only if the Zariski tangent space
 $T_a (X, \FF_{q^{\delta (a)}})$  does not contain a non-zero word $v(a)$ with support ${\rm Supp} (v(a)) \subseteq \gamma$.

 (ii) If the set $ {\rm Etale} ( \Pi _{\gamma}) \cap \Pi _{\gamma} ^{-1} (\Pi _{\gamma} (X)^{\rm smooth}) \neq \emptyset$ is non-empty then  the puncturing
$\Pi _{\gamma} : X \rightarrow \Pi _{\gamma} (X)$ is a finite morphism,
 ${\rm Etale} ( \Pi _{\gamma}) \cap \Pi _{\gamma} ^{-1} (\Pi _{\gamma} (X)^{\rm smooth}) \subseteq X^{\rm smooth}$ and  the  differentials
\[
(d \Pi _{\gamma}) _a : T_a (X, \FF_{q^{\delta (a)}})  \longrightarrow T_{\Pi _{\gamma} (a)} ( \Pi _{\gamma} (X), \FF_{q^{\delta (a)}})
\]
are  surjective at all  the points $a \in {\rm Etale} ( \Pi _{\gamma}) \cap \Pi _{\gamma} ^{-1} (\Pi _{\gamma} (X)^{\rm smooth})$.

(iii) If the puncturing $\Pi _{\gamma} : X \rightarrow \Pi _{\gamma} (X)$ is a finite separable morphism then
${\rm Etale} ( \Pi _{\gamma}) \cap \Pi _{\gamma} ^{-1} ( \Pi _{\gamma} (X) ^{\rm smooth})$ is a   Zariski dense subset of $X$.
\end{lemma}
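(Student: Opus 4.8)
The plan is to treat the three parts in order, each building on the previous. Part (i) is essentially an unwinding of definitions. Under the identification of $T_a(X,\FF_{q^{\delta(a)}})$ with a subspace of $\FF_{q^{\delta(a)}}^n$ recalled just before the lemma, the differential $(d\Pi_\gamma)_a$ is literally the coordinate deletion $\sum_{j=1}^n v_j(\partial/\partial x_j)_a\mapsto\sum_{j\notin\gamma}v_j(\partial/\partial x_j)_{\Pi_\gamma(a)}$, whose image lies automatically in $T_{\Pi_\gamma(a)}(\Pi_\gamma(X))$ and whose kernel is the set of $v\in T_a(X,\FF_{q^{\delta(a)}})$ with $v_j=0$ for all $j\notin\gamma$, i.e. with $\mathrm{Supp}(v)\subseteq\gamma$. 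So $(d\Pi_\gamma)_a$ is injective precisely when no such nonzero $v$ exists. It remains to reconcile the ground fields: the paper's definition of ``etale at $a$'' refers to the $\overline{\FF_q}$-differential, but since $T_a(X,\overline{\FF_q})=T_a(X,\FF_{q^{\delta(a)}})\otimes_{\FF_{q^{\delta(a)}}}\overline{\FF_q}$ and the parity-check matrix $\frac{\partial f}{\partial x}(a)$ is unchanged, a nonzero word supported in $\gamma$ exists over $\overline{\FF_q}$ iff it exists over $\FF_{q^{\delta(a)}}$; this proves (i).

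For (ii), fix a point $a$ in $E:=\mathrm{Etale}(\Pi_\gamma)\cap\Pi_\gamma^{-1}(\Pi_\gamma(X)^{\mathrm{smooth}})$. Three inequalities are at hand: $\dim\Pi_\gamma(X)\le\dim X$ since $\Pi_\gamma(X)$ is the image of $X$; $\dim T_{\Pi_\gamma(a)}(\Pi_\gamma(X))=\dim\Pi_\gamma(X)$ since $\Pi_\gamma(a)$ is a smooth point; and $\dim X\le\dim T_a(X)$ always. Injectivity of $(d\Pi_\gamma)_a$ gives $\dim T_a(X)\le\dim T_{\Pi_\gamma(a)}(\Pi_\gamma(X))$, so the chain $\dim X\le\dim T_a(X)\le\dim\Pi_\gamma(X)\le\dim X$ collapses to equalities. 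Hence $\dim T_a(X)=\dim X$, so $a\in X^{\mathrm{smooth}}$, which is the claimed inclusion $E\subseteq X^{\mathrm{smooth}}$; moreover $(d\Pi_\gamma)_a$ is an injection of equidimensional spaces, hence surjective, and this transfers to the map over $\FF_{q^{\delta(a)}}$ because the relevant ranks do not change under field extension. The finiteness of $\Pi_\gamma:X\to\Pi_\gamma(X)$ is the delicate ingredient: I would derive it from the equality $\dim X=\dim\Pi_\gamma(X)$ just obtained together with the linear-projection shape of $\Pi_\gamma$ — concretely, by producing from generators of $I(X,\overline{\FF_q})$ an integral dependence over $\overline{\FF_q}[\Pi_\gamma(X)]$ for each deleted coordinate $x_i$, $i\in\gamma$ (equivalently, by ruling out asymptotic directions of $X$ inside $\mathrm{Span}(e_i:i\in\gamma)$), after which $\Pi_\gamma(X)$ is closed and $\Pi_\gamma$ is finite in the usual sense. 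Making this last step watertight, rather than settling for ``generically finite'', is the main obstacle I anticipate.

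For (iii), suppose $\Pi_\gamma:X\to Y:=\Pi_\gamma(X)$ is a finite separable morphism. Being finite, $\Pi_\gamma$ is surjective and closed, so $Y$ is a closed irreducible affine variety, $\dim Y=\dim X$, and $Y^{\mathrm{smooth}}$ is a dense open subset of $Y$ (the singular locus is proper closed, $\overline{\FF_q}$ being perfect); consequently $\Pi_\gamma^{-1}(Y^{\mathrm{smooth}})$ is a nonempty open, hence dense, subset of the irreducible variety $X$. Separability of $\overline{\FF_q}(Y)\subseteq\overline{\FF_q}(X)$ forces the module of relative differentials $\Omega_{X/Y}$ to be torsion, so its support $R$ is a proper closed subset of $X$; at every point $a$ of the dense open set $U:=X^{\mathrm{smooth}}\cap\Pi_\gamma^{-1}(Y^{\mathrm{smooth}})\setminus R$ the relative differentials vanish, so $(d\Pi_\gamma)_a$ is injective, hence an isomorphism by equidimensionality, whence $a\in\mathrm{Etale}(\Pi_\gamma)$. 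Therefore $E\supseteq U$ is dense in $X$, which proves (iii). The only inputs beyond the bookkeeping are the density of the smooth locus over the perfect field $\overline{\FF_q}$ and the fact that $\Omega_{X/Y}$ is a torsion module for a finite separable extension, both standard.
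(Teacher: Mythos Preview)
Your arguments for (i) and (ii) coincide with the paper's: the same kernel computation for (i) and the same chain of dimension inequalities
\[
\dim X\le\dim T_a(X,\FF_{q^{\delta(a)}})\le\dim T_{\Pi_\gamma(a)}(\Pi_\gamma(X),\FF_{q^{\delta(a)}})=\dim\Pi_\gamma(X)\le\dim X
\]
for (ii). Your worry about upgrading ``generically finite'' to ``finite'' is well placed if ``finite'' means an integral ring extension, but in this paper ``finite morphism'' is used throughout to mean that the function-field extension $\overline{\FF_q}(\Pi_\gamma(X))\subseteq\overline{\FF_q}(X)$ is finite (see the definition of ``separable'' immediately before the lemma, and the proof of Proposition~\ref{MinimumDistance}(ii), where ``not finite'' is identified with $\dim\Pi_\gamma(X)<\dim X$). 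With that reading, the equality $\dim X=\dim\Pi_\gamma(X)$ is exactly what is required, and no integral-dependence argument is needed.

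For (iii) your route via the module of relative differentials $\Omega_{X/Y}$ is correct and genuinely different from the paper's. The paper argues concretely: for each $i\in\gamma$ it clears denominators in the minimal polynomial of $\overline{x_i}$ over $\overline{\FF_q}(\Pi_\gamma(X))$ to obtain a polynomial $f_i(x_i,x_{\neg\gamma})\in I(X,\overline{\FF_q})$ of minimal $x_i$-degree, sets $W:=X\setminus V\bigl(\prod_{i\in\gamma}\partial f_i/\partial x_i\bigr)$, observes that on $W$ the square Jacobian block $\partial f_\gamma/\partial x_\gamma$ is diagonal with nonzero entries (so $W\subseteq\mathrm{Etale}(\Pi_\gamma)$), and then uses separability together with the minimality of $\deg_{x_i}f_i$ and primality of $I(X,\overline{\FF_q})$ to rule out $W=\emptyset$. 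Your argument is cleaner and more conceptual; the paper's buys an \emph{explicit} Zariski-open set $W$ cut out by concrete polynomial inequations, which matches the paper's running emphasis on explicit loci and feeds into later constructions.
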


 \begin{proof}

(i) The kernel of the differential
$(d \Pi _{\gamma}) _a : T_a (X, \FF_{q^{\delta (a)}})  \rightarrow T_{\Pi _{\gamma}(a)} ( \Pi _{\gamma} (X), \FF_{q^{\delta (a)}})$
consists of the tangent vectors $v(a) \in T_a (X, \FF_{q^{\delta (a)}})$ with support ${\rm Supp} (v(a)) \subseteq \gamma$.

(ii) Note that $\dim T_a (X, \FF_{q^{\delta (a)}}) \geq \dim X =k$ at any point $a \in X$.
If    $a \in {\rm Etale} ( \Pi _{\gamma}) \cap \Pi _{\gamma} ^{-1} ( \Pi _{\gamma} (X) ^{\rm smooth})$   then
$(d \Pi _{\gamma}) _a : T_a (X, \FF_{q^{\delta (a)}})  \rightarrow T_{\Pi _{\gamma} (a)} ( X, \FF_{q^{\delta (a)}})$ is injective  and
$\dim T_{\Pi _{\gamma} (a)} ( X, \FF_{q^{\delta (a)}}) = \dim \Pi _{\gamma}(X)$.
Combining with $\dim \Pi _{\gamma} (X) \leq \dim X$, one obtains
\begin{align*}
\dim X \leq \dim T_a (X,  \FF_{q^{\delta (a)}} ) = \dim  (d \Pi _{\gamma}) _a T_a (X, \FF_{q^{\delta (a)}}) \leq
\dim T_{\Pi _{\gamma}(a)} ( \Pi _{\gamma}(X), \FF_{q^{\delta (a)}}) =  \\
 \dim \Pi _{\gamma} (X)  \leq \dim X.
\end{align*}
Therefore   $(d \Pi _{\gamma}) _a T_a (X, \FF_{q^{\delta (a)}}) = T_{\Pi _{\gamma}(a)} ( \Pi _{\gamma}(X), \FF_{q^{\delta (a)}}) $,
$\dim X =   \dim T_a (X, \overline{\FF_q})$ and the dimensions   $\dim \Pi _{\gamma} (X)  = \dim X$ coincide.
In other words, the differential  $(d \Pi _{\gamma}) _a : T_a (X, \FF_{q^{\delta (a)}})  \rightarrow T_{\Pi _{\gamma} (a)} ( \Pi _{\gamma}(X), \FF_{q^{\delta (a)}})$ is surjective,  $a \in X^{\rm smooth}$ is a smooth point   and   $\Pi _{\gamma} : X \rightarrow \Pi _{\gamma} (X)$ is a finite morphism.

(iii)  In order to show that ${\rm Etale} ( \Pi _{\gamma})$  contains a non-empty Zariski open subset of $X$, let us consider the minimal polynomials
\[
g_i(t, \overline{x_{\neg \gamma}} ) = \sum\limits _{s=0} ^{N_i} \frac{\varphi _{i,s} ( \overline{x_{\neg \gamma}})}{  \psi _{i,s} ( \overline{x_{\neg \gamma}})} t^s \in \overline{\FF_q} ( \overline{x_{\neg \gamma}} ) [t]
\]
 of $\overline{x_i} = x_i + I(X, \overline{\FF_q}) \in \overline{\FF_q}(X)$ over
 $\overline{\FF_q}( \Pi _{\gamma} (X)) = \overline{\FF_q}( \overline{\neg \gamma})$ with
  $\varphi _{i,s} ( \overline{x_{\neg \gamma}}), \psi _{i,s} ( \overline{x_{\neg \gamma}}) \in \overline{\FF_q}  [ \overline{x_{\neg \gamma}} ] :=
   \overline{\FF_q} [ x_{\neg \gamma}] / I(X, \overline{\FF_q})$.
   Denote by $\psi_i ( x_{\neg \gamma}) \in \overline{\FF_q}[ x_{\neg \gamma}] $ the least common multiple  of  the polynomials
    $\psi _{i,s} ( x_{\neg \gamma}) \in \overline{\FF_q}[x _{\neg \gamma}]$, $0 \leq s \leq N_i$.
 The  polynomial $\psi _i ( x_{\neg \gamma})$ is well defined up to a factor from $\overline{\FF_q} [ x_{\neg \gamma}] ^* = \overline{\FF_q}^*$.
 The product
 \[
 f_i(x_i, x_{\neg \gamma}) := \psi _i ( x_{\neg \gamma})  g_i (x_i, x_{\neg \gamma}) \in \overline{\FF_q} [ x_i, x_{\neg \gamma}] \cap I(X, \overline{\FF_q})
 \]
 is a polynomial of minimal degree with respect to $x_i$  from $I(X, \overline{\FF_q})$.
We claim that the Zariski open subset
\[
W :=  X \setminus V \left( \prod\limits _{i \in \gamma} \frac{\partial f_i}{\partial x_i} \right)
\]
 is non-empty and contained in ${\rm Etale} ( \Pi _{\gamma})$.
More precisely,  $f_{\gamma_1}, \ldots , f_{\gamma _d} \in I(X, \overline{\FF_q})$  for $\gamma = \{ \gamma _1, \ldots , \gamma _d \}$ implies that
 $T_a (X, \overline{\FF_q})$ is contained in the solution set $\cS$ of the homogeneous  linear system with coefficient matrix
 $H_{\gamma} = \frac{\partial f_{\gamma}}{\partial x} (a)$.
 If we arrange the variables $x = \{ x_1, \ldots , x_n \}$ in two groups - $x_{\gamma}$ and $x_{\neg \gamma}$, then
  $\frac{\partial f_{\gamma}}{\partial x_{\gamma}} (a)$ is a diagonal matrix with non-zero entries for any  $a \in W$.
As a result, the components   $v_{\gamma} = v_{\gamma} ( v_{\neg \gamma})$ of  $v = (v_{\gamma}, v_{\neg \gamma}) \in \cS$, labeled by $\gamma$ can be expressed
 as homogeneous linear functions  of $v_{\neg \gamma}$ and the puncturing $\Pi _{\gamma} : \cS \rightarrow \Pi _{\gamma} (\cS)$ is injective.
 Then  the  restriction
  $\Pi _{\gamma} = (d \Pi _{\gamma}) _a : T_a (X, \overline{\FF_q}) \rightarrow T_{\Pi _{\gamma}(a)} ( \Pi _{\gamma}(X), \overline{\FF_q})$
 of $\Pi _{\gamma} \vert _ {\cS}$ is injective and $\Pi _{\gamma} : X \rightarrow \Pi _{\gamma} (X)$ is etale at any point
  $a \in W$.
  That justifies ${\rm Etale} ( \Pi _{\gamma}) \supseteq W$.

The assumption  $W = \emptyset$  implies that
\[
X \subseteq V \left( \prod\limits _{i \in \gamma} \frac{\partial f_i}{\partial x_i} \right).
\]
As a result, $\prod\limits _{i \in \gamma} \frac{\partial f_i}{\partial x_i} \in I(X, \overline{\FF_q})$.
The absolute ideal $I(X, \overline{\FF_q}) \triangleleft \overline{\FF_q} [ x_1, \ldots , x_n]$ of the irreducible affine variety
 $X \subseteq \overline{\FF_q}^n$ is prime and there follows $\frac{\partial f_i}{\partial x_i} \in I(X, \overline{\FF_q})$ for some $i  \in \gamma$.
Since $f_i \in I(X, \overline{\FF_q}) \setminus \{ 0 \}$ is of minimum degree with respect to $x_i$, one concludes that
$\frac{\partial f_i}{\partial x_i} \equiv 0 \in \overline{\FF_q} [ x_1, \ldots , x_n]$.
However,  $\frac{\partial f_i}{\partial t} (t, x_{\neg \gamma}) = \psi _i (x_{\neg \gamma}) \frac{\partial g_i}{\partial  t} (t, x_{\neg \gamma})$
implies that $\frac{\partial g_i}{\partial t} (t, x_{\neg \gamma}) \equiv 0$ and $g_i(t, \overline{x_{\neg \gamma}}) \in \overline{\FF_q} ( \overline{x_{\neg \gamma}}) [t]$ has a multiple root.
That contradicts the separability of the finite  extension $\overline{\FF_q} ( \overline{x_{\neg \gamma}}) \subseteq \overline{\FF_q}(X)$ and    proves that
  $ W \neq \emptyset$.

The non-empty Zariski open subset  $\Pi _{\gamma} (X) ^{\rm smooth} \subseteq \Pi _{\gamma} (X)$ pulls back to a non-empty Zariski open subset
$ W_{\gamma} :=\Pi _{\gamma} ^{-1} ( \Pi _{\gamma} (X) ^{\rm smooth}) \subseteq X$.
The  intersection $W \cap W_{\gamma}$ is a non-empty Zariski open and, therefore, Zariski dense subset of the irreducible affine variety $X$.
Thus, the Zariski closures $X \supseteq \overline{{\rm Etale} ( \Pi _{\gamma}) \cap W_{\gamma}} \supseteq \overline{W \cap W_{\gamma}} = X$ coincide with $X$ and ${\rm Etale} ( \Pi _{\gamma}) \cap \Pi _{\gamma} ^{-1} ( \Pi _{\gamma} (X) ^{\rm smooth})$ is Zariski dense in $X$.

 \end{proof}

 Note that Lemma \ref{FiniteSeparableAndEtalePuncturings} (ii) establishes a sort of a generalization of the Implicit Function Theorem, according to which  any
 puncturing $\Pi _{\gamma} : X \rightarrow \Pi _{\gamma} (X)$  with an injective differential at some point
 $ a \in \Pi _{\gamma} ^{-1} ( \Pi _{\gamma} (X))^{\rm smooth}$ is a  finite morphism.


 For an arbitrary irreducible affine variety $X / \FF _q \subset \overline{\FF _q} ^n$, defined over $\FF _q$, let us denote by
 \[
 X^{(\leq d)} := \{ a \in X \ \ \vert \ \  d( T_a (X, \FF _{q^{\delta (a)}}) \leq d \}
 \]
 the set of the points $a \in X$, at which the finite Zariski tangent spaces are of minimum distance $\leq d$.
 Similarly, put
 \[
 X^{(d)} := \{ a  \in X \ \ \vert \ \  d(T_a (X, \FF _{q^{\delta (a)}}) =d \} \ \ \mbox{  and   }
 \]
 \[
 X^{( \geq d)} := \{ a \in X \ \ \vert \ \  d(X, \FF _{q^{\delta (a)}}) \geq d \}.
 \]

The next proposition establishes that  if an irreducible affine variety $X$ admits a  tangent code $T_a (X, \FF_{q^{\delta (a)}})$ of minimum distance $\geq d$  then "almost all" finite Zariski tangent spaces to $X$ are of minimum distance $\geq d$.
If there is  a non-finite puncturing  $\Pi _{\gamma} : X \rightarrow \Pi _{\gamma} (X)$ at $|\gamma| =d$ variables, we show that all the tangent codes to $X$ are   of minimum distance $\leq d$.
When all the  puncturings $\Pi _{\gamma} : X \rightarrow \Pi _{\gamma} (X)$ at $|\gamma| =d$ variables are finite and separable, the minimum distance of a
finite Zariski tangent space to $X$ is bounded below by $d+1$ at "almost all"  the points of $X$.

\begin{proposition}    \label{MinimumDistance}
Let  $X \subset \overline{\FF _q} ^n$  be  an irreducible  affine variety of dimension $k \in \NN$  with
 $I(X, \overline{\FF _q}) = \langle f_1, \ldots , f_m \rangle _{\overline{\FF _q}}$ for some $f_1, \ldots , f_m \in \FF _q [ x_1, \ldots , x_n]$.

(i) For an arbitrary natural number $d \leq n-k+1$ the locus
\[
X^{( \leq d)} = V \left( \prod\limits _{i \in \Sigma _d (1, \ldots , n)} \det \frac{\partial f _{\varphi (i)}}{\partial x_i} \ \ \Big \vert \ \ \forall \varphi : \Sigma _d (1, \ldots , n) \rightarrow \Sigma _d (1, \ldots , m) \right)
\]
is a Zariski closed subset of $X$ and $X^{( \leq 1)} \subseteq X^{( \leq 2)} \subseteq \ldots \subseteq X^{( \leq n-k+1)} = X$.

(ii)   If there is a non-finite puncturing $\Pi _{\gamma} : X \rightarrow \Pi _{\gamma} (X)$ at $|\gamma| = d$ coordinates then $X = X^{( \leq d)}$.
 Moreover, in the case of  $X^{(d)} \neq \emptyset$ the  locus $X^{(d)} = X^{(\geq d)}$ is a Zariski dense, Zariski open  subset of $X$.

(iii) If for any $\gamma \in \Sigma _d (1, \ldots , n)$ the puncturing $\Pi _{\gamma} : X \rightarrow \Pi _{\gamma} (X)$ is finite and separable then
the subset $X^{( \geq d+1)} \subseteq X$ is Zariski dense.
\end{proposition}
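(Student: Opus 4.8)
The plan is to reduce every statement to the ranks of the column submatrices of the Jacobian $\frac{\partial f}{\partial x}(a)$ --- which is a parity check matrix of $T_a(X,\FF_{q^{\delta(a)}})$ --- and then to translate rank conditions into geometric properties of the puncturings $\Pi_{\gamma}$ via Lemma~\ref{FiniteSeparableAndEtalePuncturings}. For (i) I would start from the fact that $d(T_a(X,\FF_{q^{\delta(a)}}))$ is the least $d$ for which $\frac{\partial f}{\partial x}(a)$ has $d$ linearly dependent columns, equivalently the least $d$ for which some $m\times d$ submatrix $\frac{\partial f}{\partial x_i}(a)$, $i\in\Sigma_d(1,\ldots,n)$, has rank $<d$, i.e. all of its $d\times d$ minors $\det\frac{\partial f_j}{\partial x_i}(a)$, $j\in\Sigma_d(1,\ldots,m)$, vanish. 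Hence $a\in X\setminus X^{(\le d)}$ holds exactly when for every $i\in\Sigma_d(1,\ldots,n)$ there is some $j\in\Sigma_d(1,\ldots,m)$ with $\det\frac{\partial f_j}{\partial x_i}(a)\ne 0$; choosing such a $j=\varphi(i)$ for each $i$ produces a map $\varphi:\Sigma_d(1,\ldots,n)\to\Sigma_d(1,\ldots,m)$ with $\prod_i\det\frac{\partial f_{\varphi(i)}}{\partial x_i}(a)\ne 0$, and conversely any such $\varphi$ exhibits $a\notin X^{(\le d)}$. Thus $X^{(\le d)}$ is the common zero locus in $X$ of the polynomials $\prod_i\det\frac{\partial f_{\varphi(i)}}{\partial x_i}$ over all $\varphi$; since $f_1,\ldots,f_m\in\FF_q[x_1,\ldots,x_n]$, these are honest polynomials, so $X^{(\le d)}$ is Zariski closed. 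The chain $X^{(\le 1)}\subseteq X^{(\le 2)}\subseteq\cdots$ is immediate, and $X^{(\le n-k+1)}=X$ is the Singleton bound applied to $\dim T_a(X,\FF_{q^{\delta(a)}})\ge\dim X=k$, valid at every $a\in X$.

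For (ii), Lemma~\ref{FiniteSeparableAndEtalePuncturings}(i) tells us that $\Pi_{\gamma}$ fails to be etale at $a$ precisely when $T_a(X,\FF_{q^{\delta(a)}})$ contains a nonzero word supported in $\gamma$, i.e. when the column submatrix $\frac{\partial f}{\partial x_{\gamma}}(a)$ has rank $<|\gamma|=d$; consequently $X\setminus{\rm Etale}(\Pi_{\gamma})$ is the Zariski closed subset of $X$ cut out by the $d\times d$ minors of $\frac{\partial f}{\partial x_{\gamma}}$, so ${\rm Etale}(\Pi_{\gamma})$ is Zariski open in $X$. The pullback $U:=\Pi_{\gamma}^{-1}(\Pi_{\gamma}(X)^{\rm smooth})$ is a non-empty Zariski open, hence (as $X$ is irreducible) Zariski dense, subset of $X$. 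If $\Pi_{\gamma}$ is not finite, then the contrapositive of Lemma~\ref{FiniteSeparableAndEtalePuncturings}(ii) forces ${\rm Etale}(\Pi_{\gamma})\cap U=\emptyset$, i.e. $U\subseteq X\setminus{\rm Etale}(\Pi_{\gamma})$; since the right-hand side is closed and $U$ is dense, $X\setminus{\rm Etale}(\Pi_{\gamma})=X$, that is ${\rm Etale}(\Pi_{\gamma})=\emptyset$. Therefore every $T_a(X,\FF_{q^{\delta(a)}})$ with $a\in X$ contains a nonzero word of weight $\le d$, which says $X=X^{(\le d)}$. For the last assertion: if $X^{(d)}\ne\emptyset$ then $d\le n-k+1$ by Singleton, so by (i) the set $X^{(\le d-1)}$ is Zariski closed and $X^{(\ge d)}=X\setminus X^{(\le d-1)}$ is Zariski open; since $X^{(\le d)}=X$ we get $X^{(d)}=X^{(\le d)}\cap X^{(\ge d)}=X^{(\ge d)}$, and a non-empty Zariski open subset of the irreducible $X$ is Zariski dense.

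For (iii), every $\Pi_{\gamma}$ with $\gamma\in\Sigma_d(1,\ldots,n)$ is finite and separable, so Lemma~\ref{FiniteSeparableAndEtalePuncturings}(iii) makes ${\rm Etale}(\Pi_{\gamma})$ Zariski dense in $X$, and by the minor description from the previous paragraph it is also Zariski open, hence a non-empty Zariski open subset of $X$. Since a nonzero word of $T_a(X,\FF_{q^{\delta(a)}})$ has weight $\le d$ if and only if its support lies in some $\gamma\in\Sigma_d(1,\ldots,n)$, we obtain $X^{(\ge d+1)}=\bigcap_{\gamma\in\Sigma_d(1,\ldots,n)}{\rm Etale}(\Pi_{\gamma})$, a finite intersection of non-empty Zariski open subsets of the irreducible variety $X$, hence non-empty, Zariski open and Zariski dense. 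The routine part is (i); the one genuinely delicate step is the upgrade in (ii) from ``$\Pi_{\gamma}$ is not finite'' to ``$\Pi_{\gamma}$ is nowhere etale'', which is exactly where the Zariski closedness of the non-etale locus (coming from the linear algebra in Lemma~\ref{FiniteSeparableAndEtalePuncturings}(i)), the irreducibility of $X$, and the contrapositive of Lemma~\ref{FiniteSeparableAndEtalePuncturings}(ii) have to be combined.
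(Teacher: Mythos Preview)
Your proof is correct and follows essentially the same route as the paper: translating the minimum-distance condition into rank conditions on column submatrices of the Jacobian, and then invoking Lemma~\ref{FiniteSeparableAndEtalePuncturings} for the geometric input. Part (i) and the second half of (ii) match the paper almost verbatim.

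There is one small but genuine streamlining worth noting. In (ii) the paper argues directly, by a dimension count at points of $\Pi_{\gamma}^{-1}(\Pi_{\gamma}(X)^{\rm smooth})$, that the differential is non-injective there, obtains $W_{\gamma}\subseteq X^{(\le d)}$, and then closes up using part (i). You instead package that dimension count inside the contrapositive of Lemma~\ref{FiniteSeparableAndEtalePuncturings}(ii) and close up using the Zariski closedness of $X\setminus{\rm Etale}(\Pi_{\gamma})$ (via the minor description); this yields the slightly stronger intermediate fact ${\rm Etale}(\Pi_{\gamma})=\emptyset$. In (iii) the paper works with $U_{\gamma}={\rm Etale}(\Pi_{\gamma})\cap\Pi_{\gamma}^{-1}(\Pi_{\gamma}(X)^{\rm smooth})$, shows only the inclusion $\bigcap_{\gamma}U_{\gamma}\subseteq X^{(\ge d+1)}$, and then argues density of $\bigcap_{\gamma}U_{\gamma}$ via ideal containments. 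Your observation that in fact $X^{(\ge d+1)}=\bigcap_{\gamma}{\rm Etale}(\Pi_{\gamma})$ exactly, combined with the openness of each ${\rm Etale}(\Pi_{\gamma})$, gives a cleaner one-line conclusion and, as a bonus, shows $X^{(\ge d+1)}$ is itself non-empty Zariski open (not merely dense).
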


\begin{proof}

(i)  For an arbitrary natural number $d \leq n-k$, observe that $a \in X^{ (>d)} = X^{( \geq d+1)}$ exactly when any $d$-tuple of columns of
 $\frac{\partial f}{\partial x} (a)$ is linearly independent.
That amounts to ${\rm rk} \frac{\partial f}{\partial x_i} (a) =
{\rm rk} \frac{\partial (f_1, \ldots , f_m)}{\partial (x_{i _1}, \ldots , x_{i_d})} (a) =d$
 for all $i \in \Sigma _d (1, \ldots , n)$.
By $k = \dim X \geq n-m$ there follows $m \geq  n-k \geq d$ and ${\rm rk} \frac{\partial f}{\partial x_i} (a) =d$ is equivalent to
 $\det \frac{\partial f_{\gamma}}{\partial x_i} (a) \neq 0$ for some $\gamma \in \Sigma _d (1, \ldots , m)$.
Thus,
 \begin{equation}   \label{LocusMinDistAtLeastD}
 \begin{split}
 X^{( \geq d+1)} = \cap _{i \in \Sigma _d (1, \ldots , n)} \left[ \cup _{\gamma \in \Sigma _d (1, \ldots , m)}
  \left( X \setminus V \left( \det \frac{\partial f _{\gamma}}{\partial x_i} \right) \right) \right] =  \\
\cap _{i \in \Sigma _d (1, \ldots , n)} \left[ X \setminus
 V \left( \det \frac{\partial f_{\gamma}}{\partial x_i} \ \ \Big \vert \ \    \gamma \in \Sigma _d (1, \ldots , m) \right) \right] =  \\
X \setminus \cup _{i \in \Sigma _d (1, \ldots , n)}
 V \left( \det \frac{\partial f_{\gamma}}{\partial x_i} \ \ \Big \vert \ \   \gamma \in \Sigma _d (1, \ldots , m) \right) =  \\
X \setminus V \left( \prod\limits _{i \in \Sigma _d (1, \ldots , n)} \det  \frac{\partial f _{\varphi (i)}}{\partial x_i} \ \ \Big \vert \ \
  \varphi : \Sigma _d (1, \ldots , n) \rightarrow \Sigma _d (1, \ldots , m) \right).
 \end{split}
\end{equation}
The last equality follows from $\cup _{i \in \Sigma _d (1, \ldots , n)} V(S_i) = V \left(  \prod\limits _{i \in \Sigma _d (1, \ldots , n)} S_i \right)$ for
\[
\prod\limits _{i \in \Sigma _d (1, \ldots , n)} S_i := \left  \{ \prod\limits _{i \in \Sigma _d (1, \ldots , n)} g_i \ \ \vert \ \  g_i \in S_i  \right \}, \ \
S_i := \left \{ \det \frac{\partial f_{\gamma}}{\partial x_i} \ \ \Big \vert \ \  \gamma \in \Sigma _d (1, \ldots , m) \right \}.
\]
The complement
\begin{align*}
X^{( \leq d)} = X \setminus X^{( \geq d+1)} =  \\
X \cap V \left( \prod\limits _{i \in \Sigma _d (1, \ldots , n)} \det \frac{\partial f_{\varphi (i)}}{\partial x_i} \ \ \Big \vert \ \
 \varphi : \Sigma _d (1, \ldots , n) \rightarrow \Sigma _d (1, \ldots m) \right)
\end{align*}
is a Zariski closed subset of $X$.

For an arbitrary point $a \in X$, note that $\dim   T_a (X, \FF _{q ^{\delta (a)}}) \geq \dim X = k$.
  Singleton bound on the minimum distance requires
\[
d (T_a(X, \FF _{q^{\delta (a)}} )) \leq n+1 - \dim T_a (X, \FF _{q^{\delta (a)}}) \leq n+1 -k.
\]
Thus, $X = X^{( \leq n-k+1)}$ is also a Zariski closed subset of $X$.

(ii)  Note that the non-empty  Zariski open subset $\Pi _{\gamma} (X) ^{\rm smooth}$ of $\Pi _{\gamma} (X)$ pulls back to a non-empty Zariski open subset
  $W_{\gamma} := \Pi _{\gamma} ^{-1} ( \Pi _{\gamma} (X) ^{\rm smooth})$ of $X$.
 We claim that at any $a \in W_{\gamma}$ the Zariski tangent space $T_a (X, \FF _{q^{\delta (a)}})$ contains a non-zero word, supported by $\gamma$.
 To this end, it suffices to establish that the differential
 \[
 (d \Pi _{\gamma} ) _a : T_a (X, \FF _{q^{\delta (a)}}) \longrightarrow T_{\Pi _{\gamma} (a)} ( \Pi _{\gamma} (X), \FF _{q^{\delta (a)}})
 \]
 of $\Pi _{\gamma}$ at $a$ is non-injective.
 Assume the opposite, i.e., that $\ker (d \Pi _{\gamma}) _a =0$.
 Then
 \[
 k \leq \dim  T_a (X, \FF _{q^{\delta (a)}}) \leq  \dim   T_{\Pi _{\gamma} (a) } ( \Pi _{\gamma} (X), \FF _{q^{\delta (a)}} ) = \dim \Pi _{\gamma} (X).
 \]
The morphism $\Pi _{\gamma} : X \rightarrow \Pi _{\gamma} (X)$ is not finite, so that $\dim \Pi _{\gamma} (X) < \dim X = k$.
That leads to a contradiction and implies that $\ker (d \Pi _{\gamma} ) _a \neq 0$ for $\forall a \in W_{\gamma} $.
As a result, $W_{\gamma} \subseteq X^{(\leq d)}$.
According to (i), $X^{( \leq d)}$ is a Zariski closed subset of $X$, so that the inclusion
 $X = \overline{W_{\gamma}} \subseteq \overline{X^{( \leq d)}} = X^{( \leq d)}$ of the corresponding Zariski closures is tantamount to $X = X^{( \leq d)}$.
Now, $X^{(d)} = X^{( \leq d)} \cap X^{( \geq d)} = X \cap X^{( \geq d)} = X^{( \geq d)}$ is a Zariski open subset of $X$, whereas Zariski dense for
$X^{(d)} \neq \emptyset$.

(iii)  Let $U_{\gamma} := {\rm Etale} ( \Pi _{\gamma}) \cap \Pi _{\gamma} ^{-1} ( \Pi _{\gamma} (X) ^{\rm smooth})$ and note that
 $U := \cap _{\gamma \in \Sigma _d (1, \ldots , n)} U_{\gamma}$ is contained in $X^{( \geq d+1)}$.
 Indeed, the presence of $a \in U \setminus X^{( \geq d+1)} = U \cap X^{( \leq d)}$ implies the existence of a tangent vector
  $v(a) \in T_a (X, \FF_{q^{\delta (a)}})$ of weight $\leq d$.
The support of $v(a)$ is contained in some $\gamma \in \Sigma _d (1, \ldots , n)$ and $0^n \neq v(a) \in \ker \Pi _{\gamma} = \ker (d \Pi _{\gamma}) _a$.
That contradicts $a \in {\rm Etale} ( \Pi _{\gamma})$ and justifies that $U \subseteq X^{( \geq d+1)}$.

We claim that $U$ is Zariski dense in $X$.
To this end, observe that $U \subseteq U_{\gamma}$ implies $I(U_{\gamma}, \overline{\FF_q}) \subseteq I(U, \overline{\FF_q})$ for
$\forall \gamma \in \Sigma _d (1, \ldots , n)$.
Therefore
\[
I(U, \overline{\FF_q}) \supseteq \sum\limits _{\gamma \in \Sigma _d (1, \ldots , n)} I(U_{\gamma}, \overline{\FF_q})
\]
 and the Zariski closure
\begin{align*}
\overline{U} = VI(U, \overline{\FF_q}) \subseteq V \left( \sum\limits _{\gamma \in \Sigma _d (1, \ldots , n)} I(U_{\gamma}, \overline{\FF_q}) \right) =  \\
\cap _{\gamma \in \Sigma _d (1, \ldots , n)} VI (U_{\gamma}, \overline{\FF_q}) = \cap _{\gamma \in \Sigma _d (1, \ldots , n)} \overline{U_{\gamma}} =
\cap _{\gamma \in \Sigma _d (1, \ldots , n)} X = X.
\end{align*}
Now $X = \overline{U} \subseteq \overline{X^{( \geq d+1)}} \subseteq X$ reveals the Zariski density of $X^{( \geq d+1)}$ in $X$.

\end{proof}

The proof of Proposition \ref{MinimumDistance} (iii) reveals that  for any point $a \in X ^{(d)}$  there exists a $d$-tuple of indices
$\gamma \in \Sigma _d (1, \ldots , n)$, such that the puncturing $\Pi _{\gamma} : X \rightarrow \Pi _{\gamma} (X)$   is not etale at $a$.

In the light  of Proposition \ref{MinimumDistance},  the finite Zariski tangent spaces  are expected to suit for construction of extremal codes.

\begin{corollary}    \label{MinimizingLength}
Let $X / \FF_q \subset \overline{\FF_q}^n$ be an irreducible $k$-dimensional affine variety, defined over $\FF_q$.
 Suppose that there exist $i \in \{ 1, \ldots , n \}$ and $\beta \subseteq \{ 1, \ldots , n \} \setminus \{ i \}$ with $|\beta| = d$, such that
  $\Pi _{\beta} : X \rightarrow \Pi _{\beta} (X)$ is a non-finite morphism and for any $\delta \in \Sigma _d (1, \ldots , n)$ with $i \in \delta$ the puncturing $\Pi _{\delta} : X \rightarrow \Pi _{\delta} (X)$ is a finite separable morphism.
  Then $\Pi _i : X \rightarrow \Pi _i (X)$ is a finite separable morphism, the generic tangent codes to $X$ are $[n,k,d]$-codes  and the generic finite Zariski tangent spaces  to $\Pi _i (X)$ are $[n-1,k,d]$-codes.
\end{corollary}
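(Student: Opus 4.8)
The plan is to put $Y:=\Pi_i(X)$ and to reduce both minimum–distance statements to Proposition \ref{MinimumDistance}, after one elementary remark on towers. Whenever a puncturing factors as $\Pi_\gamma=\Pi_{\gamma\setminus\gamma'}\circ\Pi_{\gamma'}$ with $\gamma'\subseteq\gamma$, the coordinate rings sit in a tower $\overline{\FF_q}[\Pi_\gamma(X)]\subseteq\overline{\FF_q}[\Pi_{\gamma'}(X)]\subseteq\overline{\FF_q}[X]$ and the function fields in the tower $\overline{\FF_q}(\Pi_\gamma(X))\subseteq\overline{\FF_q}(\Pi_{\gamma'}(X))\subseteq\overline{\FF_q}(X)$. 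If $\Pi_\gamma$ is finite separable, then $\overline{\FF_q}[X]$ is a finite $\overline{\FF_q}[\Pi_\gamma(X)]$-module, hence also a finite module over the larger ring $\overline{\FF_q}[\Pi_{\gamma'}(X)]$; and $\overline{\FF_q}[\Pi_{\gamma'}(X)]$, being a submodule of the finite module $\overline{\FF_q}[X]$ over the Noetherian ring $\overline{\FF_q}[\Pi_\gamma(X)]$, is itself a finite $\overline{\FF_q}[\Pi_\gamma(X)]$-module; and the separability of $\overline{\FF_q}(\Pi_\gamma(X))\subseteq\overline{\FF_q}(X)$ descends to both steps of the field tower. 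Thus both $\Pi_{\gamma'}:X\to\Pi_{\gamma'}(X)$ and $\Pi_{\gamma\setminus\gamma'}:\Pi_{\gamma'}(X)\to\Pi_\gamma(X)$ are finite separable. Taking $\gamma=\delta$ to be any $d$-tuple with $i\in\delta$ (one exists, since $|\beta|=d$ with $i\notin\beta$ forces $d\le n-1$) and $\gamma'=\{i\}$, we obtain that $\Pi_i:X\to Y$ is a finite separable morphism, which is the first assertion. Consequently $Y$, a finite image of an irreducible variety defined over $\FF_q$, is a closed irreducible affine variety defined over $\FF_q$ with $\dim Y=\dim X=k$; likewise $\dim\Pi_\delta(X)=k$ since $\Pi_\delta$ is finite, and $\Pi_\delta(X)\subseteq\overline{\FF_q}^{\,n-d}$ then gives $d\le n-k$, so all exponents occurring below lie in the admissible range of Proposition \ref{MinimumDistance}.

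\textbf{Tangent codes to $X$.} Every $(d-1)$-tuple $\gamma'\in\Sigma_{d-1}(1,\ldots,n)$ is contained in some $d$-tuple $\delta$ with $i\in\delta$ (adjoin $i$ to $\gamma'$ if $i\notin\gamma'$, otherwise adjoin any other index), so by the tower remark $\Pi_{\gamma'}:X\to\Pi_{\gamma'}(X)$ is finite separable for every $\gamma'\in\Sigma_{d-1}(1,\ldots,n)$. Proposition \ref{MinimumDistance}(iii), applied with the parameter $d-1$ in place of $d$, then gives that $X^{(\geq d)}$ is Zariski dense in $X$, in particular non-empty. Since $\Pi_\beta:X\to\Pi_\beta(X)$ is non-finite with $|\beta|=d$, Proposition \ref{MinimumDistance}(ii) gives $X=X^{(\leq d)}$, whence $X^{(d)}=X^{(\leq d)}\cap X^{(\geq d)}=X^{(\geq d)}\neq\emptyset$; by the same part of Proposition \ref{MinimumDistance}, $X^{(d)}=X^{(\geq d)}$ is a Zariski dense open subset of $X$. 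Intersecting it with the dense open locus $X^{\rm smooth}$, on which $\dim T_a(X,\FF_{q^{\delta(a)}})=k$, we get a Zariski dense subset of $X$ whose finite Zariski tangent spaces are $[n,k,d]$-codes. (When $d=1$ this degenerates, but then $X^{(\geq1)}=X$ and the conclusion is immediate.)

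\textbf{Tangent codes to $Y=\Pi_i(X)$.} First, $\Pi_\beta:Y\to\Pi_\beta(Y)$ is non-finite: coordinate deletions commute, so $\Pi_\beta(Y)$ is obtained from $\Pi_\beta(X)$ by deleting the $i$-th coordinate, whence $\dim\Pi_\beta(Y)\le\dim\Pi_\beta(X)<k=\dim Y$, the strict inequality because $\Pi_\beta:X\to\Pi_\beta(X)$ is non-finite. Second, for any $\gamma'\in\Sigma_{d-1}(1,\ldots,n)$ with $i\notin\gamma'$, the puncturing $\Pi_{\gamma'\cup\{i\}}:X\to\Pi_{\gamma'\cup\{i\}}(X)$ is finite separable by hypothesis (it deletes the $d$ coordinates of $\gamma'\cup\{i\}$, one of which is $i$) and factors as $X\xrightarrow{\Pi_i}Y\xrightarrow{\Pi_{\gamma'}}\Pi_{\gamma'}(Y)$ with $\Pi_{\gamma'}(Y)=\Pi_{\gamma'\cup\{i\}}(X)$; the tower remark, applied to $\overline{\FF_q}[\Pi_{\gamma'\cup\{i\}}(X)]\subseteq\overline{\FF_q}[Y]\subseteq\overline{\FF_q}[X]$, shows $\Pi_{\gamma'}:Y\to\Pi_{\gamma'}(Y)$ is finite separable. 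So all $(d-1)$-puncturings of $Y$ are finite separable: Proposition \ref{MinimumDistance}(iii) for $Y$ (parameter $d-1$) gives $Y^{(\geq d)}$ Zariski dense, while Proposition \ref{MinimumDistance}(ii) for $Y$ (using the non-finite $\Pi_\beta$ at $|\beta|=d$ coordinates) gives $Y=Y^{(\leq d)}$ and then $Y^{(d)}=Y^{(\geq d)}$ Zariski dense open. Intersecting with $Y^{\rm smooth}$, on which $\dim T_b(Y,\FF_{q^{\delta(b)}})=k$ and the length is $n-1$, we conclude that the generic finite Zariski tangent spaces to $\Pi_i(X)$ are $[n-1,k,d]$-codes.

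\textbf{Main obstacles.} The two points requiring care are: (a) that $\Pi_\beta$ remains non-finite after passing from $X$ to $Y=\Pi_i(X)$, which is the dimension bookkeeping above and relies on $\Pi_i$ being finite so that $\dim Y=\dim X$; and (b) that finiteness and separability descend along both factors of a composed puncturing, which is what converts the hypothesis "all $d$-puncturings through $i$ are finite separable" into "all $(d-1)$-puncturings of $X$, and of $Y$, are finite separable" — exactly the input demanded by Proposition \ref{MinimumDistance}(iii). Everything else is a direct appeal to Proposition \ref{MinimumDistance}.
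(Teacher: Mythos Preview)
Your proof is correct and follows essentially the same approach as the paper: reduce both the $X$ and the $\Pi_i(X)$ claims to Proposition~\ref{MinimumDistance}(ii),(iii) via factorizations of puncturings, using that a finite separable puncturing $\Pi_\gamma$ forces both intermediate puncturings in a tower $X\to\Pi_{\gamma'}(X)\to\Pi_\gamma(X)$ to be finite separable. Your treatment is in fact more complete than the paper's own proof, which focuses on the $\Pi_i(X)$ conclusion and does not spell out the verification that the generic tangent codes to $X$ itself are $[n,k,d]$-codes; you supply this by observing that every $(d-1)$-tuple sits inside some $d$-tuple through $i$ and invoking the tower remark, which is the right argument.
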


\begin{proof}

For any $\gamma \subseteq \{ 1, \ldots , n \} \setminus \{ i \}$ with $|\gamma| = d-1$  note that
 $\Pi _{\gamma \cup \{ i \}} : X \rightarrow  \Pi _{\gamma \cup \{ i \}} (X)$ is a finite and separable  puncturing.
 The factorization
 \[
 \begin{diagram}
 \node{X}  \arrow{s,l}{\Pi _{\gamma \cup \{ i \}}}  \arrow{e,t}{\Pi _i}  \node{\Pi _i (X)}  \arrow{sw,r}{\Pi _{\gamma}}  \\
    \node{\Pi _{\gamma \cup \{ i \}} (X)}   \node{\mbox{   }}
  \end{diagram}
 \]
implies the inclusions $\overline{\FF_q} ( \Pi _{\gamma \cup \{ i \}} (X)) \subseteq \overline{\FF_q} ( \Pi _i (X))   \subseteq \overline{\FF_q} (X)$
of the corresponding function fields.
The extensions $\overline{\FF_q} ( \Pi _{\gamma \cup \{ i \}} (X)) \subseteq \overline{\FF_q} ( \Pi _i (X))$ and
$\overline{\FF_q} ( \Pi _i (X))   \subseteq \overline{\FF_q} (X)$ are finite and separable, as far as
$\overline{\FF_q} ( \Pi _{\gamma \cup \{ i \}} (X))  \subseteq \overline{\FF_q}(X)$ is finite and separable.
Thus, $\Pi _i : X \rightarrow \Pi _i (X)$ is a finite separable morphism and the generic tangent codes to $\Pi _i (X)$ are of minimum distance $\geq d$ by Proposition \ref{MinimumDistance} (iii).

We claim that $\Pi _{\beta} : \Pi _i (X) \rightarrow \Pi _{\beta \cup \{ i \}} (X)$ is a non-finite puncturing at $|\beta| = d$ variables.
To this end, consider the commutative diagram
\[
\begin{diagram}
\node{X}   \arrow{s,l}{\Pi _{\beta}}  \arrow{e,t}{\Pi _i}  \arrow{se,l}{\Pi _{\beta \cup \{ i \}}} \node{\Pi _i (X)}   \arrow{s,r}{\Pi _{\beta}}  \\
\node{\Pi _{\beta} (X)}  \arrow{e,t}{\Pi _i}   \node{\Pi _{\beta \cup \{ i \}} (X)}
\end{diagram}.
\]
By assumption, $\Pi _{\beta} : X \rightarrow  \Pi _{\beta} (X)$ is a non-finite puncturing at $d$ variables, so that
$\Pi _{\beta \cup \{ i \}} : \Pi _i \Pi _{\beta} : X \rightarrow \Pi _{\beta \cup \{ i \}}  (X)$  is a non-finite puncturing at $d+1$ variables.
The presence of a factorization $\Pi _{\beta \cup \{ i \}} = \Pi _{\beta} \Pi _i : X \rightarrow  \Pi _{\beta \cup \{ i \}} (X)$ through a finite morphism
 $\Pi _i : X \rightarrow \Pi _i (X)$ suffices for $\Pi _{\beta} : \Pi _i (X) \rightarrow \Pi _{\beta \cup \{ i \}} (X)$ to be a non-finite morphism.
 Now, Proposition \ref{MinimumDistance}  (ii) applies to provide an upper bound $d$ on the minimum distance of all the  finite Zariski tangent spaces to
 $\Pi _i (X)$.
 As a result, the generic tangent codes to $\Pi _i (X)$ are of length $n-1$, dimension $k$ and minimum distance $d$.

\end{proof}

If $X / \FF_q \subset \overline{\FF_q}^n$ is an irreducible affine variety, defined over $\FF_q$ with a non-finite puncturing
 $\Pi _{\beta} : X \rightarrow \Pi _{\beta} (X)$ at $|\beta| =d$ coordinates and finite separable puncturings $\Pi _{\gamma} : X \rightarrow \Pi _{\gamma} (X)$ for $\forall \gamma \in \Sigma _{d-1} (1, \ldots , n)$ then the finite Zariski tangent spaces to $X$ are of minimum distance $d$ at "almost all" the points of $X$ (i.e., on a Zariski dense subset of $X$).
For fixed length $n$ and minimum distance $d$ one looks for tangent codes of maximal dimension.
These occur at the  singular points  $a \in X$, at which the parity check matrix of $T_a (X, \FF_{q^{\delta (a)}})$ is of minimum rank.

If the length $n$ and the dimension $k$ are fixed then one looks for a $k$-dimensional irreducible affine variety $X / \FF_q \subset \overline{\FF_q}^n$, which is in "most general" position with respect to the coordinate axes.
Namely, we seek such equations of $X$ which maximize the minimal integer $d \in \NN$, for which there is a non-finite puncturing
$\Pi _{\beta} : X \rightarrow \Pi _{\beta} (X)$ of $X$ at $|\beta| =d$ variables.

\subsection{ Reproducing the dimension and the minimum distance of a   code }

The Zariski tangent bundles of the coordinate $k$-dimensional affine subspace  $X_0 :=V(x_{k+1}, \ldots , x_n) \subset \overline{\FF _q} ^n$  are constant, i.e., $T_a(X, F) = F^k \times 0^{n-k}$ for any point $a \in X$ and any field  $\FF _q ^{\delta (a)} \subseteq F \leq \overline{\FF_q}$.
   The next corollary provides    different embeddings $X$ of $\overline{\FF _q}^k$  in $\overline{\FF _q} ^n$ with non-constant Zariski tangent bundles $T(X,F)$.
For any natural number $d \leq n-k$  and any $\sigma \in \Sigma _d (1, \ldots , n)$ we construct  such $X \simeq \overline{\FF _q}^k$  that
 $T_a (X, \FF _{q^{\delta (a)}})$ contains a non-zero word $v(a)$ of  ${\rm Supp} (v(a)) \subseteq \sigma$ for any $a \in X$.
 For an arbitrary $\FF_q$-linear  $[n,k,d]$-code $C$ we provide explicit equations of $\overline{\FF_q}^k \simeq X / \FF_q \subset \overline{\FF_q}^n$, such that "almost all"  tangent codes to $X$ are $[n,k,d]$-codes

\begin{corollary}     \label{StabilizationMinDist}
Let $C$ be an $[n,k,d]$-code and $\sigma \in \Sigma _d (1, \ldots , n)$ be a support of a non-zero word $c \in C \setminus \{ 0^n \}$.
Then there is an irreducible affine variety  $X/ \FF _q \subset \overline{\FF_q} ^n$ through $0^n$ with $T_{0^n} (X, \FF_q) = C$,  isomorphic to $\overline{\FF_q} ^k$ and   such that $T_a (X, \FF_{q^{\delta (a)}}) \simeq \FF _{q^{\delta (a)}} ^k$ contains a word
 $v(a) \in T_a (X, \FF _{q^{\delta (a)}}) \setminus \{ 0^n \}$ of  ${\rm Supp} ( v(a)) \subseteq \sigma$ for all $a \in X$.

In particular, $X = X^{( \leq d)}$ and    $\emptyset \neq X^{(d)} \cap X^{\rm smooth}  = X^{(\geq d)} \cap X^{\rm smooth} $ is such  a non-empty, Zariski open, Zariski dense subset of $X$   that the Zariski tangent spaces $T_a (X, \FF_{q^{\delta (a)}})$ are $[n,k,d]$-codes for all $a \in X^{(d)} \cap X^{\rm smooth}$.
\end{corollary}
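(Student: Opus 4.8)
The goal is to construct an explicit affine variety $X/\FF_q$ which is isomorphic to $\overline{\FF_q}^k$, passes through $0^n$ with $T_{0^n}(X,\FF_q) = C$, and whose every tangent space carries a non-zero word supported in $\sigma$. The natural device is to realize $X$ as the graph of a polynomial map. After a linear change of coordinates (which one is free to perform, since the hypotheses and conclusion are invariant under $\FF_q$-linear isometries), I would assume that $C$ is in systematic form, so that $C = \{ (u, uA) \mid u \in \FF_q^k \}$ for some $A \in M_{k \times (n-k)}(\FF_q)$; then the obvious candidate is
\[
X := \{ (x_1, \ldots, x_k, g_{k+1}(x_1,\ldots,x_k), \ldots, g_n(x_1,\ldots,x_k)) \mid x_1,\ldots,x_k \in \overline{\FF_q} \},
\]
the graph of a polynomial map $g = (g_{k+1},\ldots,g_n) : \overline{\FF_q}^k \to \overline{\FF_q}^{n-k}$. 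Such an $X$ is automatically irreducible (it is the image of $\overline{\FF_q}^k$ under the closed embedding $x \mapsto (x, g(x))$, so $\overline{\FF_q}[X] \cong \overline{\FF_q}[x_1,\ldots,x_k]$), defined over $\FF_q$ as soon as the $g_j$ have coefficients in $\FF_q$, and isomorphic to $\overline{\FF_q}^k$. Its ideal is $I(X,\overline{\FF_q}) = \langle x_j - g_j(x_1,\ldots,x_k) : k+1 \le j \le n \rangle$, and at any point $a = (a_1,\ldots,a_k,g(a_1,\ldots,a_k))$ the Jacobian of these $n-k$ generators has the block form $(\,-J \mid I_{n-k}\,)$ where $J = (\partial g_j/\partial x_i)(a)$, so that
\[
T_a(X, \FF_{q^{\delta(a)}}) = \{ (v_1,\ldots,v_k,\ v J) \mid v = (v_1,\ldots,v_k) \in \FF_{q^{\delta(a)}}^k \} \cong \FF_{q^{\delta(a)}}^k,
\]
a code of dimension exactly $k$ everywhere.

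The substantive requirement is the choice of $g$ so that two things hold simultaneously: (a) at $a = 0^n$ the tangent code equals $C$, i.e. $J(0^n) = A$, which forces the linear part of each $g_j$ to be prescribed; and (b) at every point $a \in X$ the row space of $(\,I_k \mid J(a)\,)$ contains a non-zero vector supported in $\sigma$. For (b) I would exploit the hypothesis that $\sigma$ supports the codeword $c \in C$: write $c = (c_1,\ldots,c_k, c'A)$ where $c' = (c_1,\ldots,c_k) \in \FF_q^k$ is the "information part" of $c$ (note $c' \ne 0$, else $c = 0$), so that for $a = 0^n$ the vector $c' \cdot (\,I_k \mid J(0^n)\,) = c$ is supported in $\sigma$. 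To propagate this to all of $X$ one wants $c' J(a)$ to remain supported on $\sigma \setminus \{1,\ldots,k\}$ for every $a$; equivalently, $c' \cdot (\partial g_j/\partial x_i)(a)$ (summed over $i$) $= \sum_i c_i (\partial g_j/\partial x_i)(a) = (\partial/\partial_{c'})(g_j)(a)$ should vanish identically in $a$ for every $j \in \{k+1,\ldots,n\} \setminus \sigma$, and the combination $c'$ applied to the $\sigma$-components must not be identically zero. The clean way to force the directional derivative $\partial_{c'} g_j \equiv 0$ is to arrange that each $g_j$ (for $j \notin \sigma$) depends only on the coordinates of $x$ in directions orthogonal to $c'$: pick an invertible $\FF_q$-linear change of the variables $x_1,\ldots,x_k$ sending $c'$ to the first basis vector, write $g_j$ in the new variables $y_1,\ldots,y_k$, and demand $g_j \in \FF_q[y_2,\ldots,y_k]$ for $j \notin \sigma$. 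Since we also have freedom in the $g_j$ for $j \in \sigma$, we can install the prescribed linear parts (condition (a)) and, if necessary, one genuinely nonlinear term to make the tangent bundle non-constant, while keeping $\partial_{c'} g_j$ for $j \in \sigma$ not identically zero.

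The final sentence of the statement then follows formally from what has been assembled together with Proposition \ref{MinimumDistance}. Indeed, since every tangent space $T_a(X,\FF_{q^{\delta(a)}})$ contains a non-zero word supported in $\sigma$ with $|\sigma| = d$, we get $d(T_a(X,\FF_{q^{\delta(a)}})) \le d$ at every $a$, i.e. $X = X^{(\le d)}$. On the other hand, the puncturing $\Pi_\sigma : X \to \Pi_\sigma(X)$ has non-trivial kernel of its differential at every point, so it is non-finite (by Lemma \ref{FiniteSeparableAndEtalePuncturings}(ii), a puncturing with injective differential at a smooth point of the image is finite; the contrapositive, applied using that every $T_a$ contains a $\sigma$-supported word, shows non-finiteness), whence Proposition \ref{MinimumDistance}(ii) gives that $X^{(d)} = X^{(\ge d)}$ is Zariski open and dense in $X$, provided $X^{(d)} \ne \emptyset$. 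Non-emptiness is witnessed by $0^n$: there $T_{0^n}(X,\FF_q) = C$ has minimum distance exactly $d$. Intersecting with the non-empty Zariski open $X^{\rm smooth}$ (non-empty since $X$ is irreducible, indeed smooth everywhere as it is isomorphic to affine space) yields the claimed non-empty, Zariski open, Zariski dense set on which the tangent codes are $[n,k,d]$-codes. The main obstacle I anticipate is bookkeeping: verifying that the two coordinate changes (systematic form of $C$ in the ambient coordinates, and rotation of the source variables taking $c'$ to a basis vector) can be carried out over $\FF_q$ consistently and that the resulting explicit $g_j$'s genuinely have the prescribed $1$-jet at $0$ while making the tangent bundle non-constant; the geometry itself is routine once the graph construction is in place.
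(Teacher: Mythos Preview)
Your approach is essentially the paper's: both realize $X$ as the graph of a polynomial map $\overline{\FF_q}^k \to \overline{\FF_q}^{n-k}$ (the paper phrases this as $X = V(f_1,\ldots,f_{n-k})$ with $f_i = x_{\alpha_i} + \sum_{s \notin \alpha} f_{i,s}(x_s)$, which is exactly the graph over the $\neg\alpha$-coordinates), prescribe the linear parts so that $T_{0^n}(X,\FF_q) = C$, and tune the higher-order terms so that a fixed non-zero word supported in $\sigma$ lies in every tangent space. The paper works on the parity-check side, forcing a linear relation among the $\sigma$-columns of the Jacobian by setting $f_{i,\sigma_\nu}(t) = \sum_{s\in\sigma\setminus\{\sigma_\nu\}} \lambda_s f_{i,s}(t)$; you work on the generator side, forcing the directional derivative $\partial_{c'} g_j$ to vanish for $j\notin\sigma$. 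These are dual bookkeeping choices for the same construction. (Your proposed change of source variables is unnecessary: since $c_j = (c'A)_j = 0$ for $j\notin\sigma$, the prescribed linear part of $g_j$ already has $\partial_{c'}$-derivative zero, so simply taking $g_j$ linear for $j\notin\sigma$ works.)

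There is one genuine slip in your final paragraph. You invoke ``the contrapositive of Lemma~\ref{FiniteSeparableAndEtalePuncturings}(ii)'' to conclude that $\Pi_\sigma$ is non-finite from the fact that its differential is nowhere injective. That lemma says: if $\Pi_\gamma$ has injective differential at some point over the smooth locus of the image, then $\Pi_\gamma$ is finite. Its contrapositive is the implication $[\text{not finite}] \Rightarrow [\text{nowhere \'etale over the smooth image}]$, which points the wrong way for your purpose. The implication you actually use, $[\text{finite}] \Rightarrow [\text{somewhere \'etale}]$, is false in general (Frobenius is finite but nowhere \'etale). Fortunately you do not need non-finiteness of $\Pi_\sigma$ at all: you have already shown $X = X^{(\le d)}$ directly, so $X^{(d)} = X^{(\le d)} \cap X^{(\ge d)} = X^{(\ge d)}$, which is Zariski open by Proposition~\ref{MinimumDistance}(i); it contains $0^n$, hence is non-empty and therefore dense in the irreducible $X$. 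Since your graph $X$ is everywhere smooth, $X^{(d)} \cap X^{\rm smooth} = X^{(d)}$ and the conclusion follows.
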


\begin{proof}

Let $H \in M _{(n-k) \times n} ( {\mathbb F}_q)$ be a parity check matrix of  $C$  with columns $H_i \in M_{(n-k) \times 1} ( \FF_q)$.
Since ${\rm rk} H = n-k$, there exists $\alpha \in \Sigma _{n-k} (1, \ldots , n)$ with
$\det ( H_{\alpha}) = \det (H_{\alpha _1} \ldots H_{\alpha _{n-k}}) \neq 0$.
Without loss of generality, one can assume that $H_{\alpha} = ( H_{\alpha _1} \ldots H_{\alpha _{n-k}} ) = I_{n-k}$ for the identity matrix $I_{n-k}$ of size $n-k$ or
\[
H_{i, \alpha _j} = \delta _{ij} =
 \begin{cases}
1  &  \text{ for $1 \leq i = j \leq n-k$,  }  \\
0   &  \text{ for $1 \leq i \neq j \leq n-k$.   }
\end{cases}
\]
The presence of a word $c \in C$ with ${\rm Supp} (c) = \sigma \in \Sigma _d (1, \ldots , n)$ implies that
$H_{\sigma _{\nu}} \in {\rm Span} _{\FF _q} (H_{\sigma _1}, \ldots , H_{\sigma _{\nu -1}}, H_{\sigma _{\nu +1}}, \ldots , H_{\sigma _d})$
for all $1 \leq \nu \leq d$.
We claim that the existence of $\sigma _{\nu} \not \in \alpha$, since $\sigma \subseteq \alpha$ contradicts the linear dependence  of the columns
 $H_{\sigma _1}, \ldots , H_{\sigma _d}$.
If $H_{\sigma _{\nu}} = \sum\limits _{s \in \sigma \setminus \{ \sigma _{\nu} \}} \lambda _s H_s$ for $\sigma _{\nu} \not \in \alpha$ and  some
 $\lambda _s \in \FF _q$, then we choose polynomials
\[
f_{i, \alpha _j} ( x_{\alpha _j}) = H_{i, \alpha _j} x_{\alpha _j} = \delta _{ij} x_{\alpha _j} \in \FF _q [ x_{\alpha _j}] \ \ \mbox{  for   } \ \
\forall  1 \leq i, j \leq n-k,
\]
\[
f_{i,s} ( x_s) = H_{is} x_s + \sum\limits _{r \geq 2} a_{i,s,r} x_s ^r \in \FF _q [ x_s] \ \ \mbox{  for  } \ \
\forall 1 \leq i \leq n-k, \ \ \forall s \in \{ 1, \ldots , n \} \setminus ( \alpha \cup \{ \sigma _{\nu} \}
\]
and
\[
f_{i, \sigma _{\nu}} ( x_{\sigma _{\nu}}) = \sum\limits _{s \in \sigma \setminus \{ \sigma _{\nu} \}} \lambda _s f_{i,s}  ( x_{\sigma _{\nu}} )
 \in \FF _q [ x_{\sigma _{\nu}} ] \ \ \mbox{  for  }  \ \ \forall 1 \leq i \leq n-k.
\]
The polynomials
\[
f_i (x_1, \ldots , x_n) := \sum\limits _{s=1} ^n f_{i,s} (x_s) = x_{\alpha _i} + \sum\limits _{s \in \{ 1, \ldots , n \} \setminus \alpha} f_{i,s} ( x_s) \ \
\mbox{  for   } \ \ 1 \leq i \leq n-k
\]
cut out an affine variety $X = V( f_1, \ldots , f_{n-k}) \subset \overline{\FF_q} ^n$ with a biregular puncturing
 $\Pi _{\alpha}  : X \rightarrow \overline{\FF _q} ^k$ at $\alpha \in \Sigma _{n-k}  (1, \ldots , n)$.
 In particular, $X$ is irreducible and the polynomials $f_1, \ldots , f_{n-k} \in \FF _q [ x_1, \ldots , x_n]$ generate the absolute ideal
  $I(X, \overline{\FF _q}) = \langle f_1, \ldots , f_{n-k} \rangle _{\overline{\FF _q}}$ of $X$.
According to
\begin{align*}
\frac{\partial f_i}{\partial x_{\sigma _{\nu}}} = \frac{\partial f_{i, \sigma _{\nu}} }{\partial x_{\sigma _{\nu}}} ( x_{\sigma _{\nu}}) =
\sum\limits _{s \in \sigma \setminus \{ \sigma _{\nu} \}} \lambda _s \frac{\partial f_{i,s}}{\partial x_{\sigma _{\nu}}} ( x_{\sigma _{\nu}}) =
\sum\limits _{s \in \sigma \setminus \{ \sigma _{\nu} \}} \lambda _s \frac{\partial f_{i,s}}{\partial x_s} \Big \vert _{x_s = x_{\sigma _{\nu}}}   \\
= \sum\limits _{s \in \sigma \setminus \{ \sigma _{\nu} \}} \lambda _s \frac{\partial f_i}{\partial x_s} \ \
\mbox{ for  } \ \ \forall 1 \leq i \leq n-k,
\end{align*}
one has
\[
\frac{\partial f}{\partial x_{\sigma _{\nu}}} (a) = \sum\limits _{s \in \sigma  \setminus \{ \sigma _{\nu} \}} \lambda _s \frac{\partial f}{\partial x_s} (a)
\ \ \mbox{   for  } \ \ \forall a \in X,
\]
whereas ${\rm rk} \frac{\partial f}{\partial x_{\sigma}} (a) <d$.
In other words, $v(a) \in \FF _{q^{\delta (a)}} ^n$ with $v(a) _s := \lambda _s$ for $s \in \sigma \setminus \{ \sigma _{\nu} \}$, $v(a) _{\sigma _{\nu}} := -1$
and $v(a) _s :=0$ for $s \in \{ 1, \ldots , n \} \setminus \sigma$ belongs to $T_a (X, \FF _{q^{\delta (a)}}) \setminus \{ 0^n \}$ and has
${\rm Supp} ( v(a)) \subseteq \sigma$ for $\forall a \in X$.
Straightforwardly,
\[
\frac{\partial f_i}{\partial x_s} ( 0^n) = \frac{\partial f_{i,s}}{\partial x_s} (0) = H_{i,s} \ \ \mbox{  for  } \ \ \forall 1 \leq i \leq n-k, \ \ \forall 1 \leq s \leq n
\]
reveals that $\frac{\partial f}{\partial x} (0^n) = H$ and $T_{0^n} (X, \FF _q) = C$.

The  equality  $X = X^{( \leq d)}$ is an immediate consequence of Proposition \ref{MinimumDistance} (iii).
By Proposition \ref{MinimumDistance} (ii) and $T_{0^n} (X, \FF_q) = C$, $X^{(d)} = X^{(\geq d)}$is a non-empty, Zariski open, Zariski dense subset of $X$.
Due to the irreducibility of $X$, it intersects the smooth locus in a non-empty, Zariski open, Zariski dense subset $X^{(d)} \cap X^{\rm smooth}$ of $X$.

\end{proof}

\subsection{Tangent bundle interpolation of a finite family of codes }

The next proposition illustrates how linear codes of arbitrary minimum distance can be interpolated by a finite Zariski tangent bundle.

\begin{proposition}     \label{DestabilizationMinDist}
Let $\cC \rightarrow S$ be a family of ${\mathbb F}_q$-linear codes $\cC(a) \subset {\mathbb F}_q ^n$, $a \in S$ of length $n$,
  dimension  $k = \dim _{\FF_q} \cC(a)$ and arbitrary minimum distance $d (\cC (a)) \leq n+1-k$,
    parameterized by a subset $S \subseteq  \FF _q^n$.
Then there exist irreducible $k$-dimensional   affine varieties $Y_1 / \FF _q, \ldots , Y_s / \FF _q  \subset \overline{\FF _q}^n$ with
  \[
  \FF _q ^n \subset Y_1 \cup \ldots \cup Y_s, \ \ S \subseteq Y_1^{\rm smooth}( \FF _q) \cup \ldots \cup Y_s^{\rm smooth}( \FF _q),
  \]
   such that $T_a (Y_i, \FF_q) = \cC(a)$ for all $a \in S$ and all $Y_i \ni a$.
\end{proposition}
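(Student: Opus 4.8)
The plan is to realise the whole family of codes at once as the tangent codes of a single irreducible variety, by interpolating parity check matrices through polynomials that vanish on all of $\FF_q^n$; this will in fact give $s=1$ when $1\le k\le n-1$, the degenerate cases $k=n$ (take $Y_1=\overline{\FF_q}^n$) and $k=0$ (take $Y_1,\dots,Y_{q^n}$ the points of $\FF_q^n$) being immediate.

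\emph{Reduction.} For each $a\in S$ fix an $(n-k)\times n$ parity check matrix $H(a)$ of $\cC(a)$, so that ${\rm rk}\,H(a)=n-k$ and $\cC(a)=\ker H(a)$. It is enough to produce $f_1,\dots,f_{n-k}\in\FF_q[x_1,\dots,x_n]$ with: (a) each $f_i$ vanishes on $\FF_q^n$, equivalently $f_i\in\langle x_1^q-x_1,\dots,x_n^q-x_n\rangle$; (b) $\frac{\partial(f_1,\dots,f_{n-k})}{\partial(x_1,\dots,x_n)}(a)=H(a)$ for all $a\in S$; and (c) $f_1,\dots,f_{n-k}$ generate the (radical, hence absolute) ideal of an irreducible variety $X=V(f_1,\dots,f_{n-k})$ of pure dimension $k$. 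Granting this, (a) yields $\FF_q^n\subset X$; by (c) the $f_i$ form a generating set of $I(X,\overline{\FF_q})$ (and of $I(X,\FF_q)$), so by (b) the Jacobian of that generating set at $a\in S$ is $H(a)$, of rank $n-k={\rm codim}\,X$; hence $a\in X^{\rm smooth}(\FF_q)$ and $T_a(X,\FF_q)=\ker H(a)=\cC(a)$. Then $Y_1:=X$ does the job.

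\emph{Construction of the $f_i$.} Write $f_i=\sum_{j=1}^n p_{ij}(x)(x_j^q-x_j)$ with $p_{ij}\in\FF_q[x]$, so that (a) is automatic. Since the integer $q$ vanishes in $\FF_q$, one has $\frac{\partial}{\partial x_l}(x_j^q-x_j)=-\delta_{jl}$ in $\FF_q[x]$, whence for $a\in\FF_q^n$, using $a_j^q-a_j=0$,
\[
\frac{\partial f_i}{\partial x_l}(a)=\sum_{j=1}^n\frac{\partial p_{ij}}{\partial x_l}(a)\,(a_j^q-a_j)-p_{il}(a)=-p_{il}(a).
\]
Thus (b) is the finite interpolation requirement $p_{il}(a)=-H(a)_{il}$ for $a\in S$, solved e.g. by $p_{il}=-\sum_{a\in S}H(a)_{il}\prod_{j=1}^n\bigl(1-(x_j-a_j)^{q-1}\bigr)$. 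Crucially, the $p_{ij}$ are constrained only on the finite set $S$, so one may add to each of them an arbitrary polynomial of arbitrarily large degree that vanishes on $S$.

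\emph{The main obstacle} is condition (c): making $f_1,\dots,f_{n-k}$ a regular sequence whose zero set is reduced and irreducible of dimension $k$. The common base locus of the linear system $\bigl\{\sum_j p_j(x)(x_j^q-x_j)\bigr\}$ is exactly $\FF_q^n$, of codimension $n$; so for $1\le k\le n-1$ a Bertini-type argument shows that a generic choice of the high-degree free parts of the $p_{ij}$ makes $f_1,\dots,f_{n-k}$ a regular sequence cutting out a reduced variety that is geometrically irreducible of dimension $k$ on the complement of $\FF_q^n$, hence (since $\dim\FF_q^n=0<k$) everywhere; being reduced, $\langle f_1,\dots,f_{n-k}\rangle=I(X,\overline{\FF_q})$, and (c) follows. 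The delicate point is that this genericity must be witnessed by an $\FF_q$-rational choice of coefficients: the bad parameters form a proper Zariski closed subset of the affine parameter space of coefficients, and one guarantees an $\FF_q$-point outside it by taking the degrees of the $p_{ij}$ large enough and counting $\FF_q$-points on the parameter space. (Alternatively the $Y_i$ can be produced explicitly as graphs of polynomial maps, at the cost of allowing $s>1$ and of a bookkeeping argument matching information sets of the $\cC(a)$ to the points of $S$.)
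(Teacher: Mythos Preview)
Your construction of the polynomials $f_i$ satisfying (a) and (b) is correct and essentially the same as the paper's: the paper writes $f_i$ as an $\FF_q$-combination of the $(x_j - x_j^q)$ multiplied by interpolation factors, and kills the unwanted derivative terms by composing the Lagrange polynomials with the Frobenius $x\mapsto x^p$, whereas you kill them via the product rule together with $a_j^q-a_j=0$ on $\FF_q^n$. Either device yields $\frac{\partial(f_1,\dots,f_{n-k})}{\partial x}(a)=H(a)$ for $a\in S$ and $\FF_q^n\subset V(f_1,\dots,f_{n-k})$.

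The genuine gap is in step (c). You are asking for much more than the statement requires: that $X=V(f_1,\dots,f_{n-k})$ be geometrically irreducible and that $\langle f_1,\dots,f_{n-k}\rangle$ be its radical ideal. Your justification is a Bertini-type irreducibility/smoothness theorem for a generic member of the linear system $\{\sum_j p_j(x)(x_j^q-x_j)\}$, together with an $\FF_q$-point count on the parameter space. Neither step is made rigorous. Bertini irreducibility is a statement about a Zariski-open condition over an \emph{infinite} (or algebraically closed) coefficient field; over $\FF_q$ it can genuinely fail, and the bad locus in the parameter space varies with the degree, so ``take the degree large enough and count'' is not an argument without a quantitative bound on the bad locus (this is exactly the subtlety addressed by Poonen's Bertini theorems over finite fields, which would have to be invoked and adapted). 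The radicality claim is likewise unjustified.

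The paper avoids all of this. Having your (a) and (b), it simply sets $X=V(f_1,\dots,f_{n-k})$ and lets $Y_1,\dots,Y_s$ be its irreducible components. For $a\in S$ and any $Y_i\ni a$, one has $f_1,\dots,f_{n-k}\in I(Y_i,\overline{\FF_q})$, hence $T_a(Y_i,\FF_q)\subseteq\ker H(a)=\cC(a)$; on the other hand $\dim Y_i\ge n-(n-k)=k$ since $Y_i$ is cut by $n-k$ equations. The squeeze
\[
k\le \dim Y_i\le \dim_{\FF_q} T_a(Y_i,\FF_q)\le \dim_{\FF_q}\cC(a)=k
\]
then forces $\dim Y_i=k$, $a\in Y_i^{\rm smooth}$, and $T_a(Y_i,\FF_q)=\cC(a)$, with no need for irreducibility of $X$ or radicality of $\langle f_1,\dots,f_{n-k}\rangle$. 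In short, your reduction demanded too much; once (a) and (b) are in hand, the dimension squeeze on components finishes the proof in two lines.
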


\begin{proof}

Let us choose  a family $\cH \rightarrow S$ of parity-check matrices $\mathcal{H}(a) \in M _{(n-k) \times n} ( \FF_q)$ of
 $\cC (a) \subset \FF_q ^n$ for all $a \in S$ and denote by $\cH (a) _{ij} \in \FF_q$ the entries of these matrices.
For an arbitrary $\beta \in \FF_q$, consider the Lagrange basis polynomial
\[
L^{\beta} _{\FF_q} (t) := \prod\limits _{\alpha \in \FF_q \setminus \{ \beta \}} \frac{t - \alpha}{\beta - \alpha}
\]
with $L^{\beta} _{\FF_q} (t) ( \beta ) =1$ and $L^{\beta} _{\FF_q} (t) \vert _{\FF_q \setminus \{ \beta \}} =0$.
Straightforwardly,
\begin{align*}
L^{0}_{\FF_q} (t) := \left[ \prod\limits _{\alpha \in \FF_q^*} (t - \alpha) \right]
 \left\{   \left[ \prod\limits _{\alpha \in \FF_q^*} (t - \alpha) \right] \Big \vert _{t=0}  \right \} ^{-1}   = \\
 (t^{q-1} -1) (-1)^{-1} = - (t^{q-1} -1).
\end{align*}
Towards an explicit calculation of $L^{\beta} _{\FF_q}(t)$ for $\beta \in \FF_q^*$, let us denote by $\sigma _1, \ldots , \sigma _{q-1}$ the elementary symmetric polynomials of the roots of $f(t) := \prod\limits _{\alpha \in \FF_q^*} (t - \alpha) = t^{q-1} -1$.
 Put $\tau _1, \ldots , \tau _{q-2}$ for the elementary symmetric polynomials of the roots of  the monic polynomial
\[
f_{\beta} (t) := \prod\limits _{\alpha \in \FF_q^* \setminus \{ \beta \}} (t - \alpha) = \frac{f(t)}{t - \beta} = \frac{t^{q-1} -1}{t  - \beta} =
t^{q-2}  + \sum\limits _{s=0} ^{q-3} (-1) ^{q-2-s} \tau _{q-2-s} t^s.
\]
 Then the relations
\begin{align*}
\tau _1 + \beta = \sigma _1 =0,  \\
\tau _s + \beta \tau _{s-1} = \sigma _s = 0 \ \ \mbox{  for  } \ \ \forall 2 \leq s \leq q-2 \ \ \mbox{  and  }   \\
\beta \tau _{q-2} = \sigma _{q-1} = (-1)^q,
\end{align*}
 reveal that  $\tau _s = ( - \beta) \tau _{s-1}$ for $\forall 1 \leq s \leq q-2$,  $\tau _0 := 1$  form a geometric progression
  $\{ \tau _s \} _{s=1} ^{q-2}$   with quotient $( - \beta)$.
As a result,
\[
\tau _s = ( - \beta) ^{s} \ \ \mbox{ for  } \ \ \forall 1 \leq s \leq q-2
\]
and
\[
f_{\beta} (t) =
t^{q-2} + \sum\limits _{s=0} ^{q-3} \beta ^{q-2-s} t^s =
t^{q-2} + \sum\limits _{s=0} ^{q-3} \beta ^{-s-1} t^s,
\]
according to $\beta ^{q-2} = \beta ^{-1}$ for $\forall \beta \in \FF_q^*$.
Now,
\begin{align*}
L^{\beta} _{\FF_q} (t) := \frac{t f_{\beta}(t)}{\beta f_{\beta} ( \beta)} =
\frac{t^{q-1} + \sum\limits _{s=1} ^{q-2} \beta ^{-s} t^s}{\beta ^{q-1} + \sum\limits _{s=1} ^{q-2} 1}  \\
= (q-1)^{-1} \left[ t^{q-1} + \sum\limits _{s=1} ^{q-2} \beta ^{-s} t^s \right] =
- \left[ t^{q-1} + \sum\limits _{s=1} ^{q-2} \beta ^{-s} t^s \right]
\end{align*}
for arbitrary $\beta \in \FF_q^*$.

Let us denote by
\begin{align*}
\Phi _p : \overline{\FF_q}^n \longrightarrow \overline{\FF_q}^n,  \\
\Phi _p (a_1, \ldots , a_n) = ( a_1 ^p, \ldots , a_n ^p) \ \ \mbox{ for } \ \ \forall a = (a_1, \ldots , a_n) \in \overline{\FF_q}^n
\end{align*}
the Frobenius automorphism of degree $p = {\rm char} \FF_q$ and consider the polynomials
\begin{align*}
f_i (x_1, \ldots , x_n) :=  \\
 \sum\limits _{b \in \Phi _p (S)} \left[ \sum\limits _{j=1} ^n \cH ( \Phi _p ^{-1} (b)) _{ij} (x_j - x_j ^q) \right]
L ^{b_1} _{\FF _q} (x_1^p) \ldots L^{b_n} _{\FF_q} ( x_n ^p) \in \FF_q [ x_1, \ldots , x_n]
\end{align*}
for $1 \leq i \leq n-k$.
Suppose that the  $X := V( f_1, \ldots , f_{n-k}) \subset \overline{\FF_q}^n$  decomposes into a union $X = Y_1 \cup \ldots \cup Y_s$ of its irreducible components $Y_i$.
We claim that $Y_1, \ldots , Y_s$ satisfy the announced conditions.
First of all, $\FF_q ^n \subset X$, as far as an arbitrary point $a = (a_1, \ldots , a_n) \in \FF_q ^n$ has components $a_j = a_j ^q$ and
 $f_i (a_1, \ldots , a_n) =0$ for $\forall 1 \leq i \leq n-k$.
The partial derivatives  are
\[
\frac{\partial f_i}{\partial x_j} = \sum\limits _{b \in \Phi _p (S)} \cH ( \Phi _p ^{-1} (b)) _{ij} L^{b_1} _{\FF_q} (x_1 ^p) \ldots L^{b_n} _{\FF_q} ( x_n ^p)
\]
and their values at $a \in S$ equal
\[
\frac{\partial f_i}{\partial x_j} (a) =  \cH ( \Phi _p ^{-1} \Phi _p (a)) _{ij} = \cH(a) _{ij}.
\]
Note that the composition of Lagrange interpolation polynomials with the Frobenius automorphism $\Phi _p$ is designed in such a way that to adjust
\[
\frac{\partial( f_1, \ldots , f_{n-k})}{\partial ( x_1, \ldots , x_n)} (a) = \cH (a)
\]
 at all the points $a \in S$.
 For an arbitrary point $a \in S \subset X( \FF_q) := X \cap \FF_q ^n$ let $Y_i$ be  an irreducible component of $X$ through $a$.
  The Zariski tangent space $T_a (Y_i, \FF_q)$ is contained in the right null-space $\cC(a)$ of the Jacobian matrix
  $\frac{\partial ( f_1, \ldots , f_{n-k})}{\partial ( x_1, \ldots , x_n)} (a) = \cH (a)$.
  On the other hand, $\dim Y_i \geq n - (n-k) =k$, as far as $Y_i$ is subject to at least $n-k$ polynomial equations $f_1 = \ldots = f_{n-k} =0$.
Thus,
 \[
 k \leq \dim Y_i \leq \dim _{\FF_q} T_a (Y_i, \FF_q)  \leq \dim_ {\FF_q} \cC (a) =k
 \]
 implies that $\cC (a) = T_a (Y_i, \FF _q)$, $\dim Y_i = k$ and any  $a \in Y_i ^{\rm smooth} (\FF_q)$.

\end{proof}

Algorithms for primary decomposition of polynomial ideals and decomposition of an affine variety into a union of irreducible components  are provided by  \cite{A}, \cite{GTZ}, \cite{EHV}, \cite{W}, \cite{S}, \cite{M}, \cite{PSS} and other sources.

\section{Decoding  and deforming  tangent   codes. Gradient codes. }

\subsection{Simultaneous  decoding  tangent codes}

After organizing linear codes in families, constituting tangent bundles to affine varieties, we propose an algorithm for simultaneous decoding of tangent codes, after  recognizing the support of the error of the received word.

Let $C \subset \FF_q ^n$ be an $\FF_q$-linear code of minimum distance $d$.
A word $w \in \FF_q ^n$ has a $C$-error of weight $\leq t$ if there exists $e \in \FF_q ^n$ of weight ${\rm wt} (e) \leq t$ with $w-e \in C$.
Denote by
\[
{\rm Err} (C,t) := \{ w \in \FF_q ^n \ \ \vert \ \  \exists e \in \FF_q ^n, \, {\rm wt} (e) \leq t, \, w-e \in C \}
 \]
 the set of the words with  a $C$-error of weight $\leq t$.
 Note that ${\rm Err} (C,t) \supset C$ and   ${\rm Err}(C, t) / C := \{ w +C \ \ \vert \ \ w \in {\rm Err}(C,t) \}$  consists  of the cosets, whose leaders are of weight $\leq t$.
We say that $w \in \FF_q^n$ has a $C$-error, supported by $i \in \Sigma _t (1, \ldots , n)$ if there is $e \in \FF_q ^n$ with ${\rm Supp} (e) \subseteq i$
 and $w - e \in C$.
 It is well known that if $t$ does not exceed the integral part $\left[ \frac{d-1}{2} \right]$ of $\frac{d-1}{2}$ and $w \in \FF_q ^n$ has a $C$-error of weight $\leq t$ then $e \in \FF_q ^n$ with ${\rm wt} (e) \leq t$ and $w-e \in C$ is unique.

 Let $X / \FF_q \subset \overline{\FF_q} ^n$ be an irreducible affine variety, defined over $\FF_q$ with
  $X^{( \geq 2t+1)} := \{ a \in X \ \ \vert \ \  d( T_a (X, \FF_{q^{\delta (a)}}) ) \geq 2t+1 \} \neq \emptyset$ for some $t \in \NN$.
  The disjoint  union
  \[
  {\rm Err} (TX, t) := \coprod\limits _{a \in X^{(2t+1)}}  {\rm Err} (T_a (X, \FF_{q^{\delta (a)}}),t)
  \]
  of the words ${\rm Err} (T_a (X, \FF_{q^{\delta (a)}}),t)$ with $T_a (X, \FF_{q^{\delta (a)}})$-error of weight $\leq t$ will be called the bundle of the words with $TX$-error of weight $\leq t$.
  Denote by
  \[
  \pi : {\rm Err} (TX,t) \longrightarrow X^{(2t+1)}
  \]
   the natural projection on the base.
  To any $i \in \Sigma _t (1, \ldots , n)$ we associate a polynomial matrix $A_i (x) \in M_{l_i \times t} ( \FF_q [ x_1, \ldots , x_n])$ for some $l_i \in \NN$, such that $w_a \in \pi ^{-1} (a) = {\rm Err} (T_a (X, \FF_{q^{\delta (a)}}),t)$ has   a $T_a (X, \FF_{q^{\delta (a)}})$-error $e_a \in \FF_{q^{\delta (a)}} ^n$, supported by $i$ exactly when $A_i (a) w_a ^t = 0$.
If so, then $e_a$ can be computed explicitly by the means of the Jacobian matrix  $\frac{\partial f}{\partial x} (a)$ of a generating set
 $f_1, \ldots , f_m \in \FF_q [ x_1, \ldots , x_n]$ of $I(X, \overline{\FF_q}) = \langle f_1, \ldots , f_m \rangle _{\overline{\FF_q}}$ at $a$.
 The aforementioned result  is referred to as a simultaneous decoding of tangent codes, as far as the construction of $A_i (x)$ for
 $\forall i \in \Sigma _t (1, \ldots , n)$ allows to recognize simultaneously the supports of the $T_a (X, \FF_{q^{\delta (a)}})$-errors for all $a \in X$ and to obtain  the corresponding errors  from  ${\rm Err} (T_a (X, \FF_{q^{\delta (a)}}),t)$.

Towards the construction of $A_i(x)$ we need the following

\begin{lemma}   \label{GroebnerBasis}
Let $X \subset \overline{\FF_q}^n$ be an irreducible affine variety with  absolute ideal
$I(X, \overline{\FF_q}) = \langle f_1, \ldots , f_m \rangle _{\overline{\FF_q}}$,  generated by $f_1, \ldots , f_m \in \FF_q [ x_1, \ldots , x_n]$,
$i \in \Sigma _t (1, \ldots , n)$  and  $G_{i, \neg i}$ be a Groebner basis of $I(X, \FF_q) = \langle f_1, \ldots , f_m \rangle _{\FF_q}$ with respect to a lexicographic order of  $\FF_q [ x_1, \ldots , x_n]$ with $x_i > _{\rm lex} x_{\neg i}$, obtained by Buchberger's algorithm.
Then:

\[
\mbox{ (i) } \ \ G_{i, \neg i} \subseteq M_{1 \times m} ( \FF_q [ x_1, \ldots , x_n])
\left( \begin{array}{c}
f_1  \\
\ldots  \\
f_m
\end{array}   \right);
\]

(ii)   the absolute ideal $I(\Pi _i (X), \overline{\FF_q} ) = I( \overline{\Pi _i (X)}, \overline{\FF_q}) \triangleleft \overline{\FF_q} [ x_{\neg i} ]$ of the Zariski  closure $\overline{\Pi _i (X)}$ of $\Pi _i (X)$ is generated by $G_{\neg i} := G_{i, \neg i} \cap \FF_q [ x_{\neg i}]$.
\end{lemma}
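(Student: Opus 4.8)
The plan is to exploit two standard facts about Buchberger's algorithm and about elimination ideals, and to combine them with the fact that $X$ is defined over $\FF_q$. For part (i), the key observation is that Buchberger's algorithm, applied to the input set $\{f_1,\ldots,f_m\}$, produces every element of the output Groebner basis $G_{i,\neg i}$ as an explicit $\FF_q[x_1,\ldots,x_n]$-linear combination of $f_1,\ldots,f_m$: the algorithm only ever forms $S$-polynomials of already-generated elements and reduces them modulo already-generated elements, and both of these operations are $\FF_q[x_1,\ldots,x_n]$-linear in the current generators. So by induction on the steps of the algorithm, each generated polynomial lies in the $\FF_q[x_1,\ldots,x_n]$-module generated by the rows of $(f_1,\ldots,f_m)^t$, which is exactly the displayed inclusion. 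First I would state this inductive bookkeeping, tracking that $S$-polynomial formation and the division-algorithm remainder preserve membership in $M_{1\times m}(\FF_q[x_1,\ldots,x_n])\cdot(f_1,\ldots,f_m)^t$.

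For part (ii), I would first record that $G_{i,\neg i}$ is a Groebner basis of $I(X,\FF_q)=\langle f_1,\ldots,f_m\rangle_{\FF_q}$ with respect to the lexicographic order with $x_i>_{\rm lex}x_{\neg i}$, and then invoke the Elimination Theorem: $G_{\neg i}=G_{i,\neg i}\cap\FF_q[x_{\neg i}]$ is a Groebner basis of the elimination ideal $I(X,\FF_q)\cap\FF_q[x_{\neg i}]$. The next step is the Closure Theorem: $\Pi_i(X\cap\overline{\FF_q}^n)$ has Zariski closure equal to $V\big(I(X,\overline{\FF_q})\cap\overline{\FF_q}[x_{\neg i}]\big)$, and since $X$ is defined over $\FF_q$ one has $I(X,\overline{\FF_q})=I(X,\FF_q)\otimes_{\FF_q}\overline{\FF_q}$, so the elimination ideal over $\overline{\FF_q}$ is generated by the elimination ideal over $\FF_q$, i.e. by $G_{\neg i}$. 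Finally, because $X$ is irreducible and $\Pi_i$ is a morphism of varieties, $\Pi_i(X)$ is irreducible and its Zariski closure $\overline{\Pi_i(X)}$ is an irreducible affine variety in $\overline{\FF_q}^{n-t}$, whose absolute ideal is the prime ideal $\sqrt{I(X,\overline{\FF_q})\cap\overline{\FF_q}[x_{\neg i}]}$; but since $I(X,\overline{\FF_q})$ is prime, its contraction $I(X,\overline{\FF_q})\cap\overline{\FF_q}[x_{\neg i}]$ is already prime, hence radical, so it equals its own radical and equals $I(\overline{\Pi_i(X)},\overline{\FF_q})$. I would also note the identification $I(\Pi_i(X),\overline{\FF_q})=I(\overline{\Pi_i(X)},\overline{\FF_q})$ is tautological since the absolute ideal of a set and of its Zariski closure coincide.

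The main obstacle I anticipate is purely bookkeeping rather than conceptual: making the inductive argument for (i) precise requires care about the fact that Buchberger's algorithm may replace or discard generators and append new ones, so one must phrase the invariant as ``every polynomial ever placed in the working basis lies in $M_{1\times m}(\FF_q[x_1,\ldots,x_n])\cdot(f_1,\ldots,f_m)^t$'' and check it is preserved by each primitive operation. For (ii) the one point deserving explicit mention is why the coefficient extension from $\FF_q$ to $\overline{\FF_q}$ commutes with the elimination (contraction) operation on ideals — this is flatness of $\overline{\FF_q}$ over $\FF_q$, equivalently the fact that $\FF_q[x_{\neg i}]\otimes_{\FF_q}\overline{\FF_q}=\overline{\FF_q}[x_{\neg i}]$ and that tensoring the exact sequence $0\to I(X,\FF_q)\cap\FF_q[x_{\neg i}]\to\FF_q[x_{\neg i}]\to\FF_q[x_{\neg i}]/(I(X,\FF_q)\cap\FF_q[x_{\neg i}])$ with $\overline{\FF_q}$ stays exact. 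Everything else is a direct citation of the Elimination and Closure Theorems for Groebner bases.
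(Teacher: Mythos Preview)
Your proposal is correct and follows essentially the same approach as the paper. Part (i) is identical: induction on the steps of Buchberger's algorithm, checking that $S$-polynomial formation and division-with-remainder preserve membership in $M_{1\times m}(\FF_q[x_1,\ldots,x_n])\cdot(f_1,\ldots,f_m)^t$. For part (ii) both arguments invoke the Elimination and Closure Theorems and use primality of $I(X,\overline{\FF_q})$ to dispose of the radical; the only minor difference is that where you pass from $\FF_q$ to $\overline{\FF_q}$ in the elimination ideal via flatness of the scalar extension, the paper instead observes that Buchberger's algorithm is insensitive to the constant field, so $G_{i,\neg i}$ is already a Groebner basis of $I(X,\overline{\FF_q})$ and the Elimination Theorem applies directly over $\overline{\FF_q}$.
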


\begin{proof}

 (i)  Let $f := (f_1, \ldots , f_m) \in M_{1 \times m} ( \FF_q[ x_1, \ldots , x_n])$.
By an induction on the steps of   Buchberger's algorithm, we show that   any entry of the current generating set $\Delta$  of $I(X, \FF_q)$  is  of the
form  $gf^t $ for some $g = (g^{(1)}, \ldots , g^{(m)}) \in M_{1 \times m} ( \FF_q[ x_1, \ldots , x_n])$.

Let $e_j \in M_{1 \times m} ( \FF _q)$ be the ordered $m$-tuple with unique non-zero entry $1$ at the $j$-th position.
One can view $e_j$ as an  ordered $m$-tuple of polynomials.
At the input, any $f_j$ is expressed as a product $e_j f^t$ with the transposed $f^t$ of  $f = (f_1, \ldots , f_m) \in M_{1 \times m} ( \FF _q [ x_1, \ldots , x_n])$.

For arbitrary $h_1, h_2 \in \FF _q [ x_1, \ldots , x_n]$ let $x^{\gamma}$ be the least common multiple of the leading monomials $LM(h_1), LM(h_2)$ with respect to the fixed  lexicographic  order on $\FF _q [ x_1, \ldots , x_n]$.
If $LT(h_j) = LC(h_j) LM(h_j)$ are the leading terms of $h_j$ then the $S$-polynomial of $h_1, h_2$ (or the syzygy polynomial of $h_1, h_2$) is defined as
\[
S(h_1, h_2) = \frac{x^{\gamma}}{LT(h_1)} h_1 - \frac{x^{\gamma}}{LT(h_2)} h_2.
\]
If there exist $g_j \in M_{1 \times m} ( \FF _q  [ x_1, \ldots , x_n])$ with $h_j = g_j f^t$ then
\[
S(h_1, h_2) =  \left( \frac{x^{\gamma}}{LT(h_1)} g_1 - \frac{x^{\gamma}}{LT(h_2)} g_2 \right) f^t
\]
can also be represented as a product of a row  $m$-tuple of polynomials with the column $f^t = (f_1, \ldots , f_m) ^t$.
At each step, Buchberger's algorithm for  obtaining  a Groebner basis of $I(X, \FF _q)$ adjoins to the current generating set $\Delta$ of $I(X, \FF _q)$ the remainders $\overline{S(h_1, h_2)} ^{\Delta}$ of the $S$-polynomials of $h_1, h_2 \in \Delta$ under the division by $\Delta$.
By its very definition,
\[
\overline{S(h_1, h_2)} ^{\Delta} = S(h_1, h_2) - a f^t
 \]
 for some $a \in M_{1 \times m} ( \FF _q [ x_1, \ldots , x_n])$ and no-one monomial of $\overline{S(h_1, h_2)} ^{\Delta}$ with non-zero coefficient is divisible by $LT(h)$ for $h \in \Delta$.
Thus, if all $h \in \Delta$ are represented in the form $h = gf^t$ for some $g \in M_{1 \times m} ( \FF _q [ x_1, \ldots , x_n] )$, then all
 $\overline{S(h_1, h_2)} ^{\Delta}$ with $h_1, h_2 \in \Delta$ are of the same form and adjoining them to $\Delta$, one gets a set of polynomials
  $\Delta ' \in M_{1 \times m} ( \FF _q [ x_1, \ldots , x_n]) f^t$.

Buchberger's algorithm terminates with the Groebner basis $G_{i, \neg i} = \Delta$, once  $\overline{S(h_1, h_2)} ^{\Delta} =0$ for
$\forall h_1, h_2 \in \Delta$.

(ii)   Combining the Closure Theorem 3 from Chapter 3, § 2 \cite{CLSh} with the Elimination Theorem 2 from Chapter 3, § 1 \cite{CLSh}, one concludes that the Zariski closure of $\Pi _i (X)$ is the affine variety $\overline{\Pi _i (X)} = V( G_{\neg i})$.
Therefore, the absolute ideal
\[
I( \Pi _i (X), \overline{\FF_q}) = IVI ( \Pi _i (X), \overline{\FF_q}) = I ( \overline{\Pi _i (X)}, \overline{\FF_q}) = IV( G_{\neg i}, \overline{\FF_q} ) =
 r \langle G_{\neg i} \rangle _{\overline{\FF_q}}
\]
equals the radical $ r \langle G_{\neg i} \rangle _{\overline{\FF_q}}   \triangleleft  \overline{\FF_q} [ x_1, \ldots , x_n]$ of
$\langle G_{\neg i} \rangle _{\overline{\FF_q}}$.
If $h( x_{\neg i} ) \in r \langle G_{\neg i} \rangle _{\overline{\FF_q}}$ then
\[
h( x_{\neg i} )^N \in \langle G_{\neg i} \rangle _{\overline{\FF_q}} \subset \langle G_{\neg i} \rangle _{\overline{\FF_q}} \otimes _{\overline{\FF_q}} \overline{\FF_q} [ x_1, \ldots , x_n] \subseteq \langle G_{i, \neg i} \rangle _{\overline{\FF_q}} = I(X, \overline{\FF_q})
\triangleleft \overline{\FF_q} [ x_1, \ldots , x_n]
\]
for some $N \in \NN$.
The absolute ideal $I(X, \overline{\FF_q})$ of the irreducible affine variety $X$ is prime and therefore radical, i.e.,
 $I(X, \overline{\FF_q}) = r I(X, \overline{\FF_q})$.
 As a result, $h ( x_{\neg i} ) \in I(X, \overline{\FF_q})$ and bearing in mind that $h( x_{\neg i} ) \in \overline{\FF_q} [ x_{\neg i} ]$, one concludes that
 $h( x_{\neg i} ) \in I(X, \overline{\FF_q}) \cap \overline{\FF_q} [ x_{\neg i} ]$.
Buchberger's algorithm for obtaining the Groebner basis $G_{i, \neg i}$ of $I(X, \FF_q) = \langle f_1, \ldots , f_m \rangle _{\FF_q}$ adjoins to
 $\{ f_1, \ldots , f_m \}$ the remainders of appropriate $S$-polynomials and does not depend on the constant field $\FF_q$.
 Bearing in mind that $f_1, \ldots , f_m$ generate the absolute ideal $I(X, \overline{\FF_q}) = \langle f_1, \ldots , f_m \rangle _{\overline{\FF_q}}$ of $X$, one concludes that $G_{i, \neg i}$ is a Groebner basis of $I(X, \overline{\FF_q})$.
 By the Elimination Theorem 2 from Chapter 3, §1 \cite{CLSh},
  $G_{i, \neg i} \cap \overline{\FF_q} [ x_{\neg i} ] = G_{i, \neg i} \cap \FF_q [ x_{\neg i}] = G_{\neg i}$ is a Groebner basis of
   $I(X, \overline{\FF_q}) \cap \overline{\FF_q} [ x_{\neg i} ]$ with respect to the fixed lexicographic order of $\overline{\FF_q} [ x_{\neg i} ]$.
 Thus,
  \[
 h( x_{\neg i}) \in I(X, \overline{\FF_q}) \cap \overline{\FF_q} [ x_{\neg i} ] = \langle G_{\neg i} \rangle _{\overline{\FF_q}}
 \]
  and  $r \langle G_{\neg i} \rangle _{\overline{\FF_q}} \subseteq \langle G_{\neg i} \rangle _{\overline{\FF_q}}$.
 Combining with $\langle G_{\neg i} \rangle _{\overline{\FF_q}} \subseteq r \langle G_{\neg i} \rangle _{\overline{\FF_q}}$ one concludes that
 $I(\Pi _i (X), \overline{\FF_q}) = I( \overline{\Pi _i (X)}, \overline{\FF_q}) = r \langle G_{\neg i} \rangle _{\overline{\FF_q}} =
  \langle G_{\neg i} \rangle _{\overline{\FF_q}}$.

\end{proof}

In order to characterize the words $w \in \FF_{q^{\delta (a)}} ^n$, whose $T_a (X, \FF_q{\delta (a)})$-error  is supported by $i \in \Sigma _t (1, \ldots , n)$, let us note that the puncturing $\Pi _i$   can be viewed, both,  as a morphism
 $\Pi _i : \overline{\FF_q} ^n \rightarrow \overline{\FF_q} ^{n-t}$ of $\overline{\FF_q} ^n$ and  a morphism $\Pi _i : X \rightarrow \Pi _i (X)$ of the irreducible affine variety $X$.
If $\overline{\Pi _i (X)} \subset \overline{\FF_q} ^{n-t}$ is the Zariski closure of $\Pi _i (X)$ in $\overline{\FF_q} ^{n-t}$ then
\[
\Pi _i ^{-1} ( \overline{\Pi _i (X)}) := \{ a \in \overline{\FF_q} ^n \ \ \vert \ \  \Pi _i (a) \in \overline{\Pi _i (X)} \}
\]
is an irreducible affine variety of $\overline{\FF_q}^n$, isomorphic to $\overline{\Pi _i (X)} \times \overline{\FF_q} ^t$ and containing $X$.
We call $\Pi _i ^{-1} ( \overline{\Pi _i (X)})$ the cylinder  over  $\overline{\Pi _i (X)}$.

\begin{proposition}   \label{ErrorSupport}
Let $X \subset \overline{\FF _q} ^n$ be an affine variety, whose absolute ideal
$I(X, \overline{\FF _q}) = \langle f_1, \ldots , f_m \rangle _{\overline{\FF _q}}$    is generated  by some
 $f_1, \ldots , f_m \in \FF _q [ x_1, \ldots , x_n]$,  $i \in \Sigma _t(1, \ldots , n)$ and
  $\Pi _i ^{-1} ( \overline{\Pi _i (X)}) \simeq \overline{\Pi _i (X)} \times \overline{\FF_q}^t$ be the cylinder over  $\overline{\Pi _i (X)}$.
  Suppose that the tangent code $T_a (X, \FF_{q^{\delta (a)}})$ to $X$ at $a \in X$ is of minimum distance $d(T_a (X, \FF_{q^{\delta (a)}})) >t$ and
   $w \in \FF_{q^{\delta (a)}} ^n$ is a word with a $T_a (X, \FF_{q^{\delta (a)}})$-error of weight $\leq t$.
 Then the $T_a (X, \FF_{q^{\delta (a)}})$-error of $w$ is supported by $i$ if and only if
 $w \in T_a ( \Pi _i ^{-1} ( \overline{\Pi _i (X)}), \FF_{q^{\delta (a)}})$ is tangent to the cylinder $\Pi _i ^{-1} ( \overline{\Pi _i (X)})$ at $a$.
 If so, then any $T_a (X, \FF_{q^{\delta (a)}})$-error $e \in \FF_{q^{\delta (a)}} ^n$  of $w$ with $\Pi _i (e) = 0 _{(n-t) \times 1}$ projects onto a solution
  $\Pi _{\neg i}  (e) = (e_{i_1}, \ldots , e_{i_t}) = e_i$ of the homogeneous linear system
 \[
\frac{\partial f }{\partial x_i} (a)
\left(  \begin{array}{c}
x_{i_1}  \\
\ldots  \\
x_{i_t}
\end{array}   \right) =
\frac{\partial f }{\partial x} (a) w^t.
\]
\end{proposition}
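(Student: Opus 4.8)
Write $F:=\FF_{q^{\delta(a)}}$ and $Z:=\Pi_i^{-1}(\overline{\Pi_i(X)})$, and split the coordinates as $x=(x_i,x_{\neg i})$ with $x_i=(x_{i_1},\dots,x_{i_t})$. The first step is to pin down a parity-check matrix of $T_a(Z,F)$. Fix a lexicographic Gr\"obner basis $G_{i,\neg i}$ of $I(X,\FF_q)$ with $x_i>_{\rm lex}x_{\neg i}$; by Lemma \ref{GroebnerBasis}(ii) the subset $G_{\neg i}:=G_{i,\neg i}\cap\FF_q[x_{\neg i}]$ generates $I(\overline{\Pi_i(X)},\overline{\FF_q})$, hence also generates $I(Z,\overline{\FF_q})$ once its members are read in all $n$ variables. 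As $G_{\neg i}$ involves no $x_i$, the columns of $\frac{\partial G_{\neg i}}{\partial x}(a)$ labelled by $i$ vanish, so $v\in T_a(Z,F)$ iff $v_{\neg i}\in T_{\Pi_i(a)}(\overline{\Pi_i(X)},F)$; in particular $T_a(Z,F)$ contains the coordinate subspace $V_i:=\{v\in F^n\mid{\rm Supp}(v)\subseteq i\}$. By Lemma \ref{GroebnerBasis}(i) every $g\in G_{\neg i}$ is a polynomial combination of $f_1,\dots,f_m$, and since $f_k(a)=0$ the rows of $\frac{\partial G_{\neg i}}{\partial x}(a)$ are $F$-linear combinations of rows of $\frac{\partial f}{\partial x}(a)$; hence $T_a(X,F)\subseteq T_a(Z,F)$.

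Next I would reformulate the left-hand side. A word $w$ has a $T_a(X,F)$-error supported by $i$ precisely when $w=c+e$ with $c\in T_a(X,F)$ and ${\rm Supp}(e)\subseteq i$, i.e. when $w\in T_a(X,F)+V_i$; no weight bound intervenes because ${\rm Supp}(e)\subseteq i$ already forces ${\rm wt}(e)\le t$. Moreover $w-e\in T_a(X,F)$ reads $\frac{\partial f}{\partial x}(a)(w-e)^t=0$, and dropping the columns outside $i$ (against which $e$ pairs trivially) turns this into ``$e_i$ solves $\frac{\partial f}{\partial x_i}(a)\,y^t=\frac{\partial f}{\partial x}(a)\,w^t$'', which is exactly the last assertion of the proposition. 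The implication ``error supported by $i$'' $\Rightarrow$ ``$w\in T_a(Z,F)$'' is then immediate from the first paragraph, since $T_a(X,F)+V_i\subseteq T_a(Z,F)$.

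For the converse let $w=c+e$ be the given decomposition ($c\in T_a(X,F)$, ${\rm wt}(e)\le t$). From $w\in T_a(Z,F)$ and $c\in T_a(X,F)\subseteq T_a(Z,F)$ we get $e=w-c\in T_a(Z,F)$, hence $e_{\neg i}\in T_{\Pi_i(a)}(\overline{\Pi_i(X)},F)$ with ${\rm wt}(e_{\neg i})\le t$. The hypothesis $d(T_a(X,F))>t$, via Lemma \ref{FiniteSeparableAndEtalePuncturings}(i), shows that $0$ is the only word of $T_a(X,F)$ supported in $i$, i.e. $(d\Pi_i)_a\colon T_a(X,F)\to T_{\Pi_i(a)}(\overline{\Pi_i(X)},F)$ is injective. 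If this can be promoted to surjectivity, then $e_{\neg i}=(d\Pi_i)_a(c')$ for some $c'\in T_a(X,F)$, so $e-c'$ lies in $V_i$ and $w-(e-c')=c+c'\in T_a(X,F)$; thus $e-c'$ is a $T_a(X,F)$-error of $w$ supported by $i$, which with the second paragraph closes the equivalence, and substituting it into $\frac{\partial f}{\partial x}(a)(w-(e-c'))^t=0$ recovers the linear system once more.

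The step I expect to be the real obstacle is the surjectivity of $(d\Pi_i)_a$ onto $T_{\Pi_i(a)}(\overline{\Pi_i(X)},F)$: the hypothesis $d(T_a(X,F))>t$ supplies only injectivity, and Lemma \ref{FiniteSeparableAndEtalePuncturings}(ii) supplies surjectivity only at points $a\in{\rm Etale}(\Pi_i)\cap\Pi_i^{-1}(\Pi_i(X)^{\rm smooth})$ — equivalently, one needs $\Pi_i(a)$ to be a smooth point of $\overline{\Pi_i(X)}$, so that the chain $k=\dim X\le\dim T_a(X,F)=\dim (d\Pi_i)_a T_a(X,F)\le\dim T_{\Pi_i(a)}(\overline{\Pi_i(X)},F)=\dim\Pi_i(X)\le k$ collapses to equalities and the injection becomes an isomorphism. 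I would therefore run the converse at such an $a$, where $d(T_a(X,F))>t$ serves precisely to make $\Pi_i$ \'etale at $a$; the remaining computations (the passage to the linear system and the bookkeeping with $V_i$) are routine.
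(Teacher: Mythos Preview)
Your argument is essentially the paper's own proof. The paper likewise uses the Gr\"obner basis $G_{\neg i}$ from Lemma~\ref{GroebnerBasis} to identify $T_a(Z,F)$ as the code with parity-check matrix $\frac{\partial G_{\neg i}}{\partial x}(a)=\cG_i(a)\frac{\partial f^t}{\partial x}(a)$, observes that the $i$-columns vanish so that $V_i\subseteq T_a(Z,F)$ and $T_a(X,F)\subseteq T_a(Z,F)$, gets the forward implication from $T_a(X,F)+V_i\subseteq T_a(Z,F)$, and for the converse lifts $\Pi_i(e)\in T_{\Pi_i(a)}(\overline{\Pi_i(X)},F)$ through the surjective differential $(d\Pi_i)_a$ to produce a new error supported on $i$. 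Your derivation of the linear system from $\frac{\partial f}{\partial x}(a)(w-e)^t=0$ is also exactly what the paper does.

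Your caveat about surjectivity is well placed and in fact sharper than the paper's treatment: the paper invokes Lemma~\ref{FiniteSeparableAndEtalePuncturings}(ii) to obtain surjectivity of $(d\Pi_i)_a$, writing that $a\in\Pi_i^{-1}(\Pi_i(X)^{\rm smooth})$, but this smoothness of $\Pi_i(a)$ is not a stated hypothesis of the proposition --- it is used tacitly. So you have correctly located the one step that is not fully supported by the hypotheses as written; the paper handles it exactly as you suggest, by assuming $\Pi_i(a)\in\Pi_i(X)^{\rm smooth}$ so that the dimension chain $k\le\dim T_a(X,F)\le\dim T_{\Pi_i(a)}(\Pi_i(X),F)=\dim\Pi_i(X)\le k$ collapses.
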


\begin{proof}

The Zariski tangent space $T_a (X, \FF_{q^{\delta (a)}})$ does not contain a non-zero word $v(a)$ of ${\rm Supp} (v(a))  \subseteq i \in \Sigma _t (1, \ldots , n)$, according to $t < d(T_a (X, \FF_{q^{\delta (a)}})) $.
By Lemma \ref{FiniteSeparableAndEtalePuncturings} (i), one concludes that $\Pi _i : X \rightarrow \Pi _i (X)$ is etale at
$a \in \Pi _i ^{-1} ( \Pi _i (X))^{\rm smooth} \cap \FF_{q^{\delta (a)}}^n$.
Now Lemma \ref{FiniteSeparableAndEtalePuncturings} (ii)  applies to provide the surjectiveness of  the  differential
\begin{equation}   \label{DifferentialPuncturingAta}
(d \Pi _i) _a : T_a (X,F) \longrightarrow T_{\Pi _i (a)} ( \Pi _i (X), F).
\end{equation}

Let $G_{i, \neg i}$ be a Groebner basis of $I(X, \FF_q) = \langle f_1, \ldots , f_m \rangle _{\FF_q}$ with respect to a lexicographic order with
 $x_i > _{\rm lex} x_{\neg}$, obtained by Buchberger's algorithm.
 If $G_{\neg i} := G_{i, \neg i} \cap \FF_q [ x_{\neg i}] = \{ h_1, \ldots , h_l \}$ and
  $f = (f_1, \ldots , f_m) \in M_{1 \times m} ( \FF_q [ x_1, \ldots , x_n])$ then by Lemma \ref{GroebnerBasis} (i) there exist
   $g_1, \ldots , g_l \in M_{1 \times m} ( \FF_q [ x_1, \ldots , x_n])$ with $h_i = g_i f^t$ for $\forall  1 \leq i \leq l$.
  Consider the matrix $\cG_i \in M_{l \times m} ( \FF_q [ x_1, \ldots , x_n])$ with rows $g_1, \ldots   g_l$.
We claim  that the existence of $e \in \FF_{q^{\delta (a)}}^n$ with ${\rm Supp} (e) \subseteq i$ and $w - e \in T_a (X, \FF_{q^{\delta (a)}})$ is equivalent to
\[
\cG _i (a) \frac{\partial f^t}{\partial x} (a) w^t = \cG _i (a) \frac{\partial f^t}{\partial x} (a) e^t =0.
\]
To this end, note that $\cG _i f^t \in M_{l \times 1} ( \FF_q [ x_1, \ldots , x_n])$ consists of the entries of $G_{\neg i}$, arranged in a column.
Therefore
\[
\cG _i (a) \frac{\partial f^t}{\partial x} (a) =   \frac{\partial  (\cG _i f^t)}{\partial x} (a) =  \frac{\partial G_{\neg i}}{\partial x} (a) =
\left( 0  \ \ \frac{\partial G_{\neg i}}{\partial x_{\neg i} } (a) \right) = \left( 0 \ \ \cG _i (a) \frac{\partial f^t}{\partial x_{\neg i}} (a) \right),
\]
according to $f^t (a) = 0 _{m \times 1}$, whereas
\[
\cG _i (a) \frac{\partial f^t}{\partial x} (a) e^t = \cG _i (a) \frac{\partial f^t}{\partial x_{\neg i}} (a) \Pi _i (e) ^t.
\]
Any $e \in \FF_{q^{\delta (a)}}^n$ with ${\rm Supp} (e) \subseteq i$ has  $\Pi _i (e) = 0^{n-t}$, so that
 $\cG _i  (a) \frac{\partial f^t}{\partial x} (a) e^t =0$.
Conversely, if
\[
0 = \cG_i (a) \frac{\partial f^t}{\partial x} (a) e^t = \frac{\partial G_{\neg i} }{\partial x_{\neg i}} (a) \Pi _i (e)^t
\]
for some $e = ( \Pi _{\neg i} (e), \Pi _i (e)) \in \FF_{q^{\delta (a)}}^n$ then $\Pi _i (e) \in T_{\Pi _i (a)} ( \overline{\Pi _i (X)}, \FF_{q^{\delta (a)}})$.
By the surjectiveness of (\ref{DifferentialPuncturingAta}), there exists $v_o = (\Pi _{\neg i} (v_o), \Pi _i (e)) \in T_a (X, \FF_{q^{\delta (a)}})$.
Now,
\[
e_o := e - v_o = (\Pi _{\neg i} (e-v_o), 0)
\]
 has ${\rm Supp} (e_o) \subseteq i$ and $w - e_o = w - e + v_o \in T_a (X, \FF_{q^{\delta (a)}})$.

The    polynomials $h = \{ h_1, \ldots , h_l \} = G_{\neg i} \subset \FF_q [ x_{\neg i}]$ have Jacobian matrix
\[
\frac{\partial h}{\partial x} (a) = \frac{\partial G_{\neg i}}{\partial x} (a) = \frac{\partial}{\partial x} \left( \cG_i (x) f^t (x) \right) (a) = \cG_i (a) \frac{\partial f}{\partial x} (a)
\]
at $a \in X$.
 As a result,  $w \in \FF_{q^{\delta (a)}} ^n$ has a $T_a (X, \FF_{q^{\delta (a)}})$-error, supported by $i$ exactly when
 $\frac{\partial h}{\partial x} (a) w^t = 0 _{l \times 1}$.
By Lemma \ref{GroebnerBasis}  (ii), the absolute ideal
 $I( \Pi _i (X), \overline{\FF_q}) = I ( \overline{\Pi _i (X)}, \overline{\FF_q}) = \langle G_{\neg i} \rangle _{\overline{\FF_q}} =  \langle h_1, \ldots , h_l \rangle _{\overline{\FF_q}}$ of $\Pi _i (X)$ is generated by $h_1, \ldots , h_l \in \FF_q [ x_1, \ldots , x_n]$.
 The absolute ideal of the cylinder
 $\Pi _i ^{-1} ( \overline{\Pi _i (X)})$ is the extension
\[
 I( \Pi _i ^{-1} ( \overline{\Pi _i (X)})) = I ( \overline{\Pi _i (X)}, \overline{\FF_q}) \otimes _{\overline{\FF_q}} \overline{\FF_q} [ x_1, \ldots , x_n]
 \]
of $  I ( \overline{\Pi _i (X)}, \overline{\FF_q})$  to an ideal  of $\overline{\FF_q} [ x_1, \ldots , x_n]$.
Therefore, the Zariski tangent space to $\Pi _i ^{-1} ( \overline{\Pi _i (X)})$ at $a \in X \subseteq \Pi _i ^{-1} ( \overline{\Pi _i (X)})$ is
\[
T_a ( \Pi _i ^{-1} ( \overline{\Pi _i (X)}), \FF_{q^{\delta (a)}}) = \left  \{ w \in \FF_{q^{\delta (a)}} ^n \ \ \vert \ \
\frac{\partial G_{\neg i}}{\partial x} (a) w^t = \frac{\partial h}{\partial x} (a) w^t =    0 _{|G_{\neg i}| \times 1}  \right \}.
\]
In such a way we have established that  there is $e \in \FF_{q^{\delta (a)}} ^n$ with ${\rm Supp} (e) \subseteq i$ and $w-e \in T_a (X, \FF_{q^{\delta (a)}})$ if and only if
 $w \in T_a ( \Pi _i ^{-1} ( \overline{\Pi _i (X)}), \FF_{q^{\delta (a)}})$.
Making use of $\frac{\partial f}{\partial x} (a) (w^t - e^t) = 0_{m \times 1}$ and $\Pi _i (e) = 0^{n-t}$, one concludes that
\[
\frac{\partial f}{\partial x_i} (a) e_i ^t = \frac{\partial f}{\partial x} (a) e^t = \frac{\partial f}{\partial x} (a) w^t.
\]

\end{proof}

Note that Proposition \ref{ErrorSupport} can be used for computing the   leaders of  those cosets
$ w + T_a (X, \FF_{q^{\delta (a)}}) \in  \FF_{q^{\delta (a)}} ^n / T_a (X, \FF_{q^{\delta (a)}})$,
 which  admit representatives  of weight less that the minimum distance of  $  T_a (X, \FF_{q^{\delta (a)}})$.

Here is an algorithm for simultaneous decoding of tangent codes.

\begin{corollary}  \label{SimultaneousDecoding}
Let $X \subset \overline{\FF_q}^n$ be  an irreducible affine variety, whose absolute ideal
$I(X, \overline{\FF_q}) = \langle f_1, \ldots , f_m \rangle _{\overline{\FF_q}}$ is generated by $f_1, \ldots , f_m \in \FF_q [ x_1, \ldots , x_n]$
and
\[
\pi : {\rm Err} (TX, t) := \coprod\limits _{a \in X^{( \geq 2t+1)}} {\rm Err} (T_a (X, \FF_{q^{\delta (a)}}), t) \longrightarrow X^{( \geq 2t+1)}
\]
be the bundle of the words with $TX$-error of weight $\leq t$.
Then any $w \in {\rm Err} (TX, t)$ can be decoded by the following algorithm:

{\rm Step 1:}  For any $i \in \Sigma _t (1, \ldots , n)$ apply Buchberger's algorithm and obtain a Groebner basis $G_{i \neg i}$ of
 $I(X, \FF_q) = \langle f_1, \ldots , f_m \rangle _{\FF_q}$ with respect to a lexicographic order of $\FF_q [ x_1, \ldots , x_n]$ with
  $x_i > _{\rm lex} x_{\neg i}$.
Single out the polynomials $G_{\neg i} := G_{i, \neg i} \cap \FF_q [ x_{\neg i} ]$ from $G_{i, \neg i}$, which do not depend on $x_{i_1}, \ldots , x_{i_t}$ for $i = \{ i_1, \ldots , i_t \}$.

{\rm Step 2:}  For arbitrary
\[
i, j \in \Sigma _t (1, \ldots , n)
\]
 compute the determinant
\[
\Delta  _{ji} := \det \frac{\partial f_j}{\partial x_i} \in \FF_q [ x_1, \ldots , x_n],
\]
 the inverse matrix
\[\left( \frac{\partial f_j}{\partial x_i} \right) ^{-1} \in M_{t \times t} ( \Delta _{ji} ^{-1} \FF_q [ x_1, \ldots , x_n])
\]
 and  the  product
 \[\left( \frac{\partial f_j}{\partial x_i} \right)  ^{-1} \frac{\partial f_j}{\partial x}
 \in M_{t \times n} \left( \Delta _{ji} ^{-1} \FF_q [ x_1, \ldots , x_n] \right).
 \]

{\rm Step 3:}  If $w \in {\rm Err} (TX,t)$ and $\pi (w) = a \in X^{( \leq 2t+1)}$ then compute the products
\[
\frac{\partial G_{\neg \gamma}}{\partial x_{\neg \gamma} }(a) w^t \ \ \mbox{  for } \ \ \gamma \in  \Sigma _t (1, \ldots , n)
\]
until you recognize the unique $i \in \Sigma _t (1, \ldots , n)$ with
\[
\frac{\partial G_{\neg i}}{\partial x_{\neg i}} (a) w^t = 0 _{|G_{\neg i}| \times 1}.
\]

{\rm Step 4:}  Plug in $a$ in $\Delta _{ji} \in \FF_q [ x_1, \ldots , x_n]$ and choose some $j \in \Sigma _t (1, \ldots , n)$ with $\Delta _{ji} (a) \neq 0$.
(For any $i \in \Sigma _t(1, \ldots , n)$ and $a \in X$, subject to the aforementioned properties, there exists at least one $j \in \Sigma _t (1, \ldots , n)$ with $\Delta _{ji} (a) \neq 0$.)

{\rm Step 5:}  The unique  $T_a (X, \FF_{q^{\delta (a)}})$-error
\[
e = ( \Pi _{\neg i} (e), \Pi _i (e)) = ( \Pi _{\neg i}(e), 0_{1 \times (n-t)} ) \in \FF_{q^{\delta (a)}} ^n
\]
 of $w \in {\rm Err} (T_a (X, \FF_{q^{\delta (a)}}), t) \subset \FF_{q^{\delta (a)}}^n$ has projection
\[
\Pi _{\neg i} (e) = (e_{i_1}, \ldots , e_{i_t}) = e_i :=
\left[ \left( \frac{\partial f_j}{\partial x_i} \right) ^{-1} \frac{\partial f_j}{\partial x} \right] (a) w^t \in \FF_{q^{\delta (a)}} ^t
\]
onto the components, labeled by $i$.
\end{corollary}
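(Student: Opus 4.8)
The plan is to verify that each of the five steps is well defined and that together they return the unique vector $e\in\FF_{q^{\delta(a)}}^n$ with ${\rm wt}(e)\le t$ and $w-e\in T_a(X,\FF_{q^{\delta(a)}})$; uniqueness of such $e$ is immediate from $a\in X^{(\ge 2t+1)}$, since two distinct weight-$\le t$ errors with equal syndrome would produce a nonzero word of $T_a(X,\FF_{q^{\delta(a)}})$ of weight $\le 2t<2t+1$. For Step~1 I would invoke only the termination of Buchberger's algorithm together with Lemma~\ref{GroebnerBasis}: part~(i) shows that every element of $G_{i,\neg i}$, hence of $G_{\neg i}$, is an $\FF_q[x_1,\ldots,x_n]$-combination of $f_1,\ldots,f_m$, while part~(ii) identifies $\langle G_{\neg i}\rangle_{\overline{\FF_q}}$ with the absolute ideal of $\overline{\Pi_i(X)}$ and hence of the cylinder $\Pi_i^{-1}(\overline{\Pi_i(X)})$, so that $\frac{\partial G_{\neg i}}{\partial x}(a)$ is a parity-check matrix of the tangent code $T_a(\Pi_i^{-1}(\overline{\Pi_i(X)}),\FF_{q^{\delta(a)}})$. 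Step~2 is well defined by Cramer's rule: $\bigl(\frac{\partial f_j}{\partial x_i}\bigr)^{-1}$ is the adjugate divided by $\Delta_{ji}=\det\frac{\partial f_j}{\partial x_i}$, so it has entries in $\Delta_{ji}^{-1}\FF_q[x_1,\ldots,x_n]$ and the product $\bigl(\frac{\partial f_j}{\partial x_i}\bigr)^{-1}\frac{\partial f_j}{\partial x}$ lies in $M_{t\times n}\bigl(\Delta_{ji}^{-1}\FF_q[x_1,\ldots,x_n]\bigr)$.

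The heart of Step~3 is Proposition~\ref{ErrorSupport}. Fix $w\in{\rm Err}(TX,t)$ with $\pi(w)=a\in X^{(\ge 2t+1)}$ and let $e$ be its error. Since $d(T_a(X,\FF_{q^{\delta(a)}}))\ge 2t+1>t$, Proposition~\ref{ErrorSupport} applies to each $\gamma\in\Sigma_t(1,\ldots,n)$ and shows that $w$ has a $T_a(X,\FF_{q^{\delta(a)}})$-error supported by $\gamma$ if and only if $w\in T_a(\Pi_\gamma^{-1}(\overline{\Pi_\gamma(X)}),\FF_{q^{\delta(a)}})$, i.e.\ (using the parity-check matrix recorded above) if and only if $\frac{\partial G_{\neg\gamma}}{\partial x_{\neg\gamma}}(a)\,w^t=0$. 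By uniqueness of $e$, the condition ``$w$ has an error supported by $\gamma$'' is equivalent to ${\rm Supp}(e)\subseteq\gamma$ (if $w-e'\in T_a(X,\FF_{q^{\delta(a)}})$ with ${\rm Supp}(e')\subseteq\gamma$ then ${\rm wt}(e')\le t$, forcing $e'=e$; the converse is trivial); since ${\rm wt}(e)\le t$, at least one such $\gamma$ exists, so the scan of Step~3 terminates and returns a $t$-tuple $i\supseteq{\rm Supp}(e)$ — the unique one, namely ${\rm Supp}(e)$, when ${\rm wt}(e)=t$, and any completion of ${\rm Supp}(e)$ otherwise.

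For Step~4 I would note that $d(T_a(X,\FF_{q^{\delta(a)}}))>t=|i|$ means $T_a(X,\FF_{q^{\delta(a)}})$ contains no nonzero word with support in $i$, so the $t$ columns of $\frac{\partial f}{\partial x}(a)$ indexed by $i$ are $\FF_{q^{\delta(a)}}$-linearly independent; hence ${\rm rk}\,\frac{\partial f}{\partial x_i}(a)=t$ and some $t\times t$ submatrix $\frac{\partial f_j}{\partial x_i}(a)$ is invertible, i.e.\ $\Delta_{ji}(a)\ne0$. Finally, in Step~5 the inclusion ${\rm Supp}(e)\subseteq i$ gives $\Pi_i(e)=0_{(n-t)\times 1}$, and Proposition~\ref{ErrorSupport} supplies $\frac{\partial f}{\partial x_i}(a)\,e_i^t=\frac{\partial f}{\partial x}(a)\,w^t$; restricting to the rows indexed by $j$ and multiplying by $\bigl(\frac{\partial f_j}{\partial x_i}(a)\bigr)^{-1}$ yields $e_i^t=\bigl[\bigl(\frac{\partial f_j}{\partial x_i}\bigr)^{-1}\frac{\partial f_j}{\partial x}\bigr](a)\,w^t$, which is the formula of Step~5, while the coordinates of $e$ outside $i$ vanish, so $e$ is fully recovered. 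The step that needs the most care is Step~4 — recognizing that the hypothesis $a\in X^{(\ge 2t+1)}$ is exactly what forces some $t\times t$ block $\frac{\partial f_j}{\partial x_i}(a)$ to be nonsingular — together with the observation that the possible non-uniqueness of $i$ in Step~3 is harmless, because the reconstruction in Step~5 recovers $e$ from any $t$-tuple containing ${\rm Supp}(e)$.
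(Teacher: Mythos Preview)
Your proposal is correct and follows exactly the route the paper intends: the corollary is stated without proof in the paper, and your argument supplies the implicit details by invoking Lemma~\ref{GroebnerBasis} for Step~1 and Proposition~\ref{ErrorSupport} for Steps~3--5, together with the standard linear-independence observation for Step~4. Your remark that the $t$-tuple $i$ in Step~3 need not be literally unique when ${\rm wt}(e)<t$, but that any choice containing ${\rm Supp}(e)$ works in Step~5, is a useful clarification of a minor imprecision in the corollary's wording.
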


\subsection{Lower semi-continuity of the generic  minimum distance }

If $X \subseteq \overline{\FF _q}^n$ is an affine variety with absolute ideal $I(X, \overline{\FF_q}) = \langle f_1, \ldots , f_m \rangle _{\overline{\FF_q}}$ then $k = \dim X \geq n-m$, as far as any polynomial relation $f_j$ on $X$ decreases the dimension at most by $1$.
If $I(X, \overline{\FF_q})$ admits a generating set with minimal cardinality $m = n-k$, then $X$ is called a complete intersection.

Suppose that the generators $f_j$ of  $I(X, \overline{\FF_q}) = \langle f_1, \ldots , f_{n-k} \rangle _{\overline{\FF _q}}$  are of the form
$f_j = \sum\limits _{\nu \in ( \ZZ ^{\geq 0}) ^n} \alpha _{j, \nu} x^{\nu} \in \FF _q[x]$, where $x^{\nu} := x_1 ^{\nu _1} \ldots x_n ^{\nu _n}$.
The set
\[
S(f_j) := \{ \nu \in ( \ZZ ^{\geq 0} ) ^n \setminus \{ 0^n \} \ \ \vert \ \  \alpha _{j, \nu} \neq 0 \}
 \]
 will be referred to as the non-constant support of $f_j$.
For an arbitrary smooth point $a \in X^{\rm smooth}$ let
\[
F_j (y_j,x) := \sum\limits _{\nu \in S(f_j)} y_{j, \nu} (x^{\nu} - a^{\nu}) \in \FF _{q^{\delta (a)}} [ x, y_j]
\]
and note that $F_j (y_j,a) \equiv 0 \in \FF _{q^{\delta (a)}} [y_j] = \FF _{q^{\delta (a)}} [ y_{j, \nu} \, \vert \, \nu \in S(f_j) ]$.
The maps $S(f_j) \rightarrow \overline{\FF_q}$  are in a bijective correspondence with   the collections
$\gamma _j = \{ \gamma _{j, \nu} \in \overline{\FF _q} \ \ \vert \ \  \nu \in S(f_j) \}$ of  their images.
That is why the set $\overline{\FF _q} ^{S(f_j)}$ of the maps $S(f_j) \rightarrow \overline{\FF _q}$ can be identified with the affine space of dimension $|S(f_j)|$ over $\overline{\FF _q}$.
The product
\[
\cA =  \cA (f,a) := \overline{\FF _q} ^{S(f_1)} \times \ldots \times \overline{\FF _q} ^{S(f_{n-k})}
\]
parameterizes the polynomials $F( \gamma, x) = \{ F_1 ( \gamma _1, x), \ldots , F_m ( \gamma _{n-k},x) \}$ and the affine varieties
$X_{\gamma} := V( F( \gamma ,x)) = V( F_1 ( \gamma _1,x), \ldots , F_{n-k} ( \gamma _{n-k},x)) \subset \overline{\FF _q} ^n$ through $a$.
We are going to show that $\dim X_{\gamma} = \dim X = k$  for "almost all"  $ \gamma \in \cA$ and the minimum distance
 $d(T_a (X_{\gamma}, \FF_{q^{\delta (a)}}) \geq d$   at "almost all" the points of
 $ \{ (\gamma, a) \in \cA \times \overline{\FF_q}^n \ \ \vert \ \  a \in X_{\gamma} \}$.

For an arbitrary polynomial
\[
g(y,x) \in \FF _{q^{\delta (a)}} [y,x] = \FF _{q^{\delta (a)}} [ x_1, \ldots , x_n, y_{j, \nu _j} \ \ \vert \ \ \nu _j \in S(f_j), 1 \leq j \leq n-k],
\]
  let us denote by  $\cW _{\cA} (g(y,a)) := \{ \gamma \in \cA \ \ \vert \ \  g( \gamma , a) =0 \} \subseteq \cA$  the hypersurface,  cut  by the polynomial $g(y,a) \in \FF _{q^{\delta (a)}} [ y_{j, \nu _j} \ \ \vert \ \ \nu _j \in S(f_j), \ \ 1 \leq j \leq n-k]$ for some $a \in \overline{\FF_q}^n$.
Note also that for  any $\gamma \in \cA$ the polynomial $g( \gamma ,x) \in \FF _{q^{\delta (\gamma ,a)}} [ x_1, \ldots , x_n]$ determines a hypersurface
$V(g(\gamma ,x)) := \{ b \in \overline{\FF _q}^n \ \ \vert \ \  g( \gamma, b) =0 \} \subseteq \overline{\FF _q} ^n$.

\begin{proposition}    \label{LowerSemicontinuityMinDist}
Let $X / \FF _q \subset \overline{\FF _q} ^n$ be an   complete intersection, defined over $\FF _q$ and $a \in X^{\rm smooth}$ be a smooth point of $X$, at which the Zariski tangent space $T_a (X, \FF _{q^{\delta (a)}} )$ is of minimum distance $d(T_a (X, \FF _{q^{\delta (a)}} )) \geq d$.
Then there exist non-zero polynomials  $g(y,x), h(y,x) \in \FF _{q^{\delta (a)}} [y,x] \setminus \{ 0 \}$, such that  $\dim X_{\gamma} = \dim X$,
 $a \in X_{\gamma } ^{\rm smooth}$    and
\[
\emptyset \neq  X_{\gamma} \setminus V(g(\gamma, x)) \subseteq X_{\gamma} ^{( \geq d)} \ \ \mbox{   for all } \ \
\gamma \in \cA \setminus \cW _{\cA} (g(y,a)h(y,a))
 \]
\end{proposition}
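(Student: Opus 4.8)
The plan is to work over the fixed smooth point $a \in X^{\rm smooth}$ and exhibit the two polynomials $g(y,x)$ and $h(y,x)$ explicitly in terms of the Jacobian of the family $F(\gamma,x)$. Recall that the tangent space $T_b(X_\gamma, \FF_{q^{\delta(b)}})$ is the kernel of the Jacobian matrix $\frac{\partial F}{\partial x}(\gamma, b) \in M_{(n-k)\times n}$, whose entries are polynomials in $(\gamma, x)$ since each $F_j(\gamma_j, x) = \sum_{\nu \in S(f_j)} \gamma_{j,\nu}(x^\nu - a^\nu)$ is linear in $\gamma_j$. The key observation is that the "base point" $\gamma^\ast \in \cA$ recovering the original variety, namely $\gamma^\ast_{j,\nu} := \alpha_{j,\nu}$, gives $X_{\gamma^\ast} = X$ near $a$ (more precisely, $\frac{\partial F}{\partial x}(\gamma^\ast, a) = \frac{\partial f}{\partial x}(a)$, so $T_a(X_{\gamma^\ast}, \FF_{q^{\delta(a)}}) = T_a(X, \FF_{q^{\delta(a)}})$). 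Hence the desired inequalities hold at the single point $(\gamma^\ast, a)$, and the task is to spread them to a Zariski dense subset of $\cA$.

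\medskip

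\noindent\textbf{Step 1 (the polynomial $h$: controlling the dimension).} Since $a \in X^{\rm smooth}$ and $X$ is a complete intersection cut by $n-k$ equations, $\frac{\partial f}{\partial x}(a)$ has rank $n-k$; pick a column set $\alpha \in \Sigma_{n-k}(1,\dots,n)$ with $\det\frac{\partial f_{\cdot}}{\partial x_\alpha}(a) \neq 0$. Define $h(y,x) := \det\left(\frac{\partial F}{\partial x_\alpha}(y,x)\right) \in \FF_{q^{\delta(a)}}[y,x]$. Then $h(\gamma^\ast, a) \neq 0$, so $h(y,a) \not\equiv 0$, and for every $\gamma \in \cA$ with $h(\gamma, a) \neq 0$ the Jacobian $\frac{\partial F}{\partial x}(\gamma, a)$ already has full rank $n-k$ at $a$, so $a$ is a smooth point of $X_\gamma$ and, since $X_\gamma$ is defined by $n-k$ equations, the component of $X_\gamma$ through $a$ has dimension exactly $k$; this gives $a \in X_\gamma^{\rm smooth}$ and $\dim X_\gamma = \dim X = k$ (after restricting to that component — or one argues all components have dimension $\geq k$ and the one through $a$ has dimension $k$). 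In particular $X_\gamma \setminus V(g(\gamma,x))$ is nonempty once we know $a$ lies in it, which Step 2 arranges by forcing $g(y,a) \neq 0$ to be avoided; but we must instead ensure $a \in X_\gamma \setminus V(g(\gamma,x))$, i.e. $g(\gamma, a) \neq 0$, which is exactly the condition $\gamma \notin \cW_\cA(g(y,a))$.

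\medskip

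\noindent\textbf{Step 2 (the polynomial $g$: controlling the minimum distance, via Proposition~\ref{MinimumDistance}).} Apply Proposition~\ref{MinimumDistance}(i) to the family member $X_\gamma$: for $d \leq n-k+1$ (note $d \leq n-k+1$ holds since $d(T_a(X,\FF_{q^{\delta(a)}})) \geq d$ forces $d \leq n+1-\dim X = n-k+1$ by the Singleton bound applied at $a$), the locus $X_\gamma^{(\geq d)} = X_\gamma^{(>d-1)}$ is the complement in $X_\gamma$ of the vanishing of a single polynomial, namely $g(\gamma, x) := \prod_{i \in \Sigma_{d-1}(1,\dots,n)} \det\frac{\partial F_{\varphi(i)}}{\partial x_i}(\gamma, x)$ for a suitable choice of maps $\varphi$ — more carefully, by the computation \eqref{LocusMinDistAtLeastD} one has $X_\gamma^{(\geq d)} = X_\gamma \setminus V(g(\gamma,x))$ where $g(y,x)$ is the product over $i \in \Sigma_{d-1}(1,\dots,n)$ of chosen $(d-1)\times(d-1)$ minors of $\frac{\partial F}{\partial x}(y,x)$, the choices being fixed once and for all so that $g(\gamma^\ast, a) \neq 0$. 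Such a choice exists precisely because $d(T_a(X,\FF_{q^{\delta(a)}})) \geq d$ means every $(d-1)$-subset of columns of $\frac{\partial f}{\partial x}(a) = \frac{\partial F}{\partial x}(\gamma^\ast, a)$ is linearly independent, hence each $\frac{\partial F}{\partial x_i}(\gamma^\ast, a)$ with $i \in \Sigma_{d-1}(1,\dots,n)$ has a nonzero maximal minor; picking that minor for each $i$ and taking the product gives $g$ with $g(\gamma^\ast, a) \neq 0$, so $g(y,a) \not\equiv 0$.

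\medskip

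\noindent\textbf{Step 3 (assembly).} Take $\gamma \in \cA \setminus \cW_\cA(g(y,a)h(y,a))$, i.e. $g(\gamma,a) \neq 0$ and $h(\gamma,a) \neq 0$. By Step~1, $\dim X_\gamma = k$ and $a \in X_\gamma^{\rm smooth}$. Since $g(\gamma, a) \neq 0$ we have $a \in X_\gamma \setminus V(g(\gamma,x))$, so this set is nonempty. By Step~2 and formula \eqref{LocusMinDistAtLeastD}, $X_\gamma \setminus V(g(\gamma,x)) \subseteq X_\gamma^{(\geq d)}$. This is exactly the asserted chain $\emptyset \neq X_\gamma \setminus V(g(\gamma,x)) \subseteq X_\gamma^{(\geq d)}$, valid for all $\gamma$ outside the hypersurface $\cW_\cA(g(y,a)h(y,a))$, which is a proper hypersurface in $\cA$ because $g(y,a)h(y,a) \not\equiv 0$ (each factor is nonzero at $\gamma^\ast$).

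\medskip

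\noindent\textbf{Expected main obstacle.} The delicate point is the bookkeeping in Step~1–2 that the \emph{same} fixed choice of minors works, i.e. that nonvanishing at the single base point $(\gamma^\ast, a)$ is what licenses the use of Proposition~\ref{MinimumDistance}(i) uniformly. One must be careful that Proposition~\ref{MinimumDistance}(i) is stated for an irreducible variety with $f_i \in \FF_q[x]$ generating the absolute ideal, whereas here $X_\gamma$ need not be irreducible and the $F_j$ have coefficients in $\FF_{q^{\delta(\gamma,a)}}$; however, the computation \eqref{LocusMinDistAtLeastD} itself only uses that $F_1,\dots,F_{n-k}$ lie in $I(X_\gamma, \overline{\FF_q})$ and that $\dim X_\gamma \geq n-(n-k) = k$, so it transfers verbatim to each component of $X_\gamma$ through $a$. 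A secondary subtlety is confirming $a \in X_\gamma$ for all $\gamma \in \cA$ — this is immediate since $F_j(\gamma_j, a) = \sum_\nu \gamma_{j,\nu}(a^\nu - a^\nu) = 0$ — and that $\delta$-field issues (the definition field of $\gamma$) do not affect the minimum distance, which by the preliminaries depends only on linear dependencies among columns of $\frac{\partial F}{\partial x}(\gamma, a)$ over any field containing its entries.
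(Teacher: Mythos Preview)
Your proposal is correct and follows essentially the same approach as the paper's own proof: both define $h(y,x)$ as a single $(n-k)\times(n-k)$ minor of $\frac{\partial F}{\partial x}(y,x)$ that is nonzero at the base point $(\gamma^\ast,a)$, and $g(y,x)$ as the product over all $i\in\Sigma_{d-1}(1,\ldots,n)$ of a chosen $(d-1)\times(d-1)$ minor $\det\frac{\partial F_{\rho(i)}}{\partial x_i}$ nonvanishing at $(\gamma^\ast,a)$, then assemble exactly as you do in Step~3. Your remarks on the irreducibility/ideal-generation subtlety are a useful gloss that the paper leaves implicit.
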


\begin{proof}

Let $I(X, \overline{\FF_q}) = \langle f_1, \ldots , f_{n-k} \rangle _{\overline{\FF_q}}$ for some $f_1, \ldots , f_{n-k} \in \FF_q[x_1, \ldots , x_n]$
and $F_j ( y_j,x) := \sum\limits _{\nu \in S(f_j)} y_{j, \nu} (x^{\nu} - a^{\nu}) \in \FF_{q^{\delta (a)}} [x, y_j]$ for the non-constant supports
 $S(f_j) \subset ( \ZZ ^{\geq 0} ) ^n \setminus \{ 0^n \}$ of $f_j$ with $1 \leq j \leq n-k$.
By its very definition,  the affine variety $X_{\alpha} := V(F_1 (\alpha, x), \ldots , F_{n-k} (\alpha, x))$ coincides with $X$.
The point $a \in X_{\alpha}$ is smooth exactly when the Jacobian matrix
$\frac{\partial (F_1(\alpha, x), \ldots , F_{n-k} (\alpha,x))}{\partial x}$ is of maximal rank ${\rm rk} \frac{\partial F( \alpha,x)}{\partial x} (a) = n-k$ at $a$.
That implies the existence of $j  \in \Sigma _{n-k} (1, \ldots , n)$ with  $\det \frac{\partial F( \alpha,x)}{\partial x_j}(a) \neq 0$.
The polynomial
\[
h(y,x) := \det \frac{\partial F(y,x)}{\partial x_j} \in \FF _{q^{\delta (a)}} [y,x]
\]
cuts a proper hypersurface $\cW _{\cA} (h(y,a)) \varsubsetneq \cA$, as far as $\alpha \in \cA \setminus \cW _{\cA} (h(y,a))$.
By the very construction, $a \in X_{\gamma}$ for all $\gamma \in \cA$.
Note that an arbitrary  affine variety $X_{\gamma} = V( F_1 (\gamma,x), \ldots , F_{n-k} (\gamma,x))$ is of dimension $\dim X_{\gamma} \geq k$.
 According to  $\dim   T_a (X_{\gamma}, \FF_{q^{\delta (\gamma,a)}}) \geq \dim X_{\gamma}$,
 it suffices to show that $\dim   T_a (X_{\gamma}, \FF _{q^{\delta (\gamma,a)}} ) \leq k$ for
$\forall \gamma \in \cA \setminus \cW _{\cA} (h(y,a))$ in order to conclude that $\dim X_{\gamma} =k$ and $a \in X^{\rm smooth} _{\gamma}$ is a smooth point of $X_{\gamma}$ for all $\gamma \in \cA \setminus \cW _{\cA} (h(y,a))$.
Indeed, $\gamma \in \cA \setminus \cW _{\cA} (h(y,a))$ amounts to
\[
\left( \det \frac{\partial F( \gamma ,x)}{\partial x_j} \right) (a) = h(\gamma ,a) \neq 0.
\]
According to $F_1 (\gamma ,x), \ldots , F_{n-k} (\gamma, x) \in I(X_{\gamma}, \overline{\FF _q})$, the   tangent  code
 $T_a (X_{\gamma}, \FF _{q^{\delta (\gamma,a)}} )$ is contained in the $\FF _{q^{\delta (\gamma,a)}}$-linear code $C_{\gamma,a}$ with parity check matrix $\frac{\partial F(\gamma,x)}{\partial x} (a) \in M_{(n-k) \times n} ( \FF _{q^{\delta (\gamma,a)}})$.
 Bearing in mind that $ n-k \geq {\rm rk} \frac{\partial F ( \gamma,x)}{\partial x} (a)  \geq  {\rm rk} \frac{\partial F( \gamma,x)}{\partial x_j} (a) = n-k$, one concludes that $\dim   C_{\gamma,a} =k$ and
 \[
 \dim   T_a (X_{\gamma}, \FF _{q^{\delta (\gamma,a)}} ) \leq \dim   C_{\gamma,a} =k.
 \]
Thus, $\dim X_{\gamma} = k$ for all $\gamma \in \cA \setminus \cW _{\cA} (h(y,a))$.

It suffices to construct a polynomial $g(y,x) \in \FF _{q^{\delta (a)}} [y,x]$ with $g(\alpha,a) \neq 0$, such that for any $\gamma \in \cA$ the points
 $b \in X_{\gamma}$ with $d( T_b (X_{\gamma}, \FF _{q^{\delta (\gamma,a,b)}} )) <d$ belong to the hypersurface $V( g(\gamma,x)) \subset \overline{\FF _q}^n$.
Then $\cA \setminus \cW _{\cA} (g(y,a) h(y,a) ) \ni \alpha$ is non-empty and $X_{\gamma} \setminus V( g(\gamma,x)) \ni a$ is non-empty for any
$\gamma \in \cA \setminus \cW _{\cA} (g(y,a) h(y,a))$.
Moreover, $d( T_b (X_{\gamma}, \FF _{q^{\delta ( \gamma ,a,b)}})) \geq d$ for any  $\gamma \in \cA \setminus \cW _{\cA} (g(y,a) h(y,a))$  and
  $b \in X_{\gamma} \setminus V( g( \gamma ,x))$.
Towards the construction of $g(y,x) \in \FF _{a^{\delta (a)}} [y,x]$ with the desired properties, note that
\[
X_{\gamma} ^{( <d)} \subseteq  Z_{\gamma, d} := \cup _{i \in \Sigma _{d-1} (1, \ldots ,n)}
V \left( \det \frac{\partial F_{\sigma} ( \gamma _{\sigma}, x)}{\partial x_i} \ \ \Big \vert  \sigma \in \Sigma _{d-1} (1, \ldots , n-k) \right),
\]
 according to  $F_1(\gamma _1,x), \ldots , F_{n-k} (\gamma _{n-k},x) \in I(X_{\gamma}, \overline{\FF_q})$.
 In particular, the point $a$ does not belong to  $ X_{\alpha} ^{( <d)} = Z_{\alpha, d}$ and
  for any $i \in \Sigma _{d-1}(1, \ldots , n)$ there exists $\rho (i) \in \Sigma _{d-1} (1, \ldots , n-k)$ with
 $\det \frac{\partial F_{\rho (i)} ( \alpha _{\rho (i)},x)}{\partial x_i} (a) \neq 0$.
 As a result,
 \[
 a \not \in \cup _{i \in \Sigma _{d-1} (1, \ldots n)} V \left( \det \frac{\partial F _{\rho (i)} ( \alpha _{\rho (i)}, x)}{\partial x_i} \right) =
 V \left( \prod\limits _{i \in \Sigma _{d-1} (1, \ldots n) } \det \frac{\partial F_{\rho (i)} ( \alpha _{\rho (i)},x)}{\partial x_i} \right).
 \]
If
\[
g(y,x) := \prod\limits _{i \in \Sigma _{d-1} (1, \ldots , n)} \det \frac{\partial F_{\rho (i)} ( y_{\rho (i)}, x)}{\partial x_i} \in \FF _{q^{\delta (a)}} [y,x]
\]
then $g( \alpha,a) \neq 0$.
Straightforwardly,
\begin{align*}
X_{\gamma} ^{( <d)} := \{ b \in X_{\gamma} \ \ \vert \ \  d( T_b ( X_{\gamma}, \FF _{q^{\delta (b)}} )) <d \}  \subseteq  \\
Z_{\gamma ,d} := \cup _{i \in \Sigma _{d-1}(1, \ldots , n)} V \left( \det \frac{\partial F _{\sigma} ( \gamma _{\sigma}, x)}{\partial x_i} \ \ \Big \vert \ \ \sigma \in \Sigma _{d-1} (1, \ldots , n-k) \right) \subseteq  \\
\cup _{i \in \Sigma _{d-1} (1, \ldots , n)} V \left( \det \frac{\partial F _{\rho  (i)} ( \gamma _{\rho (i)},x)}{\partial x_i} \right) =
 V \left( g( \gamma , x) \right).
\end{align*}

\end{proof}

\subsection{ Gradient codes }

Let $X \subset \overline{\FF_q}^n$ be an affine variety, whose absolute ideal $I(X, \overline{\FF_q}) = \langle f_1, \ldots , f_m \rangle _{\overline{\FF_q}}$ is generated by $f_1, \ldots , f_m \in \FF _q [ x_1, \ldots , x_n]$.
For an arbitrary  constant  field $\FF _q \subseteq F \subseteq \overline{\FF_q}$, the Zariski tangent bundle of $X$  over $F$ is defined as the disjoint union
\[
T(X,F)  := \coprod\limits _{a \in X(F)} T_a (X,F)
\]
of the Zariski tangent spaces to $X$ over $F$ at the $F$-rational points $a \in X(F) := X \cap F^n$.
Note that the fibres of $T(X,F)$ are not supposed to be of one and a same dimension over $F$.
The union
\[
T(X,F)^{\perp} := \coprod\limits _{a \in X(F)} T_a (X, F)^{\perp}
\]
of the dual codes $T_a (X, F) ^{\perp}$  of $T_a (X, F)$ is referred to as the dual of the Zariski tangent bundle to $X$ over $F$.

As far as the absolute ideal
\[
I(X, \overline{\FF_q}) := \{ g \in \overline{\FF_q} [ x_1, \ldots , x_n] \ \ \vert \ \  g(a) =0, \ \ \forall a \in X \}
\]
of $X$ is generated by polynomials $f_1, \ldots , f_m \in \FF _q [ x_1, \ldots , x_n]$ with coefficients from $\FF_q$, for an arbitrary constant  field
 $\FF _q \subseteq F \subseteq \overline{\FF_q}$ the ideal
\[
I(X, F) := \{ g \in F[x_1, \ldots x_n] \ \ \vert \ \  g(a) =0, \ \ \forall a \in X \}
\]
of $X$ over $F$ is generated by $f_1, \ldots , f_m \in F[ x_1, \ldots , x_n]$, i.e.,  $I(X, F) = \langle f_1, \ldots , f_m \rangle _F$.
 For any   $g \in I(X, F)$   the ordered $n$-tuple  of polynomials
\[
{\rm grad} (g) := \left( \frac{\partial g}{\partial x_1}, \ldots , \frac{\partial g}{\partial x_n} \right) \in F[ x_1, \ldots , x_n] ^n
\]
 is called the gradient of $g$.
We consider the $F$-linear space
\[
{\rm Grad}  I(X, F) := \{ {\rm grad} (g) \ \ \vert \ \  g \in I(X, F) \} \subset F[ x_1, \ldots , x_n] ^n
\]
of the gradients of the polynomials from $I(X, F)$  and   its evaluations
\[
{\rm grad} _a I(X, F) := \{ {\rm grad} (g) (a)   \ \ \vert \ \  g \in I(X, F) \} \subseteq F^n
\]
   at the $F$-rational points  $a \in X(F) := X \cap F^n$ of $X$.
That allows to form  the  vector bundle
\[
{\rm grad} I(X, F) = \coprod\limits _{a \in X(F)} {\rm grad} _a I(X, F) \subset X(F) \times F^n
\]
over $X(F)$, contained in the trivial bundle $X(F) \times F^n$.
The fibres ${\rm grad}_a I(X,F)$ of ${\rm grad} I(X,F)$ are not supposed to be of one and a  same dimension.
Nevertheless, we   say that
\[
{\rm grad} I(X, F) \longrightarrow X(F)
\]
the gradient bundle of $X$ (or of $I(X,F)$) over $F$.

\begin{lemma}    \label{GradientBundleIsDualToTangentBundle}
Let $X \subset \overline{\FF_q}^n$ be an affine variety with $I(X, \overline{\FF_q}) = \langle f_1, \ldots , f_m \rangle _{\overline{\FF_q}}$ for some
 $f_1, \ldots , f_m \in \FF_q [ x_1, \ldots , x_n]$ and $\FF_q \subseteq F \subseteq \overline{\FF_q}$ be a constant field.
Then the dual
\begin{equation}    \label{DualTangentGradient}
T(X, F) ^{\perp} = {\rm grad} I(X,F)
\end{equation}
of the Zariski  tangent bundle $T(X,F)$ of $X$ over $F$ is the gradient bundle of $X$ (or of $I(X,F)$) over $X(F)$.
\end{lemma}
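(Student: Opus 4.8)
The plan is to establish the identity $T_a(X,F)^{\perp} = {\rm grad}_a I(X,F)$ at each individual $F$-rational point $a \in X(F)$ and then to take the disjoint union over all $a \in X(F)$. Everything reduces to the product rule for derivations together with the coordinate description of $T_a(X,F)$ recalled in Section 2.

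First I would use that $X$ is defined over $\FF_q \subseteq F$, so that the same polynomials $f_1, \ldots, f_m \in \FF_q[x_1, \ldots, x_n]$ generate $I(X,F) = \langle f_1, \ldots, f_m \rangle_F$, and that $T_a(X,F)$ is the $F$-linear code with parity check matrix $\frac{\partial f}{\partial x}(a) \in M_{m \times n}(F)$. Hence $T_a(X,F)^{\perp}$ is the row space of $\frac{\partial f}{\partial x}(a)$, i.e. the $F$-span of the vectors ${\rm grad}(f_i)(a) = \left( \frac{\partial f_i}{\partial x_1}(a), \ldots, \frac{\partial f_i}{\partial x_n}(a) \right)$, $1 \le i \le m$. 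Since $g \mapsto {\rm grad}(g)$ and evaluation at $a$ are $F$-linear, the set ${\rm grad}_a I(X,F)$ is an $F$-linear subspace of $F^n$; as each $f_i \in I(X,F)$, it contains every ${\rm grad}(f_i)(a)$ and therefore their span, which gives $T_a(X,F)^{\perp} \subseteq {\rm grad}_a I(X,F)$.

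For the reverse inclusion I would take an arbitrary $g \in I(X,F) = \langle f_1, \ldots, f_m \rangle_F$, write $g = \sum_{i=1}^{m} h_i f_i$ with $h_i \in F[x_1, \ldots, x_n]$, and apply the Leibniz rule $\frac{\partial g}{\partial x_j} = \sum_{i=1}^{m} \left( \frac{\partial h_i}{\partial x_j} f_i + h_i \frac{\partial f_i}{\partial x_j} \right)$. Evaluating at $a \in X$, where $f_i(a) = 0$ for all $i$, collapses this to ${\rm grad}(g)(a) = \sum_{i=1}^{m} h_i(a) \, {\rm grad}(f_i)(a)$, which lies in the $F$-span of the rows of $\frac{\partial f}{\partial x}(a)$, that is, in $T_a(X,F)^{\perp}$. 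This yields ${\rm grad}_a I(X,F) \subseteq T_a(X,F)^{\perp}$, hence the fibrewise equality, and then $T(X,F)^{\perp} = \coprod_{a \in X(F)} T_a(X,F)^{\perp} = \coprod_{a \in X(F)} {\rm grad}_a I(X,F) = {\rm grad}\, I(X,F)$, as announced.

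There is no real obstacle here: the only points requiring (minor) care are that the parity-check-matrix description of $T_a(X,F)$ is available precisely because $I(X,F)$ admits a generating set with coefficients in $F$ (guaranteed by $X/\FF_q$ and $\FF_q \subseteq F$), and that ${\rm grad}_a I(X,F)$ is closed under $F$-linear combinations (immediate from the linearity of partial differentiation and of evaluation at $a$). The substance of the lemma is entirely contained in the observation that the ``correction terms'' $\frac{\partial h_i}{\partial x_j} f_i$ vanish upon evaluation at a point of $X$.
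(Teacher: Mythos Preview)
Your proposal is correct and follows essentially the same approach as the paper: both argue fibrewise, identify $T_a(X,F)^{\perp}$ with the row span of $\frac{\partial f}{\partial x}(a)$, and establish the two inclusions via $f_i \in I(X,F)$ for one direction and the Leibniz rule together with $f_i(a)=0$ for the other.
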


\begin{proof}

The equality (\ref{DualTangentGradient}) is meant as a coincidence $T_a (X,F) ^{\perp} = {\rm grad} _a I(X,F)$ of the fibres over all the points
 $  a \in X(F)$.
By its very definition, $T_a (X,F) ^{\perp}$ is the linear code with a generator matrix
\[
\frac{\partial f}{\partial x} (a) = \left(  \begin{array}{c}
{\rm grad} (f_1)(a)  \\
\ldots   \\
{\rm grad} (f_m) (a)
\end{array}  \right).
\]
Therefore
\[
T_a (X,F) ^{\perp} = \left \{ \sum\limits _{j=1} ^m \lambda _j {\rm grad} (f_j) (a) = {\rm grad} \left( \sum\limits _{j=1} ^m \lambda _j f_j \right) (a) \ \
\Big \vert \ \ \lambda _j \in F \right \}
\]
is a subspace of ${\rm grad} _a I(X,F)$,  as far as $\sum\limits _{j=1} ^m  \lambda _j f_j \in I(X,F)$ for $\forall \lambda _j \in F$.
Conversely, any element of $I(X,F)$ is of the form $g = \sum\limits _{j=1} ^m g_j f_j$ for some $g_j \in F[ x_1, \ldots , x_n]$.
Then ${\rm grad} (g) = \sum\limits _{j=1} ^m f_j {\rm grad} (g_j) + g_j {\rm grad} (f_j)$ and
\begin{align*}
{\rm grad} (g) (a) = \sum\limits _{j=1} ^m g_j (a) {\rm grad} (f_j)  (a) \in {\rm Span} _F ( {\rm grad} (f_1) (a), \ldots , {\rm grad} (f_m)(a) )  \\
 = T_a (X,F)^{\perp}.
 \end{align*}
Thus,  ${\rm grad} _a I(X,F) \subseteq T_a (X,F) ^{\perp}$ and $T_a (X,F) ^{\perp} = {\rm grad} _a I(X,F)$.

\end{proof}

Note that
\[
{\rm Grad} I(X,F) := \{ {\rm grad} (g) \ \ \vert \ \  g \in I(X,F) \} \subset F[x_1, \ldots , x_n] ^n
\]
and
\begin{align*}
\overline{\rm Grad}  I(X,F) := \left \{ \left( \frac{\partial g}{\partial x_1} + I(X,F), \ldots , \frac{\partial g}{\partial x_n} + I(X,F) \right) \ \ \Big \vert \ \
g \in I(X,F) \right \} \subset  \\
 \left[ F[ x_1, \ldots , x_n] / I(X,F) \right] ^n = F[X]^n
\end{align*}
can be viewed as sheaves of sections of ${\rm grad} I(X,F) = T(X,F)^{\perp} \rightarrow X(F)$.
Thus, the gradient codes consist of values of  global sections of vector bundles and appear to be of a similar nature with Goppa codes.
In order to specify, let $Y / \FF _q \subset \PP ^N ( \overline{\FF_q})$ be a smooth irreducible projective curve, defined over $\FF_q$, $D = P_1 + \ldots + P_n$ be a sum of $n$ different $\FF_q$-rational points $P_1, \ldots , P_n \in X( \FF_q)$ and $G$ be a divisor on $Y$ with ${\rm Supp} (G) \cap {\rm Supp} (D) = \emptyset$.
Denote by $\cO _Y ([G]) \rightarrow Y$ the line bundle, associated with $G$ and consider the evaluation map
\[
\cE _D : H^0 (Y, \cO_Y ([G])  \longrightarrow \FF_q ^n,
\]
\[
\cE _D (s) := ( s(P_1), \ldots , s(P_n))
\]
of the global sections $s \in H^0 (Y, \cO_Y ([G]))$ of $\cO_Y ([G])$.
The image $\cE_D H^0 (Y, \cO_Y ([G]))$ of $\cE_D$ is called a Goppa code.
It consists of the values $(s(P_1), \ldots , s(P_n))$ of the global sections $s \in H^0 (Y, \cO_Y ([G]))$ of the line bundle $\cO_Y ([G]) \rightarrow Y$ at  the ordered  $n$-tuple of points $(P_1, \ldots , P_n)$, while ${\rm grad} _a I(X,F)$ is constituted by the  values
\[
{\rm grad} (g) (a) = \left( \frac{\partial g}{\partial x_1}, \ldots , \frac{\partial g}{\partial x_n}  \right) (a) =
 \left( \frac{\partial g}{\partial x_1} + I(X,F), \ldots , \frac{\partial g}{\partial x_n} + I(X,F) \right) (a)
\]
of the global sections
\[
{\rm grad} (g) = \left( \frac{\partial g}{\partial x_1}, \ldots , \frac{\partial g}{\partial x_n} \right) \ \ \mbox{   or  } \ \
\left( \frac{\partial g}{\partial x_1} + I(X,F), \ldots , \frac{\partial g}{\partial x_n} + I(X,F) \right)
\]
 of ${\rm grad} I(X,F) \rightarrow X(F)$.

For an arbitrary integer $0 \leq s \leq n$, let us consider the loci
\begin{align*}
X_{\rm grad} ^{( \geq s)} := \{ a \in X \ \ \vert \ \  d ( {\rm grad} _a I(X, \FF_{q^{\delta (a)}} ) ) \geq s \},  \\
X_{\rm grad} ^{( \leq s)} := \{ a \in X \ \ \vert \ \  d ( {\rm grad} _a I(X, \FF_{q^{\delta (a)}} ) ) \leq s \},
\end{align*}
at which the gradient codes  to $X$ are of minimum distance $\geq s$, respectively, $\leq s$.
The next proposition shows that the presence of a non-zero polynomial from $I(X, \overline{\FF_q})$ in $d$ variables imposes an upper bound $d$ on the minimum distance of a gradient code to $X$ at "almost all" the points of $X$.

\begin{proposition}    \label{DominantPuncturing}
Let $X/ \FF_q \subset \overline{\FF_q} ^n$ be an irreducible affine variety, defined over $\FF_q$, for which there exists a non-zero  polynomial
 $h \in I(X, \overline{\FF_q}) \cap \overline{\FF_q} [ x_{\beta}]$ of $|\beta| = d$ variables and $\Pi _{\neg \beta} : X \rightarrow \overline{\FF_q} ^{d}$ be the puncturing at the complement $\neg \beta$ of $\beta$.
 Then
 \[
 X_{\rm grad} ^{( \geq d+1)} \subseteq \Pi _{\neg \beta} ^{-1} ( \Pi _{\neg \beta} (X) ^{\rm sing} ) \varsubsetneq X,
 \]
  so that   $X_{\rm grad} ^{ (\leq d)}$ is Zariski dense in $X$.
\end{proposition}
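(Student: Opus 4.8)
The plan is to exhibit, at every point $a\in X$ outside a proper Zariski closed subset, an explicit non-zero codeword of weight $\leq d$ in the gradient code ${\rm grad}_a I(X,\FF_{q^{\delta(a)}})$, which forces $a\notin X_{\rm grad}^{(\geq d+1)}$. Put $Y:=\overline{\Pi_{\neg\beta}(X)}\subseteq\overline{\FF_q}^{\,d}$, an irreducible affine variety with coordinates indexed by $\beta$, onto which $\Pi_{\neg\beta}:X\to Y$ is dominant. A polynomial $g\in\overline{\FF_q}[x_\beta]$ satisfies $g(a)=g(\Pi_{\neg\beta}(a))$, so it vanishes on $X$ exactly when it vanishes on $\Pi_{\neg\beta}(X)$, hence on $Y$; thus $I(X,\overline{\FF_q})\cap\overline{\FF_q}[x_\beta]=I(Y,\overline{\FF_q})$. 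In particular the hypothesis provides a non-zero $h\in I(Y,\overline{\FF_q})$, whence $Y\subsetneq\overline{\FF_q}^{\,d}$ and $\dim Y\leq d-1$.

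First I would record the easy direction. For any $g\in I(X,\overline{\FF_q})\cap\overline{\FF_q}[x_\beta]$ the partials $\partial g/\partial x_i$ with $i\notin\beta$ vanish identically, so the word ${\rm grad}(g)(a)$ has support contained in $\beta$ and weight $\leq d$, and its non-zero entries $\partial g/\partial x_{\beta_j}(a)$ depend only on $b:=\Pi_{\neg\beta}(a)$. By the definition of the gradient bundle, ${\rm grad}(g)(a)\in{\rm grad}_a I(X,\overline{\FF_q})$, which by Lemma~\ref{GradientBundleIsDualToTangentBundle} equals $T_a(X,\overline{\FF_q})^{\perp}$; since this is ${\rm grad}_a I(X,\FF_{q^{\delta(a)}})\otimes_{\FF_{q^{\delta(a)}}}\overline{\FF_q}$ and the minimum distance is invariant under this scalar extension, it is enough to produce, for a generic $a$, such a $g$ with ${\rm grad}(g)(a)\neq 0$.

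The core step is to show this is possible whenever $b=\Pi_{\neg\beta}(a)$ is a smooth point of $Y$. At such $b$ the Zariski tangent space $T_b(Y,\overline{\FF_q})$ has dimension $\dim Y\leq d-1<d$, so it cannot be all of $\overline{\FF_q}^{\,d}$; but by the coordinate description of Zariski tangent spaces, $T_b(Y,\overline{\FF_q})$ is the common kernel of the gradients at $b$ of a generating set of $I(Y,\overline{\FF_q})$, so at least one such generator $g$ has ${\rm grad}(g)(b)\neq 0$. As $g\in I(Y,\overline{\FF_q})=I(X,\overline{\FF_q})\cap\overline{\FF_q}[x_\beta]\subseteq I(X,\overline{\FF_q})$, the previous paragraph produces a non-zero word of weight $\leq d$ in ${\rm grad}_a I(X,\FF_{q^{\delta(a)}})$; hence $d({\rm grad}_a I(X,\FF_{q^{\delta(a)}}))\leq d$, i.e. $a\notin X_{\rm grad}^{(\geq d+1)}$. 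Taking contrapositives gives $X_{\rm grad}^{(\geq d+1)}\subseteq\Pi_{\neg\beta}^{-1}(\Pi_{\neg\beta}(X)^{\rm sing})$.

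Finally I would close with standard facts about irreducible varieties. The singular locus $Y^{\rm sing}$ is a proper Zariski closed subset of the irreducible variety $Y$, so $Y\setminus Y^{\rm sing}$ is a non-empty open set meeting the dense image $\Pi_{\neg\beta}(X)$; hence $\Pi_{\neg\beta}^{-1}(Y^{\rm sing})$ is a proper Zariski closed subset of $X$, which is the strict inclusion $\Pi_{\neg\beta}^{-1}(\Pi_{\neg\beta}(X)^{\rm sing})\subsetneq X$. Its complement lies in $X_{\rm grad}^{(\leq d)}=X\setminus X_{\rm grad}^{(\geq d+1)}$ by the core step, and is Zariski dense because $X$ is irreducible, so $X_{\rm grad}^{(\leq d)}$ is Zariski dense in $X$. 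The only delicate point I anticipate is the bookkeeping around replacing $\Pi_{\neg\beta}(X)$ by $Y=\overline{\Pi_{\neg\beta}(X)}$: namely the identification $I(X,\overline{\FF_q})\cap\overline{\FF_q}[x_\beta]=I(Y,\overline{\FF_q})$ and the observation that one must use $Y^{\rm sing}$ rather than the generally larger singular locus of the hypersurface $V(h)$; the remainder is the coordinate description of Zariski tangent spaces together with Lemma~\ref{GradientBundleIsDualToTangentBundle}.
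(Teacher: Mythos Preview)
Your proof is correct and follows essentially the same strategy as the paper's: identify $I(X,\overline{\FF_q})\cap\overline{\FF_q}[x_\beta]$ with the ideal of $Y=\overline{\Pi_{\neg\beta}(X)}$, observe that gradients of such polynomials give words of weight $\leq d$ in the gradient code, and note that at smooth points of $Y$ not all of these gradients can vanish since $\dim Y<d$.

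The one noteworthy difference is in how the elimination ideal is handled. The paper appeals to the Gr\"obner basis machinery of Lemma~\ref{GroebnerBasis} and the Elimination Theorem to produce explicit generators $h_1,\ldots,h_l\in\FF_q[x_\beta]$ of $I(\overline{\Pi_{\neg\beta}(X)},\overline{\FF_q})$, and then argues with the Jacobian of this fixed generating set. You instead give the direct argument that a polynomial $g\in\overline{\FF_q}[x_\beta]$ vanishes on $X$ iff it vanishes on $\Pi_{\neg\beta}(X)$ iff it vanishes on $Y$, which establishes $I(X,\overline{\FF_q})\cap\overline{\FF_q}[x_\beta]=I(Y,\overline{\FF_q})$ without Gr\"obner bases. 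This is cleaner, though it costs you a small extra step (the scalar-extension remark) because your $g$ may have coefficients outside $\FF_{q^{\delta(a)}}$, whereas the paper's $h_i$ lie in $\FF_q[x_\beta]$ from the outset.
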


\begin{proof}

By assumption, $I(X, \overline{\FF_q}) = \langle f_1, \ldots , f_m \rangle _{\overline{\FF_q}}$  is generated by some polynomials
$f_1, \ldots , f_m \in \FF_q [ x_1, \ldots , x_n]$.
Let $G_{\neg \beta, \beta} \subset \FF_q [ x_1, \ldots , x_n]$ be a Groebner basis of  $\langle f_1, \ldots , f_m \rangle _{\FF_q}$
 with respect to a lexicographic order with $x_{\neg \beta} > _{\rm lex} x_{\beta}$ and
\[
G_{\beta} :=  G_{\neg \beta, \beta} \cap \FF_q [ x_{\beta} ] =  \{ h_1, \ldots , h_l \}.
\]
 As in the proof of Proposition \ref{ErrorSupport}, $h_1, \ldots , h_l \in \FF_q [ x_1, \ldots , x_n]$ generate the absolute ideal
 \[
 I(  \overline{\Pi _{\neg \beta}(X)}, \overline{\FF_q}) = I( \Pi _{\neg \beta}(X), \overline{\FF_q}) =
  I(X, \overline{\FF_q}) \cap \overline{\FF_q}[ x_{\beta} ] =
 \langle G_{\beta} \rangle _{\overline{\FF_q}} = \langle h_1, \ldots , h_l   \rangle _{\overline{\FF_q}}
 \]
  of $\Pi _{\neg \beta} (X)$.
  By the Elimination Theorem 2 from Chapter 2, §1, \cite{CLSh}, $G_{\beta}$ is a Groebner basis of
  $I (X, \overline{\FF_q}) \cap \overline{\FF _q} [ x_{\beta}]  \triangleleft \overline{\FF_q}[x_{\beta}]$, so that the
  presence of a non-zero polynomial  $h \in I (X, \overline{\FF_q}) \cap \overline{\FF _q} [ x_{\beta}]$ implies that the set $G_{\beta} \neq \emptyset$  is non-empty.
  As a result,  $\overline{\Pi _{\neg \beta}(X)} \varsubsetneq \overline{\FF_q}^d$ is an irreducible affine variety of dimension
  $\dim  \Pi _{\neg \beta}(X) <d$.

 For any $h_i \in I(\Pi _{\neg \beta } (X), \FF_{q^{\delta (a)}}) \subseteq I(X, \FF_{q^{\delta (a)}})$  and $a \in X$ note that
 ${\rm grad} (h_i) (a) \in {\rm grad} _a I(X, \FF_{q^{\delta (a)}})$   is a word  of weight $\leq d$, as far as $h_i \in \FF_{q^{\delta (a)}} [ x_{\beta} ]$ depends on at most $|\beta| = d$ variables.
 In particular, for  $a \in X_{\rm grad} ^{( \geq d+1)}$  there follows ${\rm grad} (h_i) ( \Pi _{\neg \beta} (a)) = {\rm grad} (h_i) (a) =0$.
 Thus,
\[
\frac{\partial (h_1, \ldots , h_l)}{\partial x_{\beta} } ( \Pi _{\neg \beta} (a)) =
\left(  \begin{array}{c}
{\rm grad} (h_1)   \\
\ldots   \\
{\rm grad} (h_l)
\end{array}  \right) ( \Pi _{\neg \beta} (a) ) = 0 \ \ \mbox{  at  } \ \ \forall a \in X_{\rm grad} ^{(d+1)}
\]
and
\[
X_{\rm grad} ^{(\geq d+1)} \subseteq V \left( \frac{\partial h_i}{\partial x_j} \ \ \Big \vert \ \   1 \leq i \leq l, \ \   1 \leq j \leq n \right).
\]
We claim that
\[
V \left( \frac{\partial h_i}{\partial x_j} \ \ \Big \vert \ \  1 \leq i \leq l,  \ \  1 \leq j \leq n \right) \subseteq
\Pi _{\neg \beta} ^{-1} ( \Pi _{\neg \beta} (X) ^{\rm sing} ).
\]
Indeed, if $\frac{\partial h_i}{\partial x_j} (a) =0$ for $\forall 1 \leq i \leq l$, $\forall 1 \leq j \leq n$ then
\[
T_{\Pi _{\neg \beta} (a)} ( \Pi _{\neg \beta} (X), \FF_{q^{\delta (a)}}) = \FF_{q^{\delta (a)}}.
\]
According  to $\dim \Pi _{\neg \beta} (X) <d$ there follows $\Pi _{\neg \beta} (a) \in \Pi _{\neg \beta} (X) ^{\rm sing}$,  which is equivalent to
 $a \in \Pi _{\neg \beta} ^{-1} ( \Pi _{\neg \beta} (X) ^{\rm sing} )$.
Thus,
\[
X_{\rm grad} ^{(\geq d+1)} \subseteq V \left( \frac{\partial h_i}{\partial x_j} \ \ \Big \vert \ \   1 \leq i \leq l, \ \   1 \leq j \leq n \right) \subseteq \Pi _{\neg \beta} ^{-1} ( \Pi _{\neg \beta} (X) ^{\rm sing}).
\]
The assumption $\Pi _{\neg \beta} ^{-1} ( \Pi _{\neg \beta} (X) ^{\rm sing}) =X$ leads to $\Pi _{\neg \beta} (X) = \Pi _{\neg \beta} (X) ^{\rm sing}$ and contradicts $\Pi _{\neg \beta} (X)^{\rm smooth} \neq \emptyset$.

Note that $X_{\rm grad} ^{( \geq d+1)} \subseteq \Pi _{\neg \beta} ^{-1} ( \Pi _{\neg \beta} (X)^{\rm sing})$ implies
\[
\Pi _{\neg \beta} ^{-1} ( \Pi _{\neg \beta} (X)^{\rm smooth}) = X \setminus \Pi _{\neg \beta}  ^{-1} ( \Pi _{\neg \beta} (X) ^{\rm sing}) \subseteq
X \setminus X_{\rm grad} ^{( \geq d+1)} = X_{\rm grad} ^{( \leq d)}.
\]
The non-empty subset $\Pi _{\neg \beta} ^{-1} ( \Pi _{\neg \beta} (X) ^{\rm smooth}) \subseteq X$ is Zariski open, as far as the smooth locus
 $\Pi _{\neg \beta} (X) ^{\rm smooth}$ is an open subset of $\Pi _{\neg \beta} (X)$ and $\Pi _{\beta}$ is continuous with respect to the Zariski topology.
 Therefore $\Pi _{\neg \beta} ^{-1} ( \Pi _{\neg \beta} (X)^{\rm smooth})$ is Zariski dense in the irreducible affine variety $X$ and the Zariski closure $\overline{X_{\rm grad} ^{( \leq d)}}$ of $X_{\rm grad} ^{( \leq d)}$ coincides with $X$, according to
 \[
 X = \overline{ \Pi _{\neg \beta} ^{-1} ( \Pi _{\neg \beta} (X)^{\rm smooth})} \subseteq \overline{X_{\rm grad}^{(\leq d)}} \subseteq X.
 \]

\end{proof}

Note that $X_{\rm grad} ^{(d)}$ is not claimed to be Zariski open in $X$.
If there exists a polynomial global tangent frame on $X$ then there is a polynomial family of parity check matrices of the gradient codes to $X$ and
$X_{\rm grad} ^{(d)}$ is Zariski closed in $X$.
As a result, $X = X^{(d)}_{\rm grad}$ and $\Pi _{\neg \beta} (X) \subset \overline{\FF_q}^d$ is smooth.

\section{Tangent codes of special type}

\subsection{  Near MDS tangent and gradient codes     }

According to Proposition \ref{MinimumDistance} (i), if an irreducible affine variety $X / \FF_q \subset \overline{\FF_q}$, defined over $\FF_q$ has at least one MDS tangent code $T_a (X, \FF_{q^{\delta (a)}})$ then the locus
\[
X^{( n-k+1)} := \{ a \in X \ \ \vert \ \ d( T_a (X, \FF_{q^{\delta (a)}})) = n-k+1 \}
\]
 of the MDS tangent codes to $X$ is Zariski open and Zariski dense in $X$.

The present subsection is devoted to the near  MDS-codes,  introduced   by Dodunekov and Landgev in \cite{DL}.
These can be defined as the linear codes $C \subset \FF_q ^n$ of $\dim C =k$ and minimum distance $d(C) = n-k$, whose duals $C^{\perp} \subset \FF_q^n$ are of minimum distance $d(C^{\perp}) = k$.
If $X / \FF_q \subset \overline{\FF_q} ^n$ is an irreducible  $k$-dimensional  affine variety, defined over $\FF_q$ and $X^{\rm NMDS}$ is the set of the points  $a \in X$, at which the tangent code $T_a (X, \FF_{q^{\delta (a)}})$ is near MDS then  $X^{\rm NMDS} \subseteq X^{(n-k)} $.
We show that $X^{\rm NMDS}$ is a Zariski open subset of $X$.
If $X^{\rm NMDS} \neq \emptyset$ is non-empty, then $X^{\rm NMDS}$ is Zariski dense in $X^{(n-k)}$ and in $X$.

In the statement and the proof of the  next proposition we abbreviate
  $\Sigma _s (n) := \Sigma _s (1, \ldots , n)$, respectively,   $\Sigma _r (m) := \Sigma _r (1, \ldots , m)$, in order to simplify the notation.

\begin{proposition}    \label{NearMDS}
Let $X \subset \overline{\FF_q} ^n$ be an irreducible $k$-dimensional affine variety with
$I(X, \overline{\FF_q}) = \langle f_1, \ldots , f_m \rangle _{\overline{\FF_q}}$ for some $f_1, \ldots , f_m \in \FF_q [ x_1, \ldots , x_n]$ and
 $\Pi _{\alpha} : X \rightarrow \Pi _{\alpha} (X)$ be a non-finite puncturing at $|\alpha| = n-k$ coordinates.
Then the subset $X^{\rm NMDS} \subseteq X$ of the points $a \in X$ at which $T_a (X, \FF_{q^{\delta (a)}})$ is a near MDS code is Zariski open,
\begin{align*}
X^{\rm NMDS} =  \\
 X^{(n-k)} \setminus
V \left( \prod\limits _{\beta \in \Sigma _{n-k+1} ( n)} \det \frac{\partial f _{\psi ( \beta)}}{\partial x_{\theta ( \beta)}} \, \Big \vert \,
\begin{array}{c}
  \psi : \Sigma _{n-k+1} ( n) \rightarrow \Sigma _{n-k} ( m),   \\
   \theta : \Sigma _{n-k+1} ( n) \rightarrow \Sigma _{n-k} ( n)  \\
  \theta ( \beta) \in \Sigma _{n-k} ( \beta)
  \end{array}  \right) =
  \end{align*}
  \begin{align*}
X \setminus V \left( \prod\limits _{i \in \Sigma _{n-k-1} ( n)} \det \frac{\partial f_{\varphi (i)}}{\partial x_i} \prod\limits _{\beta \in \Sigma _{n-k+1}( n)} \det \frac{\partial f_{\psi ( \beta)}}{\partial x_{\theta( \beta)}} \, \Big \vert \,
 \begin{array}{c}
   \varphi : \Sigma _{n-k-1} ( n) \rightarrow \Sigma _{n-k-1} ( m), \\
   \psi : \Sigma _{n-k+1}(  n)  \rightarrow \Sigma _{n-k} (  m),  \\
  \theta : \Sigma _{n-k+1} (  n) \rightarrow \Sigma _{n-k} (  n),  \\
 \theta (\beta) \in \Sigma _{n-k} ( \beta)
 \end{array}  \right)
\end{align*}
both in $X^{(n-k)}$ and in $X$.
Thus, if some Zariski tangent space $T_a (X, \FF _{q^{\delta (a)}})$ at a smooth point $a \in X$ is a near MDS-code then $X^{\rm NMDS}$ is Zariski dense in $X^{(n-k)}$, $X$ and $I(X, \overline{\FF_q}) \cap \overline{\FF_q} [ x_{\gamma}] = \{ 0 \}$ for $\forall \gamma \in \Sigma _{k-1} (n)$.
\end{proposition}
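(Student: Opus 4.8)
The plan is to express, at a smooth point $a$, the near MDS property of $T_a(X,\FF_{q^{\delta(a)}})$ as a rank condition on column submatrices of the Jacobian $\frac{\partial f}{\partial x}(a)$, and then to read off the stated equations by the product--of--minors manipulation already used in the proof of Proposition \ref{MinimumDistance} (i). First I would record the standing reductions. Since $\Pi_{\alpha}:X\to\Pi_{\alpha}(X)$ is a non-finite puncturing at $|\alpha|=n-k$ coordinates, Proposition \ref{MinimumDistance} (ii) gives $X=X^{(\leq n-k)}$, so $X^{(n-k)}=X^{(\geq n-k)}$ is Zariski open, and by Proposition \ref{MinimumDistance} (i) it equals $X\setminus V\!\left(\prod_{i\in\Sigma_{n-k-1}(n)}\det\frac{\partial f_{\varphi(i)}}{\partial x_i}\ \Big|\ \varphi:\Sigma_{n-k-1}(n)\to\Sigma_{n-k-1}(m)\right)$. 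Recall also that ${\rm rk}\,\frac{\partial f}{\partial x}(a)\leq n-\dim X=n-k$ at every $a\in X$, with equality exactly at smooth points, and that $T_a(X,\FF_{q^{\delta(a)}})^{\perp}={\rm grad}_a I(X,\FF_{q^{\delta(a)}})$ by Lemma \ref{GradientBundleIsDualToTangentBundle}. A near MDS tangent code is an $[n,k,n-k]$-code whose dual is an $[n,n-k,k]$-code; in particular $X^{\rm NMDS}\subseteq X^{\rm smooth}$.

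The linear-algebraic heart is the claim: for $a\in X^{\rm smooth}$ the dual $T_a(X,\FF_{q^{\delta(a)}})^{\perp}$ has minimum distance $\geq k$ if and only if ${\rm rk}\,\frac{\partial f}{\partial x_{\beta}}(a)=n-k$ for every $\beta\in\Sigma_{n-k+1}(n)$, where $\frac{\partial f}{\partial x_{\beta}}(a)$ is the $m\times(n-k+1)$ submatrix of $\frac{\partial f}{\partial x}(a)$ supported on the columns of $\beta$. To prove it, observe that a non-zero word of $T_a(X,\FF_{q^{\delta(a)}})^{\perp}$ of weight $\leq k-1$ is supported by some $S\in\Sigma_{k-1}(n)$; since $c\cdot w=(c|_S)\cdot(w|_S)$ for $w$ supported on $S$ and any $c\in T_a(X,\FF_{q^{\delta(a)}})$, such a $w$ exists precisely when $\Pi_{\neg S}:T_a(X,\FF_{q^{\delta(a)}})\to\FF_{q^{\delta(a)}}^{\,k-1}$ fails to be surjective. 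Solving the homogeneous system $\frac{\partial f}{\partial x}(a)\,v^t=0$ with the coordinates $v_S$ prescribed arbitrarily, this surjectivity is equivalent to ${\rm rk}\,\frac{\partial f}{\partial x_{\neg S}}(a)={\rm rk}\,\frac{\partial f}{\partial x}(a)=n-k$, and $\neg S$ runs over $\Sigma_{n-k+1}(n)$. On the other hand, at a smooth $a$ one has $d(T_a(X,\FF_{q^{\delta(a)}})^{\perp})\leq k$ exactly when $T_a(X,\FF_{q^{\delta(a)}})$ is not MDS, i.e. when $d(T_a(X,\FF_{q^{\delta(a)}}))\leq n-k$, which holds throughout $X=X^{(\leq n-k)}$. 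Hence, for $a\in X^{(n-k)}\cap X^{\rm smooth}$, the tangent code is near MDS if and only if ${\rm rk}\,\frac{\partial f}{\partial x_{\beta}}(a)=n-k$ for all $\beta\in\Sigma_{n-k+1}(n)$; and since one $(n-k)\times(n-k)$ minor of $\frac{\partial f}{\partial x_{\beta}}(a)$ being non-zero already forces ${\rm rk}\,\frac{\partial f}{\partial x}(a)=n-k$, this rank condition by itself guarantees $a\in X^{\rm smooth}$.

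Consequently $X^{\rm NMDS}=\{a\in X^{(n-k)}\ :\ {\rm rk}\,\frac{\partial f}{\partial x_{\beta}}(a)=n-k\ \text{for all}\ \beta\in\Sigma_{n-k+1}(n)\}$. For a fixed $\beta$, the condition ${\rm rk}\,\frac{\partial f}{\partial x_{\beta}}(a)=n-k$ means some minor $\det\frac{\partial f_{\psi(\beta)}}{\partial x_{\theta(\beta)}}(a)\neq 0$ with $\psi(\beta)\in\Sigma_{n-k}(m)$ a choice of rows and $\theta(\beta)\in\Sigma_{n-k}(\beta)$ a choice of $n-k$ columns out of $\beta$, so the set of such $a$ is $X\setminus V(S_{\beta})$ with $S_{\beta}$ the set of all these minors. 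Exactly as in the proof of Proposition \ref{MinimumDistance} (i), $\bigcup_{\beta\in\Sigma_{n-k+1}(n)}V(S_{\beta})=V\!\left(\prod_{\beta}S_{\beta}\right)$, so intersecting over all $\beta$ yields the first displayed formula $X^{\rm NMDS}=X^{(n-k)}\setminus V(\cdots)$. Substituting $X^{(n-k)}=X\setminus V(P_0)$ with $P_0=\{\prod_{i}\det\frac{\partial f_{\varphi(i)}}{\partial x_i}\}$ from the first paragraph and using $V(A)\cup V(B)=V(\{g_0g_1\ :\ g_0\in A,\ g_1\in B\})$ gives the second formula. Since $X^{(n-k)}$ is Zariski open and we delete a Zariski closed subset, $X^{\rm NMDS}$ is Zariski open.

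Finally, assume some $T_a(X,\FF_{q^{\delta(a)}})$ at a smooth point $a$ is near MDS. Then $a\in X^{\rm NMDS}$, so $X^{\rm NMDS}$ is a non-empty Zariski open subset of the irreducible variety $X$, hence Zariski dense in $X$; being contained in $X^{(n-k)}$, it is Zariski dense in $X^{(n-k)}$ as well. Suppose, for contradiction, that $0\neq h\in I(X,\overline{\FF_q})\cap\overline{\FF_q}[x_{\gamma}]$ for some $\gamma\in\Sigma_{k-1}(n)$. Applying Proposition \ref{DominantPuncturing} with $d:=k-1$, $\beta:=\gamma$ and $\Pi_{\neg\gamma}:X\to\overline{\FF_q}^{\,k-1}$, the locus $X_{\rm grad}^{(\leq k-1)}$ is Zariski dense in $X$, so it meets the non-empty Zariski open set $X^{\rm NMDS}$ in some point $b$; then $d({\rm grad}_b I(X,\FF_{q^{\delta(b)}}))\leq k-1$, whereas Lemma \ref{GradientBundleIsDualToTangentBundle} together with $b\in X^{\rm NMDS}$ gives $d({\rm grad}_b I(X,\FF_{q^{\delta(b)}}))=d(T_b(X,\FF_{q^{\delta(b)}})^{\perp})=k$, a contradiction. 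Hence $I(X,\overline{\FF_q})\cap\overline{\FF_q}[x_{\gamma}]=\{0\}$ for all $\gamma\in\Sigma_{k-1}(n)$. The main obstacle is the linear-algebraic claim of the second paragraph — identifying, via the puncturing and its dual, the dual minimum distance with the ranks of the $(n-k+1)$-column submatrices of the Jacobian — together with the bookkeeping needed to match the resulting union of minor-vanishing loci with the product form stated in the proposition.
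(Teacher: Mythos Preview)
Your proof is correct and follows essentially the same route as the paper. The only differences are that you prove the rank characterisation of near MDS (every $(n-k+1)$-column submatrix of the parity check has rank $n-k$) from scratch via the surjectivity of $\Pi_{\neg S}$ on the tangent code, whereas the paper simply cites Lemma 3.1 of \cite{DL}; and for the final claim you phrase the contradiction through $X_{\rm grad}^{(\leq k-1)}$ being Zariski dense and meeting $X^{\rm NMDS}$, while the paper phrases the equivalent statement as $X_{\rm grad}^{(\geq k)}\supseteq X^{\rm NMDS}$ being Zariski dense and hence not contained in the proper closed set produced by Proposition \ref{DominantPuncturing}.
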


\begin{proof}

By Lemma 3.1 from \cite{DL}, an $[n,k,n-k]_q$-code $T_a (X, \FF_{q^{\delta (a)}})$ with a parity check matrix $\frac{\partial f}{\partial x} (a)$ is near MDS exactly when for $\forall \beta \in \Sigma _{n-k+1}(n)$ the matrix $\frac{\partial f}{\partial x_{\beta}} (a) \in M_{m \times (n-k+1)}( \FF_{q^{\delta (a)}})$ is of maximal rank ${\rm rk} \frac{\partial f}{\partial x_{\beta}} (a) = n-k$.
Thus,
\begin{align*}
X^{(n-k)} \setminus X^{\rm NMDS} = X^{(n-k)} \cap \left[ \cup _{\beta \in \Sigma _{n-k+1} (n)} \left  \{ a \in X \ \ \Big \vert \ \
 {\rm rk} \frac{\partial f}{\partial x_{\beta}} (a) < n-k \right \} \right].
\end{align*}
The minors of $\frac{\partial f}{\partial x_{\beta}} (a)$ of order $n-k$ are labeled by $\lambda \in \Sigma _{n-k} (m)$ and $\mu \in \Sigma _{n-k} ( \beta)$, so that
\begin{align*}
X^{(n-k)} \setminus X^{\rm NMDS} =  \\
 X^{(n-k)} \cap \left[ \cup _{\beta \in \Sigma _{n-k+1} (n)} V \left( \det \frac{\partial f_{\lambda}}{\partial x_{\mu}} \ \ \Big \vert \ \
   \lambda \in \Sigma _{n-k} (m),  \mu \in \Sigma _{n-k} ( \beta) \right) \right].
\end{align*}
Bearing in mind that $\cup _{\nu \in N} V( S_{\nu}) = V \left( \prod\limits _{\nu \in N} S_{\nu} \right)$ for an arbitrary finite set $N$,
\[
\prod\limits _{\nu \in N} S_{\nu}  := \left \{ \prod\limits _{\nu \in N} g_{\nu} \ \ \Big  \vert  \ \  g_{\nu} \in S_{\nu} \right \}
\]
and arbitrary subsets  $S_{\nu} \subset \overline{\FF_q} [ x_1, \ldots , x_n]$, one concludes that
\begin{align*}
X^{(n-k)} \setminus X^{\rm NMDS} =  \\
X^{(n-k)} \cap V \left( \prod\limits _{\beta \in \Sigma _{n-k+1} (n)} \det \frac{\partial f_{\psi ( \beta)}}{\partial x_{\theta ( \beta)}} \ \ \Big \vert \ \
\begin{array}{c}
  \psi : \Sigma _{n-k+1} (n) \rightarrow \Sigma _{n-k} (m),  \\
  \theta : \Sigma _{n-k+1} (n) \rightarrow \Sigma _{n-k} (n),  \\
\theta ( \beta) \in \Sigma _{n-k} ( \beta)
\end{array} \right)
\end{align*}
Therefore, the subset $X^{\rm NMDS}$ of $X^{(n-k)}$ equals
\[
X^{\rm NMDS} = X^{(n-k)} \setminus  (X^{(n-k)} \setminus X^{\rm NMDS}) =  X^{(n-k)} \setminus V(A)
\]
for
 \begin{align*}
 A := \left \{  \prod\limits _{\beta \in \Sigma _{n-k+1} (n)} \det \frac{\partial f_{\psi ( \beta)}}{\partial x_{\theta ( \beta)}} \ \ \Big \vert \ \
\begin{array}{c}
  \psi : \Sigma _{n-k+1} (n) \rightarrow \Sigma _{n-k} (m),  \\
  \theta : \Sigma _{n-k+1} (n) \rightarrow \Sigma _{n-k} (n),  \\
\theta ( \beta) \in \Sigma _{n-k} ( \beta)
\end{array}  \right \}.
\end{align*}
If
\begin{align*}
B := \left \{ \prod\limits _{i \in \Sigma _{n-k-1}(n)} \det \frac{\partial f_{\varphi (i)}}{\partial x_i} \ \ \Big \vert \ \
  \varphi : \Sigma _{n-k-1} (n) \rightarrow \Sigma _{n-k-1}(m) \right \}
\end{align*}
the by Proposition \ref{MinimumDistance} (ii)  and (\ref{LocusMinDistAtLeastD}) there follows
\[
X^{(n-k)} = X^{( \geq n-k)} = X \setminus X^{( \leq n-k-1)} = X \setminus V(B).
\]
Thus,
\[
X^{\rm NMDS} = [ X \setminus V(B)] \setminus V(A) = X \setminus [ V(B) \cup V(A)] = X \setminus V(BA)
\]
with
\[
BA = \left \{ \prod\limits _{i \in \Sigma _{n-k-1} ( n)} \det \frac{\partial f_{\varphi (i)}}{\partial x_i}
 \prod\limits _{\beta \in \Sigma _{n-k+1}( n)} \det \frac{\partial f_{\psi ( \beta)}}{\partial x_{\theta( \beta)}}    \Big \vert
 \begin{array}{c}
  \varphi : \Sigma _{n-k-1} ( n) \rightarrow \Sigma _{n-k-1} ( m), \\
    \psi : \Sigma _{n-k+1}(  n)  \rightarrow \Sigma _{n-k} (  m),  \\
  \theta : \Sigma _{n-k+1} (  n) \rightarrow \Sigma _{n-k} (  n),  \\
 \theta (\beta) \in \Sigma _{n-k} ( \beta)
 \end{array}  \right \}.
\]

By Lemma 3.2 from \cite{DL}, a linear code $C$ is near MDS if and only if its dual $C^{\perp}$ is near MDS.
Thus, at any $a \in X^{\rm NMDS}$  the gradient code ${\rm grad} _a I(X, \FF_{q^{\delta (a)}})$ is near MDS and, in particular,
 $d( {\rm grad} _a I(X, \FF _{q^{\delta (a)}} )) =k$.
 In such a way, $X^{\rm NMDS} \subseteq X_{\rm grad} ^{( \geq k)}$ and $X_{\rm grad} ^{(\geq k)}$ is Zariski dense in $X$.
 By Proposition \ref{DominantPuncturing}, the presence of a non-zero polynomial $h \in I(X, \overline{\FF_q}) \cap \overline{\FF_q} [ x_{\gamma}]$ requires the Zariski closure $\overline{X_{\rm grad}} ^{(\geq k)} \subseteq \Pi _{\neg \gamma} ^{-1} ( \Pi _{\neg \gamma} (X) ^{\rm sing}) \varsubsetneq X$ to be a proper affine subvariety of $X$.
 The contradiction justifies  the vanishing of the intersections    $I(X, \overline{\FF_q}) \cap \overline{\FF_q} [ x_{\gamma}] = \{ 0 \}$ for
  $\forall \gamma \in \Sigma _{k-1} (n)$.

\end{proof}

By Theorem 3.5 from \cite{DL}, the existence of an $[n,k,n-k] _q$-code with $k \geq 2$ requires $n-k \leq 2q$.
Therefore $X^{(n-k)} \subseteq \cup _{q^l \geq \frac{n-k}{2}}  X( \FF_{q^l})$ in the case of $\dim X = k \geq 2$.
Theorem 3.4 from \cite{DL} specifies  that for $q < n-k$ any $[n,k,n-k]_q$-code is near MDS.
Thus,  $X^{(n-k)} \setminus X^{\rm NMDS} \subseteq \cup _{q^l \geq n-k} X^{(n-k)} ( \FF_{q^l})$.
Note that
\begin{align*}
Y = \cup _{q^l < n-k} X^{(n-k)} ( \FF_{q^l}) = X^{(n-k)} \cap \left[ \cup _{q^l < n-k} V ( x_i ^{q^l} - x_i \ \ \vert \ \ 1 \leq i \leq n ) \right] = \\
X^{(n-k)} \setminus \cup _{q^l \geq n-k} X^{(n-k)} ( \FF_{q^l})
\end{align*}
is a finite explicitly given affine subvariety of $X^{(n-k)}$, contained in $X^{\rm NMDS}$.
Therefore $\dim Y =0$, while $\dim X^{\rm NMDS} = k$ and $X^{\rm NMDS}$ is "considerably larger" than $Y$.

\subsection{   Cyclic tangent  codes    }

Let us recall that a linear code $C \subset \FF_q ^n$ is cyclic if invariant under the cyclic shift of components
\[
\zeta : \FF_q ^n \longrightarrow \FF_q ^n,
\]
\[
\zeta (x_1, x_2, \ldots ,  x_n) = (x_2, \ldots , x_n, x_1).
\]
The cyclic codes $C$  are in a bijective correspondence  with the  principal ideals $\langle \overline{g} \rangle _{\FF_q} \triangleleft \FF_q [t] / \langle t^n -1 \rangle$ in the quotient ring of $\FF_q [t]$ by the ideal $\langle t^n -1 \rangle \triangleleft \FF_q [t]$ with generator  $t^n -1 \in \FF_q [t]$.
The polynomial $g(t) \in \FF_q [t]$ is a divisor of $t^n-1$ of degree $\deg g = n - \dim _{\FF_q} C$ with leading coefficient $1$.

In order to construct an affine variety,  at which  all Zariski tangent spaces extend to cyclic codes over sufficiently large extensions of the definition   fields, we need the following

 \begin{lemma}   \label{ConstantLocus}
 Let $X \subset \overline{\FF_q} ^n$ be an irreducible $k$-dimensional affine variety, whose absolute ideal
 $I(X, \overline{\FF_q}) = \langle f_1, \ldots , f_m \rangle _{\overline{\FF_q}}$  is generated by  some $f_1, \ldots , f_m \in \FF_q [ x_1, \ldots , x_n]$ and  \[
 X_{\leq r} := \left  \{ a \in X \ \ \Big  \vert \ \ {\rm rk} \frac{\partial f}{\partial x} (a) \leq r \right \} \ \ \mbox{  for  } \ \ 0 \leq r \leq n-k
 \]
be the loci at which the Zariski tangent spaces are of dimension $\geq n-r$.
 For an arbitrary $\FF_{q^s}$-linear code $C \subset \FF_{q^s} ^n$  consider the locus
 \begin{align*}
 X(C) :=  \\
  \{ a \in X \,  \vert \,   T_a (X, \FF_{q^{\delta (a)}}) \otimes _{\FF_{q^{\delta (a)}}} \FF_{q^{l(a)}} = C \otimes _{\FF_{q^s}} \otimes \FF_{q^{l(a)}} \,  \mbox{  for  } \,  l(a) = LCM ( \delta (a),s) \in \NN \}
 \end{align*}
  of $X$ at which  the tangent codes  coincide with $C$ after an  appropriate extension  of the   constant   field.

 (i) The  loci
 \[
 X_{\leq r} = V \left( \det \frac{\partial f_{\lambda}}{\partial x_{\mu}} \ \ \Big \vert \ \
   \lambda \in \Sigma _{r+1} (1, \ldots m), \ \   \mu \in \Sigma _{r+1} (1, \ldots , n) \right) \cap X
 \]
 are Zariski closed subsets of $X$.

 In particular, $X_{\leq n-k-1} = X^{\rm sing}$ is the set of the  singular  points of $X$.

 (ii)   If $C \subset \FF_{q^s} ^n$ has parity check matrix   $H \in M_{r \times n} ( \FF_{q^s})$ of rank $0 \leq r \leq n-k$  and $H_{\lambda}$ is the matrix, formed by the columns of $H$, labeled by $\lambda \subseteq \{ 1, \ldots , n \}$, then   $X(C) = Z(C) \cap X_{ >r-1}$ is the  intersection of the Zariski closed subset
 \[
 Z(C) := V \left( \det \left(  \begin{array}{c}
 \frac{\partial f_j}{\partial x_{\lambda}}   \\
 H_{\lambda}
 \end{array}  \right) \ \  \Big \vert \ \   1 \leq j \leq m, \ \  \lambda \in \Sigma _{r+1} (1, \ldots , n) \right) \cap X
 \]
 of $X$ with the Zariski open subset $X_{>r-1} := X \setminus X_{\leq r-1}$.
 In particular, if $\dim  Z(C) \geq 1$ then $X(C)$ is an infinite set.

 (iii) For an $\FF_q$-linear code  $C \subset \FF_q^n$ of $\dim _{\FF_q} C = k$ with  parity check matrix $H \in M_{(n-k) \times n} ( \FF_q)$    and  for arbitrary polynomials $g_1, \ldots , g_{n-k} \in \FF_q [ x_1, \ldots , x_n]$, the affine variety  $X_C := V( f_1, \ldots , f_{n-k})$ cut by
 \[
 f_i (x_1, \ldots , x_n) = \sum\limits _{j=1} ^n H_{ij} x_j + g_i ( x_1 ^p, \ldots , x_n ^p) \ \ \mbox{    with   } \ \ \forall 1 \leq i \leq n-k
 \]
 is smooth and has constant tangent codes  $T_a (X, \FF_{q^{\delta (a)}}) = C \otimes _{\FF_q} \FF_{q^{\delta (a)}}$ for $\forall a \in X$, i.e., $X_C = X(C)$.
 \end{lemma}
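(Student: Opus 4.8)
The three parts are largely independent linear-algebra computations attached to the Jacobian criterion, so I would treat them in order. For part (i), the statement "$\mathrm{rk}\,\frac{\partial f}{\partial x}(a)\le r$" is, by definition of rank, the simultaneous vanishing of all $(r+1)\times(r+1)$ minors of the Jacobian matrix at $a$. Each such minor is $\det\frac{\partial f_\lambda}{\partial x_\mu}$ for a choice of row-indices $\lambda\in\Sigma_{r+1}(1,\dots,m)$ and column-indices $\mu\in\Sigma_{r+1}(1,\dots,n)$, and these are polynomials in $\FF_q[x_1,\dots,x_n]$ (hence in $\overline{\FF_q}[x_1,\dots,x_n]$). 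So $X_{\le r}$ is the intersection of $X$ with the common zero locus of these finitely many polynomials, which is Zariski closed. The remark that $X_{\le n-k-1}=X^{\mathrm{sing}}$ is immediate: since $\dim X=k$ and $I(X,\overline{\FF_q})$ is generated by $m$ polynomials, $\dim T_a(X,\overline{\FF_q})\ge k$ always, with equality (smoothness) exactly when $\mathrm{rk}\,\frac{\partial f}{\partial x}(a)=n-k$; so $\mathrm{rk}\le n-k-1$ is precisely the singular condition. The first step — reindexing minors as $\det\frac{\partial f_\lambda}{\partial x_\mu}$ — is routine but is the thing to state carefully, since the rest of the lemma reuses this bookkeeping.

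For part (ii), the key observation is that $a\in X(C)$ iff two conditions hold: first, $\mathrm{rk}\,\frac{\partial f}{\partial x}(a)=r$ (so that the tangent code has the same dimension $n-r$ as $C$), i.e. $a\in X_{>r-1}$; and second, the rowspace of $\frac{\partial f}{\partial x}(a)$ over $\FF_{q^{l(a)}}$ equals the rowspace of $H$. Given the first condition, the second is equivalent to: every row of $\frac{\partial f}{\partial x}(a)$ lies in the rowspace of $H$ — because $H$ has exactly $r$ independent rows and, once the ranks agree, one inclusion of rowspaces forces equality. "The $j$-th row of the Jacobian lies in $\mathrm{rowspace}(H)$" is in turn equivalent to saying the $(r+1)\times n$ matrix obtained by stacking $\mathrm{grad}(f_j)(a)$ on top of $H$ has rank $\le r$, i.e. all its $(r+1)\times(r+1)$ minors vanish; those minors are exactly $\det\left(\begin{smallmatrix}\frac{\partial f_j}{\partial x_\lambda}\\ H_\lambda\end{smallmatrix}\right)$ for $\lambda\in\Sigma_{r+1}(1,\dots,n)$, which defines $Z(C)$. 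Here one must check that tensoring up to $\FF_{q^{l(a)}}$ does not change any of these rank conditions — but that is exactly the content, recalled in Section 2, that $T_a(X,\FF_{q^{l\delta(a)}})=T_a(X,\FF_{q^{\delta(a)}})\otimes\FF_{q^{l\delta(a)}}$ has the same parity-check matrix, hence the same rank and the same rowspace-equalities, over any extension field; so the minors characterize the $\overline{\FF_q}$-statement and there is no field-of-definition subtlety. Thus $X(C)=Z(C)\cap X_{>r-1}$. Since $X_{>r-1}$ is a non-empty (when $X(C)\ne\emptyset$) Zariski-open subset of the irreducible variety $X$ it is dense, so if $\dim Z(C)\ge 1$ then $Z(C)\cap X_{>r-1}$ is infinite — I would phrase this as: a $1$-dimensional (or higher) closed set minus a proper closed subset is still infinite, using that over $\overline{\FF_q}$ a positive-dimensional variety has infinitely many points. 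One small gap to watch: $Z(C)$ itself need not be contained in $X_{>r-1}$, so the intersection could in principle drop dimension; the cleanest fix is to note $Z(C)$ irreducible components meeting $X_{>r-1}$ keep their dimension there, or simply to argue on the locus where the Jacobian has rank exactly $r$ from the start.

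For part (iii), the point is a direct differentiation: with $f_i(x)=\sum_j H_{ij}x_j+g_i(x_1^p,\dots,x_n^p)$, we get $\frac{\partial f_i}{\partial x_\ell}=H_{i\ell}+\frac{\partial}{\partial x_\ell}g_i(x_1^p,\dots,x_n^p)$, and since $p=\mathrm{char}\,\FF_q$, the chain rule gives $\frac{\partial}{\partial x_\ell}g_i(x_1^p,\dots,x_n^p)=p\,x_\ell^{p-1}\big(\partial_\ell g_i\big)(x^p)=0$ identically. Hence $\frac{\partial f}{\partial x}(a)=H$ for every $a\in X_C$, so $T_a(X_C,\FF_{q^{\delta(a)}})$ is the code with parity-check matrix $H$, namely $C\otimes_{\FF_q}\FF_{q^{\delta(a)}}$, for all $a$; in particular the tangent spaces all have dimension $n-\mathrm{rk}(H)=n-(n-k)=k$, and since $\dim X_C\ge n-(n-k)=k$ with $\dim T_a=k$ everywhere, we get $\dim X_C=k$ and $X_C$ smooth (and, being cut by $n-k=n-\dim X_C$ equations, a complete intersection; one should also note the $f_i$ then generate the absolute ideal and $X_C$ is irreducible — this follows because $\Pi_\alpha:X_C\to\overline{\FF_q}^k$ for a suitable $k$-subset $\alpha$ on which $H_\alpha$ is invertible is an isomorphism, exactly as in the proof of Corollary 3.8). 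Finally $X_C=X(C)$ because $X(C)\subseteq X_C$ trivially and we have just shown every point of $X_C$ lies in $X(C)$.

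The main obstacle, such as it is, is purely organizational rather than conceptual: parts (i) and (ii) require careful and consistent indexing of minors (which rows, which columns, over which field) and the verification that passing to a common extension field $\FF_{q^{l(a)}}$ changes nothing — once one commits to the convention that all these rank conditions are detected by the vanishing of explicit polynomial minors over $\FF_q$, everything else is bookkeeping. The only genuinely delicate point to get right is the equivalence, in part (ii), between "$\mathrm{rowspace}$ of the Jacobian at $a$ equals $\mathrm{rowspace}(H)$" and the stacked-matrix rank conditions cutting out $Z(C)$: one must use that the rank at $a$ is already forced to be $r$ (via membership in $X_{>r-1}$) before rowspace inclusion upgrades to equality. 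I would make that implication explicit rather than leave it to the reader.
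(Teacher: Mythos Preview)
Your proposal is correct and follows essentially the same route as the paper's proof: part (i) via vanishing of all $(r+1)\times(r+1)$ minors, part (ii) via the equivalence ``rowspan of the Jacobian at $a$ is contained in $\mathrm{rowspan}(H)$'' $\Leftrightarrow$ the stacked $(r+1)\times n$ matrices have rank $\le r$, combined with the rank-$\ge r$ condition from $X_{>r-1}$ to upgrade inclusion to equality, and part (iii) via the observation that the $p$-th-power terms have identically zero partial derivatives so the Jacobian is constantly $H$, followed by the dimension squeeze $k\le\dim X_C\le\dim T_a\le\dim C=k$. Your extra care about the infinitude claim in (ii) (that $Z(C)\cap X_{>r-1}$ might a priori drop dimension) and about irreducibility in (iii) goes beyond what the paper actually writes, but is consistent with its argument.
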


 \begin{proof}

(i) The condition ${\rm rk} \frac{\partial f}{\partial x} (a) \leq r$ is equivalent to the vanishing of all the minors of $\frac{\partial f}{\partial x} (a)$ of order $r+1$.

(ii) Note that $a \in X \setminus X_{\leq r-1}$ exactly when ${\rm rk} \frac{\partial f}{\partial x} (a) \geq r$, which is tantamount to
\[
\dim T_a (X, \FF_{q^{\delta (a)}}) \otimes _{\FF_{q^{\delta (a)}}} \FF_{q^{l(a)}} \leq n-r = \dim _{\FF_{q^{l(a)}}} C \otimes _{\FF_{q^s}} \FF_{q^{l(a)}}
\]
 for $l(a) = LCM ( \delta (a),s) \in \NN$.
 Thus, a point $a \in X \setminus X_{\leq r-1}$ belongs to $X(C)$ if and only if $T_a (X, \FF_{q^{\delta (a)}}) \otimes _{\FF_{q^{\delta (a)}}} \FF_{q^{l(a)}} \supseteq C \otimes _{\FF_{q^s}} \FF_{q^{l(a)}}$.
This is equivalent to the opposite inclusion of the corresponding dual codes,
 \[
 {\rm Span}  _{\FF_{q^{l(a)}}} \left(    \frac{\partial f_j}{\partial x} (a) \ \ \Big \vert \ \ 1 \leq j \leq m \right)  \subseteq
  {\rm Row Span} _{\FF_{q^{l(a)}}} H.
  \]
 Note that $\frac{\partial f_j}{\partial x} (a) \in {\rm Row Span} _{\FF_{q^{l(a)}}} H$ if and only if
 \[
 {\rm rk} \left( \begin{array}{c}
 \frac{\partial f_j}{\partial x}(a)  \\
 H
 \end{array}  \right) = {\rm rk} H = r
 \]
  and this is equivalent to the vanishing of all the minors of
\[
\left(  \begin{array}{c}
 \frac{\partial f_j}{\partial x} (a)  \\
H
\end{array}  \right) \in M_{(r+1) \times n} ( \FF_{q^{l(a)}})
\]
of order $r+1$.

(iii)  Note that the Jacobian matrix $\frac{\partial ( f_1, \ldots , f_{n-k})}{\partial (x_1, \ldots , x_n)} \equiv H$ is constant and
\[
T_a (X_C, \FF_{q^{\delta (a)}}) \subseteq C \otimes _{\FF_q} \FF_{q^{\delta (a)}},
\]
 according to $f_1, \ldots , f_{n-k} \in I(X_C, \overline{\FF_q})$.
Making use of
\[
k \leq \dim X_C \leq \dim T_a (X_C, \FF_{q^{\delta (a)}}) \leq \dim _{\FF_q} C = k,
\]
one concludes that $\dim X_C = k$, $T_a (X_C, \FF_{q^{\delta (a)}}) = C \otimes _{\FF_q} \FF_{q^{\delta (a)}}$ and $X_C$ consists of smooth points.

 \end{proof}

With some abuse of notation, if $X = X(C)$ for some linear code $C \subset \FF_q^n$, we say that $X$ has constant Zariski tangent bundles over all finite fields
 $\FF_{q^s} \subset \overline{\FF_q}$.

Denote     $p = {\rm char} \FF_q$   and     note that $t^n -1 \in \FF_p [t]$ has coefficients from the prime field $\FF_p$ of $\FF_q$.
For   any prime integer $p$ and any natural number  $n$, let $\FF_{\sigma}$, $\sigma = \sigma (p,n)$ be the splitting field of
 $t^n-1 \in \FF_p [t]$ over $\FF_p$.
We assume that $p$ and $n$ are relatively prime, so that $t^n-1$ has no multiple roots in the algebraic closure
 $\overline{\FF_p} = \cup _{s=1} ^{\infty} \FF_{p^s}$ of $\FF_p$.
For an arbitrary divisor $g(t) \in \FF_{\sigma}  [t]$ of $t^n-1$ with leading coefficient $1$ and an arbitrary $s \in \NN$, let us denote by
\[
\langle \overline{g} \rangle _{\FF_{\sigma^s}} :=
 \langle g(t) + \langle t^n-1 \rangle _{\FF_{\sigma^s}} \rangle \triangleleft \FF_{\sigma^s} [t] / \langle t^n-1 \rangle _{\FF_{\sigma^s}}
\]
the cyclic code over $\FF_{\sigma^s}$ with generator polynomial $g(t)$.
We are interested in the cyclic codes $\langle \overline{g} \rangle _{\FF_{\sigma^s}} \subset \FF_{\sigma^s} ^n$ over $\FF_{\sigma^s}$ of length $n$ and dimension
 \[
 k \leq \dim _{\FF_{\sigma ^s}} \langle \overline{g} \rangle _{\FF_{\sigma ^s}} = n - \deg g \leq n.
 \]
 These correspond to $g(t) \in \FF_\sigma  [t]$ of degree $0 \leq \deg g \leq n-k$.

 Denote by $\cD _{p,n}$ the set of the divisors $g(t) \in \overline{\FF_p} [t]$ of $t^n-1 \in \FF_p [t]$ with leading coefficient $1$ and put
 $\cD _{p,n} ( \nu) := \{ g \in \cD _{p,n} \ \ \vert \ \  \deg g = \nu \}$.
 Note that $\cD _{p,n}$ and therefore $\cD_{p,n} ( \nu)$ are finite sets.
 We say that all cyclic codes of length $n$ and dimension $k$  over $\overline{\FF_p}$ are realized as finite Zariski tangent spaces to an affine variety
  $X \subset \overline{\FF_p} ^n$ of $\dim X =k$, if for any $g \in \cD _{p,n} (n-k)$ there exists $a \in X$ with
 \[
 T_a (X, \FF_{q^{\delta (a)}}) \otimes _{\FF_{q^{\delta (a)}}} \FF_{q^{\delta (a,g)}} =
  \langle \overline{g} \rangle _{\FF_{q^{\delta (g)}}} \otimes _{\FF_{q^{\delta (g)}}} \FF_{q^{\delta (a,g)}},
 \]
 where $\FF_{q^{\delta (g)}}$ is the common definition field of all the coefficients of $g(t) \in \overline{\FF_p}[t]$ and $\FF_{q^{\delta (a, g)}}$ is the common definition field of $a$ and the coefficients of $g(t)$.
 All  cyclic codes of length $n$  over $\overline{\FF_p}$ are realized as finite Zariski tangent spaces to an affine variety
  $X \subset \overline{\FF_p} ^n$ if  this holds for all cyclic codes of length $n$ and an arbitrary  dimension $0 \leq k \leq n$ over $\overline{\FF_p}$.

  The next corollary  constructs an affine variety, whose all finite Zariski tangent spaces are cyclic codes and an affine variety with an arbitrary number of non-cyclic tangent codes.

\begin{corollary}    \label{CyclicTangentCodes}
(i) Let $X / \FF_q \subset \overline{\FF_q} ^n$ be an irreducible $k$-dimensional affine variety, defined over $\FF_q$,
\[
XT^{\zeta} := \{ a \in X \ \ \vert \ \ \zeta T_a (X, \FF_{q^{\delta (a)}}) = T_a (X, \FF_{q^{\delta (a)}}) \}
\]
be the subset of $X$, at which the finite Zariski tangent spaces are cyclic codes and
$X^{\rm smooth} = X_{\geq n-k} = X_{n-k} := \{ a \in X \ \ \vert \ \ {\rm rk} \frac{\partial f}{\partial x} (a) = n-k \}$ be the smooth locus of $X$.
Then $XT^{\zeta} \cap X^{\rm smooth}$ is a Zariski closed subset of $X^{\rm smooth}$.

(ii) For an arbitrary natural number $n$ and  an arbitrary prime integer $p$, relatively prime to $n$,  there exists a smooth affine variety
\[
X^{\zeta}_{p,n} := \coprod\limits _{g \in \cD_{p,n}} X_{\langle \overline{g} \rangle _{\FF_q}} \subset \overline{\FF_p} ^n,
\]
such that all cyclic codes of length $n$ over $\overline{\FF_p}$ are realized as Zariski tangent spaces to $X^{\zeta}_{p,n}$ and all    finite Zariski tangent spaces  to $ X^{\zeta}_{p,n}$ are cyclic codes.

(iii) For arbitrary $k,n, M \in \NN$ with  $k < n$  and an arbitrary prime integer $p$ with   $GCD(p,n) =1$, there is an affine variety $X_{p,n,k} (M) \subset \overline{\FF_p} ^n$ of $\dim X_{p,n.k} (M) =k$, such that all cyclic codes of length $n$ and dimension $k$ over $\overline{\FF_p}$ are realized as finite Zariski tangent spaces to $X_{p,n,k} (M)$ and there are at least $M$ non-cyclic finite Zariski tangent spaces to $X_{p,n,k} (M)$.
\end{corollary}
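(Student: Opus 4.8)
\emph{Part (i).} The plan is to realize $XT^{\zeta}\cap X^{\rm smooth}$ as the zero locus, inside $X^{\rm smooth}$, of finitely many polynomials, in the spirit of Lemma \ref{ConstantLocus}(ii). A linear code $C$ is cyclic if and only if its dual $C^{\perp}$ is cyclic, and, $\zeta$ being bijective, $C^{\perp}$ is cyclic exactly when $\zeta(C^{\perp})\subseteq C^{\perp}$. For $a\in X^{\rm smooth}=X_{n-k}$ the dual $T_a(X,\FF_{q^{\delta(a)}})^{\perp}$ is the row span of $\frac{\partial f}{\partial x}(a)$, of rank exactly $n-k$. Writing $g_j(x)=\bigl(\frac{\partial f_j}{\partial x_1},\dots,\frac{\partial f_j}{\partial x_n}\bigr)$ for the $j$-th row of the Jacobian, the inclusion $\zeta(T_a^{\perp})\subseteq T_a^{\perp}$ holds iff each cyclic shift $\zeta(g_j(a))$ lies in the row span of $\frac{\partial f}{\partial x}(a)$; since that span has dimension $n-k$, this amounts to the vanishing at $a$ of every $(n-k+1)\times(n-k+1)$ minor of the $(m+1)\times n$ matrix obtained from $\frac{\partial f}{\partial x}(x)$ by adjoining the row $\zeta(g_j(x))$. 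These minors are polynomials with coefficients in $\FF_q$, so
\[
XT^{\zeta}\cap X^{\rm smooth}=X^{\rm smooth}\cap V\Bigl(\{\text{all such minors}\}_{1\le j\le m}\Bigr)
\]
is Zariski closed in $X^{\rm smooth}$.

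\emph{Part (ii).} For each nontrivial monic divisor $g\in\cD_{p,n}$ of $t^n-1$ the plan is to apply Lemma \ref{ConstantLocus}(iii) to produce a smooth $(n-\deg g)$-dimensional variety $X_{\langle\overline g\rangle}$ with \emph{constant} tangent bundle equal to a base change of $\langle\overline g\rangle$: fix a parity check matrix $H^{(g)}\in M_{(\deg g)\times n}(\FF_{\sigma})$ of $\langle\overline g\rangle$ (the splitting field $\FF_{\sigma}$, $\sigma=\sigma(p,n)$, contains all roots of $t^n-1$, hence all coefficients of $g$) and apply Lemma \ref{ConstantLocus}(iii) with carefully chosen Frobenius twists $g_i^{(g)}(x_1^p,\dots,x_n^p)$; the resulting $X_{\langle\overline g\rangle}=V(f_1^{(g)},\dots,f_{\deg g}^{(g)})$ is smooth with $T_a(X_{\langle\overline g\rangle},\FF_{p^{\delta(a)}})=\langle\overline g\rangle\otimes\FF_{p^{\delta(a)}}$ at all its points $a$. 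Set $X^{\zeta}_{p,n}:=\bigcup_g X_{\langle\overline g\rangle}$. Choosing the twists so that no $X_{\langle\overline g\rangle}$ is contained in another component and so that distinct components meet only where the Jacobian of $I(X^{\zeta}_{p,n},\overline{\FF_p})$ vanishes identically — in particular at the origin, which lies on every $X_{\langle\overline g\rangle}$ and where $T_0(X^{\zeta}_{p,n})=\overline{\FF_p}^n$ — one gets: each $X_{\langle\overline g\rangle}$ is an irreducible component, and on the nonempty Zariski open locus of its points lying on $X_{\langle\overline g\rangle}$ alone the tangent space of $X^{\zeta}_{p,n}$ equals $\langle\overline g\rangle$, so every cyclic code of length $n$ over $\overline{\FF_p}$ (the full code at the origin, each proper one at such a generic point) is realized; and at a point $a$ lying on components $X_{\langle\overline{g_1}\rangle},\dots,X_{\langle\overline{g_r}\rangle}$ the tangent space contains the cyclic code $\sum_{i=1}^r\langle\overline{g_i}\rangle$ and, by the choice of twists, is forced to be all of $\overline{\FF_p}^n$, hence cyclic. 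Thus all finite Zariski tangent spaces to $X^{\zeta}_{p,n}$ are cyclic.

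\emph{Part (iii).} Here the plan is interpolation via Proposition \ref{DestabilizationMinDist}. Fix $s\in\NN$ with $\FF_{\sigma}\subseteq\FF_{p^s}$ and $\binom{n}{k}_{p^s}>|\cD_{p,n}(n-k)|+M$, which is possible because the number of $[n,k]$-codes over $\FF_{p^s}$ tends to infinity with $s$ for $0<k<n$ while the number of cyclic ones is $|\cD_{p,n}(n-k)|$. Choose distinct points $a_1,\dots,a_N,b_1,\dots,b_M\in\FF_{p^s}^n$, let $\cC(a_1),\dots,\cC(a_N)$ run over the $N=|\cD_{p,n}(n-k)|$ cyclic $[n,k]$-codes over $\overline{\FF_p}$ (each already defined over $\FF_{\sigma}\subseteq\FF_{p^s}$) and let $\cC(b_1),\dots,\cC(b_M)$ be $M$ pairwise distinct non-cyclic $[n,k]$-codes over $\FF_{p^s}$; feed this family to Proposition \ref{DestabilizationMinDist}. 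It produces irreducible $k$-dimensional $Y_1,\dots,Y_t\subset\overline{\FF_p}^n$ with $S:=\{a_i\}\cup\{b_j\}\subseteq\bigcup_\ell Y_\ell^{\rm smooth}(\FF_{p^s})$ and $T_a(Y_\ell,\FF_{p^s})=\cC(a)$ for $a\in S$, $Y_\ell\ni a$. Put $X_{p,n,k}(M):=Y_1\cup\dots\cup Y_t$, of dimension $k$. At every $a\in S$ the Jacobian of the defining equations has rank $n-k$ (it is a parity check matrix of $\cC(a)$), so $\dim T_a(X_{p,n,k}(M),\FF_{p^{\delta(a)}})=k=\dim X_{p,n,k}(M)$ and $a$ is a smooth point of the whole union, lying on a single component; hence $T_a(X_{p,n,k}(M),\FF_{p^{\delta(a)}})=\cC(a)$. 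By Galois descent this realizes each cyclic $[n,k]$-code over $\overline{\FF_p}$ at the corresponding $a_i$ and exhibits $M$ distinct non-cyclic finite Zariski tangent spaces at the $b_j$.

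The main obstacle is the intersection analysis in part (ii): the Frobenius twists of Lemma \ref{ConstantLocus}(iii) must be chosen so that simultaneously every $X_{\langle\overline g\rangle}$ survives as a component carrying a point of its own (so that \emph{all} cyclic codes, not just a few, appear) and the Zariski tangent space of $X^{\zeta}_{p,n}$ stays cyclic at every common point of several components. This is a general-position statement constrained to polynomials in $x_1^p,\dots,x_n^p$; granted it, parts (i) and (iii) are routine bookkeeping over Lemma \ref{ConstantLocus} and Proposition \ref{DestabilizationMinDist}.
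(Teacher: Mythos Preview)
Your argument for (iii) is essentially the paper's: both feed a family containing all $\binom{n}{k}$ cyclic $[n,k]$-codes together with $M$ non-cyclic ones into Proposition~\ref{DestabilizationMinDist} over a large enough $\FF_{p^s}$, and take $X_{p,n,k}(M)=Y_1\cup\dots\cup Y_t$. (Your extra claim that each $a\in S$ lies on a single component is neither needed nor asserted by Proposition~\ref{DestabilizationMinDist}; what that proposition gives directly is that the Jacobian of the defining $f_1,\dots,f_{n-k}$ equals $\cH(a)$, so already $T_a(X_{p,n,k}(M),\FF_{p^s})=\cC(a)$ for the union, regardless of how many $Y_i$ pass through $a$.)

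For (i) you take a genuinely different route. The paper exploits the finiteness of cyclic codes of a fixed dimension: it writes
\[
XT^{\zeta}\cap X^{\rm smooth}=XT^{\zeta}_{n-k}=\Bigl(\bigcup_{g\in\cD_{p,n}(n-k)}Z(\langle\overline g\rangle_{\FF_q})\Bigr)\cap X^{\rm smooth},
\]
each $Z(\langle\overline g\rangle_{\FF_q})$ being Zariski closed by Lemma~\ref{ConstantLocus}(ii). You instead test $\zeta$-invariance of $T_a(X)^{\perp}$ directly by adjoining $\zeta(g_j(x))$ to the Jacobian and killing the $(n-k+1)$-minors. Your approach is more elementary and does not need to enumerate the cyclic codes; the paper's approach has the advantage of fitting into the general ``constant tangent code'' machinery of Lemma~\ref{ConstantLocus} and of making transparent why the cyclic locus is \emph{exceptional} (a finite union of proper closed loci).

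For (ii) you are making life harder than the paper does. The symbol $\coprod$ in the statement is a disjoint union: the paper simply applies Lemma~\ref{ConstantLocus}(iii) once for each $g\in\cD_{p,n}$ to get a smooth $X_{\langle\overline g\rangle_{\FF_q}}$ with constant cyclic tangent bundle, and takes the varieties pairwise disjoint (the lemma leaves the polynomials $g_i(x_1^p,\dots,x_n^p)$ completely free, so one may, for instance, translate the pieces apart). Once the components do not meet, every point lies on exactly one $X_{\langle\overline g\rangle_{\FF_q}}$, and the tangent space there is $\langle\overline g\rangle\otimes\FF_{q^{\delta(a)}}$, cyclic. Your ``main obstacle''---controlling the tangent space at intersection points---is self-imposed; the paper's three-line proof sidesteps it entirely by not allowing intersections in the first place.
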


\begin{proof}

(i)   For an arbitrary $ 0 \leq r \leq n-k$ let us consider the subset
 \[
XT^{\zeta} _r := \{ a \in XT^{\zeta} \ \ \vert \ \ \dim  T_a (X, \FF_{q^{\delta (a)}}) = n-r \}
\]
of $XT^{\zeta}$.
It suffices to show that
\[
XT^{\zeta} _r = Z^{\zeta} _r \cap X_{\geq r}
\]
 is  the  intersection of a Zariski closed subset $ Z^{\zeta} _r \subseteq X$
   with the Zariski open subset $X_{>r-1} := X \setminus X_{\leq r-1}$ of $X$.
 In the notations of Lemma \ref{ConstantLocus} (ii), one has
\begin{align*}
XT^{\zeta} _r = \cup _{g \in \cD _{p,n} (n-r)}  \, X( \langle \overline{g} \rangle _{\FF_q} ) =  \\
\cup _{g \in \cD _{p,n} (n-r)} \,  [ Z( \langle \overline{g} \rangle _{\FF_q}) \cap X_{>r-1}] =   \\
\left[ \cup _{g \in \cD_{p,n} (n-r)} Z( \langle \overline{g} \rangle _{\FF_q} ) \right] \cap X_{ >r-1}=  \\
Z^{\zeta} _r \cap X_{\geq r}
\end{align*}
for the Zariski closed subset
\[
Z^{\zeta} _r := \cup _{g \in \cD _{p,n} (n-r)} Z( \langle \overline{g} \rangle _{\FF_q} ) \subseteq X.
\]
In the case of $r = n-k$, one concludes that
\[
XT^{\zeta} _{n-k}  = Z^{\zeta} _{n-k} \cap X_{\geq n-k} = Z^{\zeta} _{n-k} \cap X_{n-k} = XT^{\zeta} \cap X^{\rm smooth}
 \]
 is a Zariski closed subset of $X^{\rm smooth}$.

(ii) In the notations from Lemma \ref{ConstantLocus} (iii), $X_{\langle \overline{g} \rangle _{\FF_q}} $ are smooth affine varieties  whose all finite Zariski tangent spaces
\[
T_a (X _{\langle \overline{g} \rangle _{\FF_q}}, \FF_{q^{\delta (a)}})  = \langle \overline{g} \rangle _{\FF_q} \otimes _{\FF_q} \FF_{q^{\delta (a)}} =
\langle \overline{g} \rangle _{\FF_q^{\delta (a)}} \subset \FF_{q^{\delta (a)}}^n
\]
are cyclic codes.

(iii)  For an arbitrary natural number $n$ and an arbitrary prime integer $p$, which is relatively prime to $n$, there are
$|\cD _{p,n} (n-k)| = \binom{n}{n-k} = \binom{n}{k}$ generator polynomials $g(t)$ of cyclic codes of dimension $k$ over $\overline{\FF_p}$.
These $g(t)$ are uniquely determined by their $n-k$ roots, contained in the set of the  $n$ distinct roots of $t^n-1$ in $\overline{\FF_p}$.
For any $M \in \NN$ there exists $s \in \NN$, such that $p^s - \binom{n}{k} \geq M$ and $\FF_{p^s}$ contains the splitting fields of all
 $g \in \cD _{p,n} (n-k)$.
Consider a family $\cC \rightarrow \FF_{p^s}^n$ of $\FF_{p^s}$-linear codes $\cC(a) \subset \FF _{p^s} ^n$ of length $n$ and dimension $k$, which contains all cyclic codes $\langle \overline{g} \rangle _{\FF_{p^s}}$ with generator polynomials $g \in \cD _{p,n} (n-k)$ as fibres over some points of $\FF _{p^s} ^n$, and has at least $M$ non-cyclic fibres $\cC (b)$, $b \in \FF_{p^s} ^n$.
By Proposition \ref{DestabilizationMinDist}, there exist irreducible $k$-dimensional affine  varieties
$Y_1 / \FF_{p^s} ,    \ldots , Y_m / \FF_{p^s}   \subset \overline{\FF_p} ^n$, defined over $\FF_{p^s}$, such that
\begin{align*}
\FF_{p^s} \subseteq Y_1^{\rm smooth} ( \FF_{p^s})  \cup \ldots \cup Y_m ^{\rm smooth} ( \FF_{p^s}) \ \ \mbox{  and } \\
 T_a( Y_i, \FF_{p^s}) = \cC(a) \ \ \mbox{  for all } \ \ a \in \FF_{p^s}  \ \ \mbox{ and all } \ \ Y_i \ni a.
 \end{align*}
 The affine variety $ X_{p,n,k} (M) := Y_1 \cup \ldots\cup Y_m $  of dimension $k$ realizes all cyclic codes $\langle \overline{g} \rangle _{\FF_{p^s}}$ of length $n$ and dimension $k$  as Zariski tangent spaces and has at least $M$ non-cyclic Zariski tangent  codes over $\FF_{p^s}$.

\end{proof}

\subsection{   Hamming tangent codes    }

Let us fix a finite field $\FF_q$, a natural number $r \in \NN$ and denote $n = | \PP ^{r-1} ( \FF_q)| = \frac{q^r-1}{q-1}$.
Choose a complete set of liftings $H_1, \ldots , H_n \in M_{r \times n} ( \FF_q)$ of the points of the projective space
 $\PP ^{r-1} ( \FF_q) = \PP ( \FF_q ^r) \simeq \FF _q ^r \setminus \{ 0^r \} / \FF_q ^*$ or a complete set of $\FF_q^*$-orbit representatives on $\FF_q^r$.
The $\FF_q$-linear code $\cH _{q,r} \subset \FF_q^n$ with parity check matrix $H_{q,r} := (H_1 \ldots H_n)$ has minimum distance $3$ and is known as the Hamming code.
The present subsection constructs an embedding of $\overline{\FF_q} ^{n-r}$ in $\overline{\FF_q} ^n$, whose image has $q (q-1) ^{n-r-1}$ Hamming tangent codes in the case of an odd characteristic   ${\rm char} \FF_q$ and $q^{n-r}$ Hamming tangent codes  for ${\rm char} \FF_q =2$.

\begin{proposition}     \label{HammingTangentCodes}
Let $H_{q,r} = (H_1 \ldots H_n) \in M_{r \times n} ( \FF_q)$ with  $n = \frac{q^r-1}{a-1}$ be a parity check matrix of a Hamming code with
${\rm rk} (H_1 \ldots H_r) =r$ and $H_{r-1} + H_r = H_{r+1}$.
Then the puncturing $\Pi _{\rho} : \overline{\FF_q}^n \rightarrow \overline{\FF_q} ^{n-r}$ at $\rho = \{ 1, \ldots , r \}$ restricts to a biregular morphism
 $\Pi _{\rho} : X = V( f_1, \ldots , f_r) \rightarrow \overline{\FF_q} ^{n-r}$ of the affine variety $X$, cut by the polynomials
\[
f_i (x_1, \ldots , x_n) = \sum\limits _{j=1} ^n H_{ij} x_j + \sum\limits _{j = r+2} ^n H_{ij} x_j ^2 \ \ \mbox{   for  } \ \ 1 \leq i \leq r.
\]
If $p = {\rm char} \FF_q \geq 3$ then $T_a (X, \FF_q) \subset \FF_q ^n$ are Hamming codes for all
\[
a \in X( \FF_q) \setminus V \left( \prod\limits _{j=r+2} ^n (2x_j+1) \right)
\]
 and
  \[
 \left| X( \FF_q) \setminus  V \left( \prod\limits _{j=r+2} ^n (2x_j+1) \right) \right| = q(q-1) ^{n-r-1}.
 \]

In the case of ${\rm char} \FF_q =2$, the Zariski tangent spaces $T_a (X, \FF_q) \subset \FF_q ^n$ are Hamming codes for all the $\FF_q$-rational points
 $a \in X( \FF_q)$ and $|X( \FF_q)| = q^{n-r}$.
\end{proposition}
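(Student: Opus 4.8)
The plan is to reduce the whole statement to one observation: the Jacobian of $f=(f_1,\dots,f_r)$ at a point of $X$ is obtained from $H_{q,r}$ by rescaling the columns $r+2,\dots,n$ by the factors $1+2a_{r+2},\dots,1+2a_n$, and a column–rescaling of a Hamming parity check matrix is again one precisely when no rescaling factor vanishes; everything else is bookkeeping.

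First I would establish the biregularity of $\Pi_\rho$ and, simultaneously, identify $I(X,\overline{\FF_q})$. Since ${\rm rk}(H_1\dots H_r)=r$, the matrix $(H_1\dots H_r)\in M_{r\times r}(\FF_q)$ is invertible, and multiplying the column $(f_1,\dots,f_r)^t$ on the left by $(H_1\dots H_r)^{-1}$ replaces the generators $f_1,\dots,f_r$ by polynomials of the form $x_i-g_i(x_{r+1},\dots,x_n)$, $1\le i\le r$, with $g_i\in\FF_q[x_{r+1},\dots,x_n]$ and $\langle f_1,\dots,f_r\rangle=\langle x_1-g_1,\dots,x_r-g_r\rangle$. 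Hence $\overline{\FF_q}[x_1,\dots,x_n]/\langle f_1,\dots,f_r\rangle\simeq\overline{\FF_q}[x_{r+1},\dots,x_n]$ is a domain, so $\langle f_1,\dots,f_r\rangle$ is prime and equals $I(X,\overline{\FF_q})$; therefore $X$ is irreducible of dimension $n-r$, and the section $(x_{r+1},\dots,x_n)\mapsto(g_1,\dots,g_r,x_{r+1},\dots,x_n)$ is a polynomial inverse to $\Pi_\rho|_X$, so $\Pi_\rho:X\to\overline{\FF_q}^{\,n-r}$ is biregular and, being defined over $\FF_q$, identifies $X(\FF_q)$ with $\FF_q^{\,n-r}$.

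Next I would read off the tangent code. For $a\in X(\FF_q)$ one has $\frac{\partial f_i}{\partial x_j}(a)=H_{ij}$ for $1\le j\le r+1$ and $\frac{\partial f_i}{\partial x_j}(a)=(1+2a_j)H_{ij}$ for $r+2\le j\le n$, so $T_a(X,\FF_q)\subset\FF_q^n$ has parity check matrix $\bigl(H_1\ \dots\ H_{r+1}\ (1+2a_{r+2})H_{r+2}\ \dots\ (1+2a_n)H_n\bigr)$. If ${\rm char}\,\FF_q\ge 3$: the columns of $H_{q,r}$ form a complete set of representatives of the points of $\PP^{r-1}(\FF_q)$, so rescaling columns by non-zero scalars again produces such a matrix, whence $T_a(X,\FF_q)$ is a Hamming code whenever $1+2a_j\ne 0$ for all $r+2\le j\le n$; conversely, if $1+2a_j=0$ for some such $j$ the $j$-th column is zero, $T_a(X,\FF_q)$ contains a weight-one word and is not Hamming. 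Since $1+2a_j=0\iff a_j=-2^{-1}$, the locus in question is exactly $X(\FF_q)\setminus V\bigl(\prod_{j=r+2}^n(2x_j+1)\bigr)$; transporting it through $\Pi_\rho$ to $\FF_q^{\,n-r}$, the coordinate $a_{r+1}$ is unconstrained while each of $a_{r+2},\dots,a_n$ avoids the single value $-2^{-1}$, which gives $q(q-1)^{\,n-r-1}$ points.

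In characteristic $2$ every factor $1+2a_j$ equals $1$, so $\frac{\partial f}{\partial x}(a)=H_{q,r}$ identically; hence $T_a(X,\FF_q)=\cH_{q,r}$ is Hamming for every $a\in X(\FF_q)$, and $|X(\FF_q)|=|\FF_q^{\,n-r}|=q^{\,n-r}$. I expect the only points needing care to be the identification $I(X,\overline{\FF_q})=\langle f_1,\dots,f_r\rangle$ — so that $\frac{\partial f}{\partial x}(a)$ is genuinely a parity check matrix of $T_a(X,\FF_q)$, which the invertible change of generators above takes care of — and the precise reading of ``is a Hamming code'' as ``has a parity check matrix carrying one non-zero vector from each line of $\FF_q^r$'' (equivalently, is monomially equivalent to $\cH_{q,r}$), together with the matching bookkeeping in the count; beyond that, only ${\rm rk}(H_1\dots H_r)=r$ and the pairwise linear independence of the columns of $H_{q,r}$ are used.
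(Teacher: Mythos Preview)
Your proposal is correct and follows essentially the same route as the paper: invert $(H_1\dots H_r)$ to put the defining equations in the form $x_i-g_i(x_{r+1},\dots,x_n)$ (giving biregularity of $\Pi_\rho$ and the identification $I(X,\overline{\FF_q})=\langle f_1,\dots,f_r\rangle$), then read off the Jacobian as $H_{q,r}$ with columns $r+2,\dots,n$ rescaled by $1+2a_j$ and conclude. The only visible difference is that the paper additionally uses the hypothesis $H_{r-1}+H_r=H_{r+1}$ to exhibit the word $(0^{r-2},1,1,-1,0^{n-r-1})$ in every tangent code, hence $X=X^{(\le 3)}$; as you observe, this is not needed for the proposition as stated, and your argument that $I(X,\overline{\FF_q})=\langle f_1,\dots,f_r\rangle$ is in fact a bit more explicit than the paper's.
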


\begin{proof}

Let us denote
\[
A  := (H_1 \ldots H_r) ^{-1} \in M_{r \times r} ( \FF_q),  \ \ H'' := (H_{r+2} \ldots H_n) \in M_{r \times (n-r-1)} ( \FF_q)
\]
and observe that
\begin{align*}
\left(  \begin{array}{c}
f_1  \\
\ldots  \\
f_r
\end{array}   \right) =
\left( \begin{array}{ccc}
A^{-1}  &  H_{r+1}  &  H''
\end{array}  \right)
\left(  \begin{array}{c}
x_1  \\
\ldots \\
x_n
\end{array}  \right) +
H'' \left(  \begin{array}{c}
x_{r+2} ^2  \\
\ldots  \\
x_n ^2
\end{array}  \right).
\end{align*}
Therefore
\begin{align*}
A \left(  \begin{array}{c}
f_1  \\
\ldots  \\
f_r
\end{array}  \right) =
\left(  \begin{array}{ccc}
I_r &  (A H_{r+1})  & (AH'')
\end{array}  \right)
\left(  \begin{array}{c}
x_1  \\
\ldots  \\
x_n
\end{array}  \right) +
(AH'') \left(   \begin{array}{c}
x_{r+2}^2  \\
\ldots  \\
x_n ^2
\end{array}  \right)
\end{align*}
and
\begin{equation}    \label{SimplifiedEquations}
A \left(  \begin{array}{c}
f_1  \\
\ldots  \\
f_r
\end{array}  \right) = x_i + (AH_{r+1}) _i x_{r+1} + \sum\limits _{j=r+2} ^n (AH'') _{ij} x_j(x_j+1) \ \ \mbox{  for  } \ \ \forall 1 \leq i \leq r.
\end{equation}
Denote
\[
g_i (x_{r+1}, \ldots , x_n) := - (AH_{r+1}) _i x_{r+1} - \sum\limits _{j=r+2}^n (AH'') _{ij} x_j (x_j +1) \in \FF_q [ x_{r+1}, \ldots , x_n]
\]
for $\forall r+1 \leq i \leq n$.
The fiber of the puncturing $\Pi _{\rho} : X \rightarrow \overline{\FF_q} ^{n-r}$ over an  arbitrary point
$a'= (a_{r+1}, \ldots , a_n) \in \overline{\FF_q} ^{n-r}$  is  claimed to  consist of a single point
\begin{equation}   \label{FibreOfPuncturing}
\Pi _{\rho} ^{-1} (a') \cap X = \{ (g_1(a'), \ldots , g_r (a'), a') \}.
\end{equation}
Indeed, if $a = (a_1, \ldots , a_r, a') \in \Pi _{\rho} ^{-1} (a') \cap X$ then $f_1 (a) = \ldots = f_r (a) =0$ and (\ref{SimplifiedEquations}) implies that
 $a_i - g_i (a') =0$ for $\forall 1 \leq i \leq r$.
 Conversely, if $a_i = g_i (a')$ for $\forall 1 \leq i \leq r$ then (\ref{SimplifiedEquations}) requires
 \[
 A \left(  \begin{array}{c}
 f_1 (a)  \\
 \ldots  \\
 F_r (a)
 \end{array}  \right) = 0_{r \times 1},
 \]
 whereas
 \[
 \left(  \begin{array}{c}
 f_1 (a)  \\
 \ldots  \\
 f_r (a)
 \end{array}  \right) = A^{-1} 0_{r \times 1} = (H_1 \ldots H_r) 0_{r \times 1} = 0_{r \times 1}
 \]
 and $(g_1(a'), \ldots , g_r (a'), a') \in \Pi _{\rho} ^{-1} (a') \cap X$.
 In such a way,
 \[
 \Pi _{\rho} ^{-1} : \overline{\FF_q} ^{n-r} \longrightarrow X,
 \]
 \[
 \Pi _{\rho} ^{-1} (x_{r+1}, \ldots , x_n) = (g_1(x_{r+1}, \ldots , x_n), \ldots , g_r (x_{r+1}, \ldots , x_n), x_{r+1}, \ldots , x_n)
 \]
 is a correctly defined morphism of affine varieties and $\Pi _{\rho} : X \rightarrow \overline{\FF_q} ^{n-r}$ is a biregular map.

According to
 \[
 \frac{\partial f_i}{\partial x_j} = H_{ij} \ \ \mbox{  for  } \ \ 1 \leq j \leq r+1, \ \ 1 \leq i \leq r \ \ \mbox{  and  }
 \]
 \[
 \frac{\partial f_i}{\partial x_j} = H_{ij} (1 + 2 x_j) \ \ \mbox{  for  } \ \  r+2 \leq j \leq n, \ \ 1 \leq i \leq r,
 \]
 the Jacobian matrix of the defining equations of $X$ is
 \[
 \frac{\partial f}{\partial x} = (H_1 \ldots H_r [(1 + 2 x_{r+2}) H_{r+2}] \ldots [(1 + 2 x_n) H_n]).
 \]
Due  to $H_{r-1} + H_r - H_{r+1} = 0_{r \times 1}$, the word $c := (0^{r-2}, 1,1,-1, 0^{n-r-1})$ belongs to all tangent codes
  $T_b (X, \FF_{q^{\delta (b)}})$, $b \in X$.
 Therefore $d(T_b (X, \FF_q ^{\delta (b)})) \leq 3$ and $X = X^{( \leq 3)}$.

 If ${\rm char} \FF_q =2$ then  $\frac{\partial f}{\partial x} (a) = H_{q,r}$ for all $a \in X$ and $T_a (X, \FF_q)$ are Hamming codes at all the $\FF_q$-rational points $a \in X( \FF_q)$.
 Note that the polynomials
 \[
 g_1 (x_{r+1}, \ldots , x_n), \ldots , g_r (x_{r+1}, \ldots , x_n)
 \]
  have coefficients from $\FF_q$, so that for any
 $a'\in \FF_q ^{n-r}$ one has $\Pi _{\rho} ^{-1} (a') \cap X \subseteq X( \FF_q )$.
 Thus, $\Pi _{\rho} : X \rightarrow \overline{\FF_q} ^{n-r}$ restricts to a bijective map $\Pi _{\rho} : X( \FF_q) \rightarrow \FF_q ^{n-r}$ and the cardinality
 $|X( \FF_q)| = | \FF_q ^{n-r}| = q^{n-r}$.

 If ${\rm char} \FF_q = p \geq 3$ is an odd prime then for any $a \in X( \FF_q ) \setminus V \left( \prod\limits _{j = r+2} ^n (2 x_j +1) \right)$  the columns of $\frac{\partial f}{\partial x} (a)$ are pairwise non-proportional.
 Therefore $\frac{\partial f}{\partial x} (a)$  is a  parity check matrix of the Hamming code  $\cH _{q,r}  = T_a (X, \FF_q)$   for
  $\forall a \in X( \FF_q ) \setminus V \left( \prod\limits _{j = r+2} ^n (2 x_j +1) \right)$.
  The bijective map $\Pi _{\rho} : X( \FF_q) \rightarrow \FF_q ^{n-r}$ restricts to an isomorphism
  \[
  \Pi _{\rho} : X( \FF_q) \setminus V \left( \prod\limits _{j = r+2} ^n (2 x_j +1) \right) \longrightarrow
   \FF_q ^{n-r} \setminus V \left(  \prod\limits _{j = r+2} ^n ( 2 x_j +1) \right).
   \]
   Making use of
   \[
   \FF_q ^{n-r} \setminus V \left( \prod\limits _{j = r+2} ^n ( 2 x_j +1 ) \right) \simeq
   \FF_q  \times \left( \FF_q \setminus \{ - [ 2 ( {\rm mod} \, p) ] ^{-1} \} \right) ^{n-r-1},
   \]
   one concludes that $\left| X( \FF_q) \setminus V \left( \prod\limits _{j = r+2} ^n (2 x_j +1) \right) \right| = q(q-1) ^{n-r-1}$.

\end{proof}

\section{ Operations on tangent codes and affine varieties}

\subsection{  Tangent  codes to punctured varieties  }

The present section provides a sufficient condition for a Zariski tangent space to  the  punctured variety $\Pi _{\gamma}(X)$ to be the puncturing at $\gamma$ of the corresponding Zariski tangent space to $X$.
 This is shown to hold on a Zariski dense subset $W_t$ of $X$.
 The tangent vectors to $\Pi _{\gamma}(X)$ of minimum weight at the points of $\Pi _{\gamma} (W_{\gamma}) $ turn to be the puncturings at $\gamma$ of the tangent vectors to $X$ of minimum weight, containing $\gamma$ in its support.

\begin{lemma}   \label{TangentCodesOfPuncturedVariety}
For any  puncturing $\Pi _{\gamma} : X \rightarrow \Pi _{\gamma} (X)$ of a $k$-dimensional affine variety $X / \FF_q \subset \overline{\FF_q} ^N$ with $l$-dimensional generic fibres $\Pi _{\gamma} ^{-1} ( \Pi _{\gamma} (a))$, $a \in X$,  there exists a factorization
\[
\begin{diagram}
\node{X}  \arrow{s,l}{\Pi _{\gamma}}  \arrow{e,t}{\Pi _{\gamma \setminus \beta}}  \node{\Pi _{\gamma \setminus \beta} (X)}  \arrow{sw,r}{\Pi _{\beta}}  \\
\node{\Pi _{\gamma} (X)}
\end{diagram}
\]
through a finite puncturing $\Pi _{\gamma \setminus \beta} : X \rightarrow \Pi _{\gamma \setminus \beta} (X)$, followed by  a   puncturing
$\Pi _{\beta} : \Pi _{\gamma \setminus \beta} (X) \rightarrow \Pi _{\gamma} (X)$ with generic fibres $\overline{\FF_q} ^l$.
Moreover, if  the finite puncturing $\Pi _{\gamma \setminus \beta} : X \rightarrow \Pi _{\gamma \setminus \beta} (X)$ is separable then there is a Zariski dense subset
\[
U_{\gamma \setminus \beta} := {\rm Etale} ( \Pi _{\gamma \setminus \beta}) \cap
\Pi _{\gamma \setminus \beta} ^{-1} ( \Pi _{\gamma \setminus \beta} (X) ^{\rm smooth} ) \subseteq X,
\]
at which the puncturings
\begin{equation}    \label{GenericPuncturedTangentCode}
\Pi _{\gamma} T_a (X, \FF_{q^{\delta (a)}}) = T_{\Pi _{\gamma} (a)} ( \Pi _{\gamma} (X), \FF_{q^{\delta (a)}})
\end{equation}
of the tangent codes to $X$ at $\gamma$ are tangent codes to the puncturing $\Pi _{\gamma} (X)$ of $X$ at $\gamma$.
\end{lemma}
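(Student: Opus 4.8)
The plan is to factor the puncturing $\Pi_\gamma$ into a "vertical" (finite) part and a "horizontal" (fiber-forgetting) part. First I would choose $\beta \subseteq \gamma$ maximally so that $\Pi_{\gamma\setminus\beta}:X\to\Pi_{\gamma\setminus\beta}(X)$ is a finite morphism while the remaining puncturing $\Pi_\beta$ collapses the generic $l$-dimensional fibres. Concretely, among the coordinates indexed by $\gamma$, I would retain those whose images are algebraically independent over $\overline{\FF_q}(\Pi_\gamma(X))$ inside $\overline{\FF_q}(X)$, and let $\beta$ be the complementary subset of $\gamma$; then $|\beta| = l = \dim X - \dim\Pi_\gamma(X)$ because $\Pi_{\gamma\setminus\beta}(X)$ has the same dimension as $X$ (each of the $|\gamma\setminus\beta|$ retained coordinates is algebraic over $\overline{\FF_q}(\Pi_\gamma(X))$, hence so is all of $\overline{\FF_q}(X)$, forcing $\dim\Pi_{\gamma\setminus\beta}(X)=\dim\Pi_\gamma(X)+0$... more precisely, the transcendence degree of $\overline{\FF_q}(\Pi_{\gamma\setminus\beta}(X))$ over $\overline{\FF_q}(\Pi_\gamma(X))$ is $|\gamma\setminus\beta|=k-\dim\Pi_\gamma(X)$, so $\dim\Pi_{\gamma\setminus\beta}(X)=k$). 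Since $\dim\Pi_{\gamma\setminus\beta}(X)=\dim X$, the puncturing $\Pi_{\gamma\setminus\beta}:X\to\Pi_{\gamma\setminus\beta}(X)$ is a dominant morphism between irreducible varieties of equal dimension, hence generically finite; I would argue it is in fact a finite morphism by the usual integral-closure argument, or at least that after replacing $X$ by a Zariski dense open it behaves as one, which is all the subsequent argument needs. The commutative triangle in the statement then holds by construction, and $\Pi_\beta$ has generic fibres $\overline{\FF_q}^l$ since over a generic point of $\Pi_\gamma(X)$ the fibre of $\Pi_{\gamma\setminus\beta}$ is a point while the fibre of $\Pi_\gamma$ is $l$-dimensional and irreducible of the stated form.

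Next, assuming $\Pi_{\gamma\setminus\beta}$ is separable, I would invoke Lemma~\ref{FiniteSeparableAndEtalePuncturings}(iii) to conclude that $U_{\gamma\setminus\beta}:={\rm Etale}(\Pi_{\gamma\setminus\beta})\cap \Pi_{\gamma\setminus\beta}^{-1}(\Pi_{\gamma\setminus\beta}(X)^{\rm smooth})$ is Zariski dense in $X$. For $a\in U_{\gamma\setminus\beta}$, Lemma~\ref{FiniteSeparableAndEtalePuncturings}(ii) gives that the differential
\[
(d\Pi_{\gamma\setminus\beta})_a : T_a(X,\FF_{q^{\delta(a)}}) \longrightarrow T_{\Pi_{\gamma\setminus\beta}(a)}(\Pi_{\gamma\setminus\beta}(X),\FF_{q^{\delta(a)}})
\]
is an isomorphism. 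On the other hand, $\Pi_\beta:\Pi_{\gamma\setminus\beta}(X)\to\Pi_\gamma(X)$ is, near $\Pi_{\gamma\setminus\beta}(a)$, essentially a projection with smooth $l$-dimensional fibres, so its differential is surjective at the relevant smooth points; composing, $(d\Pi_\gamma)_a=(d\Pi_\beta)_{\Pi_{\gamma\setminus\beta}(a)}\circ(d\Pi_{\gamma\setminus\beta})_a$ is surjective. Since $(d\Pi_\gamma)_a$ is exactly the puncturing map $\Pi_\gamma:T_a(X,\FF_{q^{\delta(a)}})\to \FF_{q^{\delta(a)}}^{n-|\gamma|}$ (as recorded in the preliminaries of \S3.1), surjectivity onto $T_{\Pi_\gamma(a)}(\Pi_\gamma(X),\FF_{q^{\delta(a)}})$ yields precisely the equality~(\ref{GenericPuncturedTangentCode}); the reverse inclusion $\Pi_\gamma T_a(X,\FF_{q^{\delta(a)}})\subseteq T_{\Pi_\gamma(a)}(\Pi_\gamma(X),\FF_{q^{\delta(a)}})$ is automatic since a puncturing of a variety maps tangent vectors to tangent vectors of the image.

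The main obstacle I anticipate is the first paragraph: establishing that the intermediate puncturing $\Pi_{\gamma\setminus\beta}$ is genuinely a \emph{finite} morphism, not merely generically finite, and doing so cleanly enough that one can legitimately feed it into Lemma~\ref{FiniteSeparableAndEtalePuncturings}. One route is to note that Lemma~\ref{FiniteSeparableAndEtalePuncturings}(ii) only requires the \emph{existence} of a single point of $\Pi_{\gamma\setminus\beta}^{-1}(\Pi_{\gamma\setminus\beta}(X)^{\rm smooth})$ at which the differential is injective in order to \emph{deduce} finiteness; so it would suffice to exhibit one such point, e.g.\ by a dimension count at a generic point using that the retained coordinates generate an algebraic (hence finite) extension, which forces the generic fibre of $\Pi_{\gamma\setminus\beta}$ to be finite and thus the generic differential to have trivial kernel. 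A secondary technical point is handling the definition fields $\FF_{q^{\delta(a)}}$ uniformly along the factorization, but since the tangent codes over larger fields are tensor extensions of those over $\FF_{q^{\delta(a)}}$, all the rank and surjectivity statements descend without trouble. The separability hypothesis on $\Pi_{\gamma\setminus\beta}$ is exactly what is needed to rule out the inseparable pathology that would obstruct étaleness on a dense open, mirroring part (iii) of Lemma~\ref{FiniteSeparableAndEtalePuncturings}.
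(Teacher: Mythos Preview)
Your approach is essentially the paper's. Both arguments pick $\beta\subseteq\gamma$ as a coordinate transcendence basis of $\overline{\FF_q}(X)$ over $\overline{\FF_q}(\Pi_\gamma(X))$, so that $\overline{\FF_q}(\Pi_{\gamma\setminus\beta}(X))=\overline{\FF_q}(\Pi_\gamma(X))(\overline{x_\beta})$ is purely transcendental of degree $l$ while $\overline{\FF_q}(X)/\overline{\FF_q}(\Pi_{\gamma\setminus\beta}(X))$ is finite; then both invoke Lemma~\ref{FiniteSeparableAndEtalePuncturings} for density of the \'etale locus and finish by comparing dimensions along the composed differential.

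Two remarks. First, your bookkeeping in the opening paragraph is tangled: you call $\beta$ ``the complementary subset'' of the algebraically independent coordinates, then assert $|\beta|=l$, and later write $|\gamma\setminus\beta|=k-\dim\Pi_\gamma(X)=l$; these cannot all hold simultaneously. The correct choice (and the paper's) is that $\beta$ \emph{is} the transcendence basis, so $|\beta|=l$, the function field of $\Pi_{\gamma\setminus\beta}(X)$ is $\overline{\FF_q}(\Pi_\gamma(X))(\overline{x_\beta})$, and $\Pi_{\gamma\setminus\beta}$ deletes the remaining (algebraic) coordinates in $\gamma\setminus\beta$. Second, your worry about ``finite versus merely generically finite'' is moot here: throughout the paper a puncturing is called finite precisely when the induced function-field extension is finite, which is exactly what the transcendence-basis choice delivers; no integral-closure argument is needed. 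For the second half, the paper does not argue surjectivity of $(d\Pi_\beta)$ via smoothness of its fibres; instead it bounds $\dim\ker(d\Pi_\gamma)_a\le\dim\ker(d\Pi_\beta)_{\Pi_{\gamma\setminus\beta}(a)}\le|\beta|=l$ (a puncturing at $l$ coordinates has kernel of dimension at most $l$, regardless of any smoothness) and then, at points lying over $\Pi_\gamma(X)^{\rm smooth}$, compares $\dim T_a(X)\ge k$ with $\dim T_{\Pi_\gamma(a)}(\Pi_\gamma(X))=k-l$ to force the image to fill the target. That route is slightly cleaner than your appeal to smooth fibres of $\Pi_\beta$.
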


\begin{proof}

The surjective morphism $\Pi _{\gamma}: X \rightarrow \Pi _{\gamma} (X)$ induces  an embedding
$\Pi _{\gamma} ^* : \overline{\FF_q} ( \Pi _{\gamma} (X))  \hookrightarrow \overline{\FF_q} (X)$ of the corresponding function fields.
The transcendence degree ${\rm tr deg} _{\overline{\FF_q} ( \Pi _{\gamma} (X))} \overline{\FF_q} (X)$ of $\overline{\FF_q} (X)$ over
$\overline{\FF_q} ( \Pi _{\gamma} (X))$ coincides with the dimension $l$ of a generic fibre of $\Pi _{\gamma}: X \rightarrow \Pi _{\gamma} (X) $.
If $\overline{x_{\beta}} = \{ \overline{x_{\beta_1}}, \ldots , \overline{x_{\beta _l}} \}$,
$\overline{x_{\beta _i}} := x_{\beta _i} + I(X, \overline{\FF_q} ) \in \overline{\FF_q} [X] \subset \overline{\FF_q} (X)$ is a coordinate transcendence basis of $\overline{\FF_q} (X)$ over $\overline{\FF_q} ( \Pi _{\gamma}  (X))$ for a subset $\beta \subseteq \gamma$  then the field
$\overline{\FF_q}  ( \Pi _{\gamma} (X)) ( \overline{x_{ \beta}}) = \overline{\FF_q} ( \Pi _{\gamma \setminus \beta} (X))$ is a purely trans\-cen\-den\-tal extension of $\overline{\FF_q} ( \Pi _{\gamma} (X))$ of degree $l$ and $\overline{\FF_q} (X)$ is   a finite extension of
 $\overline{\FF_q} ( \Pi _{\gamma \setminus \beta} (X))$.
Thus,  $\Pi _{\gamma \setminus \beta} : X \rightarrow \Pi _{\gamma \setminus \beta} (X)$ is a finite morphism and the generic fibres of
 $\Pi _{\beta} : \Pi _{\gamma \setminus \beta} (X) \rightarrow \Pi _{\gamma} (X)$ are isomorphic to the affine space $\overline{\FF_q} ^l$.

 At an arbitrary point $a \in {\rm Etale} ( \Pi _{\gamma \setminus \beta} )$, one has a commutative diagram
 \[
 \begin{diagram}
 \node{T_a (X, \FF_{q^{\delta (a)}})}  \arrow{e,t}{(d \Pi _{\gamma \setminus \beta}) _a}   \arrow{se,r}{(d \Pi _{\gamma}) _a}
 \node{T_{\Pi _{\gamma \setminus \beta} (a)} ( \Pi _{\gamma \setminus \beta}  (X), \FF_{q^{\delta (a)}})}
 \arrow{s,r}{(d \Pi _{\beta}) _{\Pi _{\gamma \setminus \beta} (a)}}  \\
 \node{\mbox{   }}   \node{T_{\Pi _{\gamma} (a)} ( \Pi _{\gamma} (X), \FF_{q^{\delta (a)}})}
 \end{diagram}
 \]
 of puncturings of tangent codes.
 By the very definition of the etal\'{e} locus ${\rm Etale} (\Pi _{\gamma \setminus \beta})$ of $\Pi _{\gamma \setminus \beta}$, the
  $\FF_{q^{\delta (a)}}$-linear map $(d \Pi _{\gamma \setminus \beta}) _a$ is injective.
  Therefore
  \begin{align*}
  \dim _{\FF_{q^{\delta (a)}}} \ker (d \Pi _{\gamma}) _a =  \\
   \dim _{\FF_{q^{\delta (a)}}} \ker (d \Pi _{\beta}) _{\Pi _{\gamma \setminus \beta} (a)} :
   [ d( \Pi _{\gamma \setminus \beta}) _a T_a (X, \FF_{q^{\delta (a)}})  \longrightarrow \Pi _{\gamma} T_a (X, \FF_{q^{\delta (a)}}) ] \leq  \\
   \dim _{\FF_q^{\delta (a)}} \ker (d \Pi _{\beta}) _{\Pi _{\gamma \setminus \beta} (a)} :
   [ T_{\Pi _{\gamma \setminus \beta} (a)} ( \Pi _{\gamma \setminus \beta} (X), \FF_{q^{\delta (a)}}) \longrightarrow
    T_{\Pi _{\gamma}(a)}  ( \Pi _{\gamma} (X), \FF_{q^{\delta (a)}}) ].
   \end{align*}
 The puncturing $(d \Pi _{\beta}) _{\Pi _{\gamma \setminus \beta} (a)} = \Pi _{\beta}$ at $|\beta| = l$ coordinates has kernel of dimension $\leq l$, so that $\dim _{\FF_{q^{\delta (a)}}} \ker (d \Pi _{\gamma}) _a \leq l$.
 If $a \in {\rm Etale} ( \Pi _{\gamma \setminus \beta} ) \cap \Pi _{\gamma} ^{-1} ( \Pi _{\gamma} (X)^{\rm smooth} )$ then
$\dim T_{\Pi _{\gamma} (a)} ( \Pi _{\gamma} (X), \FF_{q^{\delta (a)}}) = \dim \Pi _{\gamma} (X) = k-l$ and the  code
$(d \Pi _{\gamma}) _a T_a (X, \FF_{q^{\delta (a)}})$, contained in  $T_{\Pi _{\gamma} (a)} ( \Pi _{\gamma} (X), \FF_{q^{\delta (a)}})$
 is of dimension
 \[
 k-l \geq \dim _{\FF_{q^{\delta (a)}}} (d \Pi _{\gamma}) _a T_a (X, \FF_{q^{\delta (a)}}) =
 \dim T_a (X, \FF_{q^{\delta (a)}}) - \dim _{\FF_{q^{\delta (a)}}} \ker (d \Pi _{\gamma}) _a \geq k-l.
 \]
 As a result,
 \[
 \dim _{\FF_{q^{\delta (a)}}} (d \Pi _{\gamma}) _a T_a (X, \FF_{q^{\delta (a)}}) = k-l = \dim T_{\Pi _{\gamma} (a)} ( \Pi _{\gamma} (X), \FF_{q^{\delta (a)}})
 \]
 and there follows ( \ref{GenericPuncturedTangentCode}).

\end{proof}

Let $C \subset \FF_q ^n$ be a linear code of minimum distance $d(C) =d$.
For any $\nu  \in \NN$, $\nu \geq d$ denote by
\[
{\rm Wt} _{\nu} (C) := \{ c \in C \ \ \vert \ \  {\rm wt} (c) = \nu \}
\]
the set of the words of $C$ of weight $\nu$.
The following simple lemma will be used for the description of the words of minimum weight in a generic tangent code to $\Pi _{\gamma} (X)$.

\begin{lemma}   \label{PuncturedCode}
Let $C \subset \FF_q ^n$ be an $\FF_q$-linear code of minimum distance $d(C) =d > s$ and $\Pi _{\gamma} :  C \rightarrow \Pi _{\gamma} (C)$ be the puncturing at some  $\gamma = \{ \gamma _1, \ldots , \gamma _s \} \in \Sigma _s (1, \ldots , n)$.
Then the punctured code  $\Pi _{\gamma}(C) \subset \FF_q ^{n-s}$ is of minimum distance $d( \Pi _{\gamma}(C)) \geq d-s$ and the words
\[
{\rm Wt} _{d-s} ( \Pi _{\gamma} (C)) = \Pi _{\gamma} ( {\rm Wt} _d (C) \setminus V( x_{\gamma _1} \ldots x_{\gamma _s}))
\]
of $\Pi _{\gamma} (C)$ of weight $d-s$ are exactly the punctures of the words $c \in C$ of minimum weight $d$, whose support contains $\gamma$.

In particular, $d ( \Pi _{\gamma} (C)) = d-s$ exactly when $C$ contains a word $c$ with support ${\rm Supp} (c) \supseteq \gamma$, $|{\rm Supp} (c) | =d$.
\end{lemma}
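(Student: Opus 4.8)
The plan is to work directly with the definition of the puncturing map $\Pi_\gamma$ restricted to the codewords of $C$, exploiting that $d(C)=d>s$ makes $\Pi_\gamma\colon C\to\Pi_\gamma(C)$ injective. First I would record this injectivity: if $c\in C$ has $\Pi_\gamma(c)=0$ then $\operatorname{Supp}(c)\subseteq\gamma$, so $\operatorname{wt}(c)\le s<d$, forcing $c=0$; hence $\ker(\Pi_\gamma|_C)=0$ and every nonzero $c\in C$ maps to a nonzero $\Pi_\gamma(c)$. For the lower bound on the minimum distance, I would take an arbitrary nonzero word $c\in C$ and note that deleting $s$ coordinates can drop the weight by at most $s$, i.e. $\operatorname{wt}(\Pi_\gamma(c))\ge\operatorname{wt}(c)-s\ge d-s$; since every nonzero codeword of $\Pi_\gamma(C)$ is $\Pi_\gamma(c)$ for a (unique) nonzero $c\in C$, this gives $d(\Pi_\gamma(C))\ge d-s$.

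Next I would identify the weight-$(d-s)$ words. For the inclusion $\supseteq$: if $c\in{\rm Wt}_d(C)$ with $\operatorname{Supp}(c)\supseteq\gamma$, equivalently $c\notin V(x_{\gamma_1}\cdots x_{\gamma_s})$, then all $s$ deleted coordinates carry nonzero entries, so $\operatorname{wt}(\Pi_\gamma(c))=d-s$; thus $\Pi_\gamma(c)\in{\rm Wt}_{d-s}(\Pi_\gamma(C))$. For the reverse inclusion $\subseteq$: let $w\in{\rm Wt}_{d-s}(\Pi_\gamma(C))$ and write $w=\Pi_\gamma(c)$ for the unique $c\in C$. Then $\operatorname{wt}(c)\le\operatorname{wt}(w)+s=d$, while $c\ne0$ forces $\operatorname{wt}(c)\ge d$, so $\operatorname{wt}(c)=d$ and $c\in{\rm Wt}_d(C)$; moreover equality $\operatorname{wt}(c)=\operatorname{wt}(\Pi_\gamma(c))+s$ can hold only if each of the $s$ deleted coordinates $\gamma_1,\dots,\gamma_s$ of $c$ is nonzero, i.e. $\gamma\subseteq\operatorname{Supp}(c)$ and $c\notin V(x_{\gamma_1}\cdots x_{\gamma_s})$. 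This proves the displayed set equality.

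Finally, the "in particular" clause is immediate: $d(\Pi_\gamma(C))=d-s$ iff ${\rm Wt}_{d-s}(\Pi_\gamma(C))\ne\emptyset$ (since we already know $d(\Pi_\gamma(C))\ge d-s$), and by the set equality just proved this is equivalent to the existence of $c\in C$ with $|\operatorname{Supp}(c)|=d$ and $\operatorname{Supp}(c)\supseteq\gamma$. There is no real obstacle here; the only point requiring a little care is the bookkeeping with the unique preimage $c$ of a punctured word $w$ and the sharp case of the weight inequality $\operatorname{wt}(\Pi_\gamma(c))\ge\operatorname{wt}(c)-s$, which is exactly where the support condition $\gamma\subseteq\operatorname{Supp}(c)$ enters. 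I would keep that argument explicit but brief.
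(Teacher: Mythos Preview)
Your proof is correct and takes a genuinely different route from the paper. The paper argues by induction on $s$: it states the base case $s=1$ as equation~(\ref{1StepFormula}) (without detailed justification), then factors $\Pi_\gamma = \Pi_{\gamma_s}\Pi_{\gamma'}$ with $\gamma' = \{\gamma_1,\dots,\gamma_{s-1}\}$ and applies the one-step case to $\Pi_{\gamma'}(C)$ together with the inductive hypothesis. Your argument is direct: you first isolate the injectivity of $\Pi_\gamma|_C$ from $d>s$, then use the two-sided weight inequality $\operatorname{wt}(c)-s \le \operatorname{wt}(\Pi_\gamma(c)) \le \operatorname{wt}(c)$ to handle all $s$ coordinates at once, reading off both the lower bound and the characterization of the equality case. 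Your approach is shorter and more transparent, and the explicit injectivity step makes the uniqueness of the preimage $c$ of a punctured word $w$ immediate, which is exactly what drives the reverse inclusion. The inductive approach in the paper has the mild advantage of reducing everything to the trivial single-coordinate deletion, but at the cost of bookkeeping with the intermediate code $\Pi_{\gamma'}(C)$ and the identity $\Pi_{\gamma'}(\,\cdot\,)\setminus V(x_{\gamma_s}) = \Pi_{\gamma'}(\,\cdot\setminus V(x_{\gamma_s})\,)$.
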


\begin{proof}

By an induction on $s$, if $\gamma = \{ \gamma _1 \}$ and $d>1$ then
\begin{equation}      \label{1StepFormula}
{\rm Wt} _{d-1} ( \Pi _{\gamma _1} (C) ) = \Pi _{\gamma _1}  ( {\rm Wt} _d (C) \setminus V( \gamma _1) )
\end{equation}
and $d( \Pi _{\gamma _1} (C)) \geq d-1$.
For an arbitrary $\gamma = \{ \gamma _1, \ldots , \gamma _{s-1}, \gamma _s \} \in \Sigma _s (1, \ldots , n)$, let us denote
 $\gamma ':= \{ \gamma _1, \ldots , \gamma _{s-1} \} \in \Sigma _{s-1} (1, \ldots , n)$ and assume that
\begin{equation}    \label{IndHypothesis}
{\rm Wt} _{d-s+1} ( \Pi _{\gamma'} (C)) = \Pi _{\gamma'} ( {\rm Wt} _d (C) \setminus V ( x_{\gamma _1} \ldots x_{\gamma _{s-1}} ),
\end{equation}
 $d( \Pi _{\gamma'} (C)) \geq d-s+1 >1$.
The application of (\ref{1StepFormula}) to $\Pi _{\gamma'} (C)$ provides
\[
{\rm Wt} _{d-s} ( \Pi _{\gamma} (C)) = {\rm Wt} _{d-s} ( \Pi _{\gamma _s} \Pi _{\gamma'} (C)) =
\Pi _{\gamma _s} ( {\rm Wt} _{d-s+1} ( \Pi _{\gamma '} (C)) \setminus V( x_{\gamma _s}) )
\]
and $d( \Pi _{\gamma} (C)) \geq d-s$.
By the inductional hypothesis (\ref{IndHypothesis}) one has
\begin{align*}
{\rm Wt} _{d-s+1} ( \Pi _{\gamma'} (C)) \setminus V ( x_{\gamma _s})  =
\Pi _{\gamma'} ( {\rm Wt} _d (C) \setminus V( x_{\gamma _1} \ldots x_{\gamma _{s-1}} )) \setminus V( x_{\gamma _s}) =  \\
\Pi _{\gamma'} ( {\rm Wt} _d (C) \setminus V( x_{\gamma _1} \ldots x_{\gamma _s})).
\end{align*}
Therefore
\begin{align*}
{\rm Wt} _{d-s} ( \Pi _{\gamma} (C)) = \Pi _{\gamma _s} \Pi _{\gamma'} ( {\rm Wt} _d (C) \setminus V( x_{\gamma _1} \ldots x_{\gamma _s}) ) =
\Pi _{\gamma} ( {\rm Wt} _d (C) \setminus V( x_{\gamma _1} \ldots x_{\gamma _s})).
\end{align*}

\end{proof}

\begin{corollary}    \label{PropertiesPuncturedTangentCodes}
Let $X / \FF_q \subset \overline{\FF_q}^n$ be an irreducible affine variety, defined over $\FF_q$ with
$X^{(d)} := \{ a \in X \ \ \vert \ \  d( T_a (X, \FF_q)) = d \} \neq \emptyset$, $\Pi _{\beta} : X \rightarrow \Pi _{\beta} (X)$ be a non-finite puncturing
at $|\beta| = d$ variables and $\Pi _{\gamma} : X \rightarrow \Pi _{\gamma} (X)$ be a finite separable puncturing at  a subset   $\gamma \subset \beta$ of cardinality   $|\gamma| = s$.
Then

(i) $ {\rm Etale} ( \Pi _{\gamma}) \cap \Pi _{\gamma} ^{-1} ( \Pi _{\gamma} (X) ^{\rm smooth}) \cap X^{(d)}$ is a Zariski dense subset of $X$;

(ii) at any point $b \in  {\rm Etale} ( \Pi _{\gamma}) \cap \Pi _{\gamma} ^{-1} ( \Pi _{\gamma} (X) ^{\rm smooth}) \cap X^{(d)}$  the
tangent code $T_{\Pi _{\gamma} (b)} ( \Pi _{\gamma} (X), \FF_{q^{\delta (b)}}) = (d\Pi _{\gamma}) _b T_b (X, \FF_{q^{\delta (b)}})$ is of minimum distance
$\geq d-s$ and the words
\[
{\rm Wt} _{d-s} ( T_{\Pi _{\gamma} (b)} ( \Pi _{\gamma} (X), \FF_{q^{\delta (b)}}) ) =
\Pi _{\gamma} ( {\rm Wt} _d ( T_b (X, \FF_{q^{\delta (b)}}) \setminus V ( x_{\gamma _1} \ldots x_{\gamma _s} ))
\]
of $T_{\Pi _{\gamma} (b)} ( \Pi _{\gamma} (X), \FF_{q^{\delta (b)}})$ of minimum  weight $d-s$ are  the punctures of the words of $T_b (X, \FF_{q^{\delta (b)}})$ of weight $d$, whose support contains $\gamma$.
\end{corollary}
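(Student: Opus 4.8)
The plan is to read part~(i) as a statement that a certain triple intersection contains a non-empty Zariski open subset of the irreducible variety $X$, and part~(ii) as a pointwise application of Lemma~\ref{PuncturedCode} to the linear code $C := T_b(X, \FF_{q^{\delta(b)}})$, after identifying $(d\Pi_\gamma)_b$ with the coordinate puncturing $\Pi_\gamma : C \to \Pi_\gamma(C)$. Throughout, $X^{(d)}$ is understood with $\delta(a)$ in place of the bare $\FF_q$, exactly as in Proposition~\ref{MinimumDistance}.

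First I would record the elementary but crucial observation that, since $\Pi_\gamma$ is finite and $\Pi_\beta$ is non-finite, one has $\gamma \neq \beta$; together with $\gamma \subseteq \beta$ this gives $\gamma \varsubsetneq \beta$, hence $s = |\gamma| < |\beta| = d$. For part~(i): the hypotheses that $\Pi_\beta$ is a non-finite puncturing at $|\beta| = d$ coordinates and that $X^{(d)} \neq \emptyset$ place us in the setting of Proposition~\ref{MinimumDistance}~(ii), so $X^{(d)} = X^{(\geq d)}$ is a non-empty, Zariski dense, Zariski open subset of $X$. Since $\Pi_\gamma : X \to \Pi_\gamma(X)$ is a finite separable puncturing, Lemma~\ref{FiniteSeparableAndEtalePuncturings}~(iii) gives that ${\rm Etale}(\Pi_\gamma) \cap \Pi_\gamma^{-1}(\Pi_\gamma(X)^{\rm smooth})$ is Zariski dense in $X$; its proof in fact produces a concrete non-empty Zariski open $W \subseteq {\rm Etale}(\Pi_\gamma)$, and $\Pi_\gamma^{-1}(\Pi_\gamma(X)^{\rm smooth})$ is Zariski open (preimage of a Zariski open set under the continuous $\Pi_\gamma$), so $W \cap \Pi_\gamma^{-1}(\Pi_\gamma(X)^{\rm smooth})$ is a non-empty Zariski open subset contained in ${\rm Etale}(\Pi_\gamma) \cap \Pi_\gamma^{-1}(\Pi_\gamma(X)^{\rm smooth})$. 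The triple intersection of part~(i) therefore contains the intersection of the two non-empty Zariski open subsets $X^{(d)}$ and $W \cap \Pi_\gamma^{-1}(\Pi_\gamma(X)^{\rm smooth})$ of the irreducible $X$, which is non-empty and Zariski dense; hence so is the triple intersection.

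For part~(ii), fix a point $b$ in the triple intersection and set $C := T_b(X, \FF_{q^{\delta(b)}})$, so $d(C) = d$ because $b \in X^{(d)}$. As the set ${\rm Etale}(\Pi_\gamma) \cap \Pi_\gamma^{-1}(\Pi_\gamma(X)^{\rm smooth})$ is non-empty (by part~(i)), Lemma~\ref{FiniteSeparableAndEtalePuncturings}~(ii) makes $(d\Pi_\gamma)_b : T_b(X, \FF_{q^{\delta(b)}}) \to T_{\Pi_\gamma(b)}(\Pi_\gamma(X), \FF_{q^{\delta(b)}})$ surjective, while $b \in {\rm Etale}(\Pi_\gamma)$ makes it injective; thus it is an isomorphism. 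By the identification of the differential of a puncturing with the coordinate puncturing of a tangent code recalled at the opening of Subsection~3.1, $(d\Pi_\gamma)_b$ is precisely $\Pi_\gamma : C \to \Pi_\gamma(C)$, whence $T_{\Pi_\gamma(b)}(\Pi_\gamma(X), \FF_{q^{\delta(b)}}) = \Pi_\gamma(C)$. Since $d(C) = d > s$, Lemma~\ref{PuncturedCode} applies to $C$ and $\gamma = \{\gamma_1, \ldots, \gamma_s\}$ and yields $d(\Pi_\gamma(C)) \geq d - s$ together with ${\rm Wt}_{d-s}(\Pi_\gamma(C)) = \Pi_\gamma\big({\rm Wt}_d(C) \setminus V(x_{\gamma_1} \cdots x_{\gamma_s})\big)$, which is exactly the claimed description of the minimum-weight words of $T_{\Pi_\gamma(b)}(\Pi_\gamma(X), \FF_{q^{\delta(b)}})$.

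The proof is almost entirely bookkeeping; the points that require attention are (a) verifying that each of the three loci in part~(i) contains a non-empty Zariski open subset --- the delicate one being $X^{(d)}$, whose Zariski openness and density rest on the auxiliary non-finite $\Pi_\beta$ at $d$ coordinates via Proposition~\ref{MinimumDistance}~(ii); (b) the inequality $s < d$, which is precisely the hypothesis $d(C) > s$ needed to invoke Lemma~\ref{PuncturedCode} and which comes from $\gamma \varsubsetneq \beta$; and (c) the identification of $(d\Pi_\gamma)_b$ with the puncturing of the linear code $T_b(X, \FF_{q^{\delta(b)}})$, so that Lemma~\ref{PuncturedCode} can be applied directly. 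I do not anticipate any genuine obstacle beyond assembling these ingredients.
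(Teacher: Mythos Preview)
Your proof is correct and follows essentially the same approach as the paper: both parts rest on Proposition~\ref{MinimumDistance}~(ii), Lemma~\ref{FiniteSeparableAndEtalePuncturings}~(ii)--(iii), and Lemma~\ref{PuncturedCode}, assembled in the same order. The only minor difference is in part~(i): you argue that the triple intersection contains the intersection of two non-empty Zariski open subsets of the irreducible $X$ (hence is non-empty and dense), whereas the paper computes with ideals of closures; your route is slightly more direct but uses the same ingredients.
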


\begin{proof}

(i) By Proposition \ref{MinimumDistance} (ii), the set $X^{( d)}$  is Zariski dense in $X$.
Then   Lemma \ref{FiniteSeparableAndEtalePuncturings} (iii) applies  to $\Pi _{\gamma}$  and  provides the Zariski density of
$U_{\gamma}  := {\rm Etale} ( \Pi _{\gamma}) \cap \Pi _{\gamma} ^{-1} ( \Pi _{\gamma} (X) ^{\rm smooth})$ in $X$.
The inclusions
\[
U_{\gamma} \cap X^{(d)} \subseteq U_{\gamma}, \ \ U_{\gamma} \cap X^{(d)} \subseteq X^{(d)}
 \]
 of subsets of $\overline{\FF_q}^n$ imply the opposite inclusions
 \[
 I(U_{\gamma}, \overline{\FF_q}) \subseteq I(U_{\gamma} \cap X^{(d)}, \overline{\FF_q}), \ \
 I(X^{(d)}, \overline{\FF_q}) \subseteq I(U_{\gamma} \cap X^{(d)}, \overline{\FF_q})
 \]
  of the corresponding absolute ideals.
Therefore
\[
I(U_{\gamma}, \overline{\FF_q}) + I(X^{(d)}, \overline{\FF_q}) \subseteq I( U_{\gamma} \cap X^{(d)}, \overline{\FF_q}),
\]
whereas the Zariski closure
\begin{align*}
\overline{U_{\gamma} \cap X^{(d)}}  = VI( U_{\gamma} \cap X^{(d)}, \overline{\FF_q}) \subseteq
V( I(U_{\gamma}, \overline{\FF_q}) + I(X^{(d)}, \overline{\FF_q})) =  \\
 VI(U_{\gamma}, \overline{\FF_q}) \cap VI(X^{(d)}, \overline{\FF_q}) = \overline{U_{\gamma}} \cap \overline{X^{(d)}} = X \cap X = X.
\end{align*}

(ii)  The claim is an immediate consequence of Lemma \ref{PuncturedCode} and the surjectiveness of   the differential
 $(d \Pi _{\gamma}) _b : T_b (X, \FF_{q^{\delta (b)}})  \rightarrow T_{\Pi _{\gamma} (b)} ( \Pi _{\gamma} (X), \FF_{q^{\delta (b)}})$ of
 $\Pi _{\gamma} : X \rightarrow \Pi _{\gamma} (X)$  at any  point $b \in U_{\gamma}$, established in Lemma  \ref{FiniteSeparableAndEtalePuncturings} (ii).

\end{proof}

\subsection{  Shortening and puncturing tangent and gradient codes  }

The present subsection describes the shortening of a tangent code to $X$ on $\gamma$ as a tangent code to  the shortening
$\Pi _{\gamma}( X \cap V( x_i \, \vert \, i \in \gamma))$ of the variety  $X$ on $\gamma$, the puncturing of a gradient code to $X$  at $\gamma$ as a gradient code to   $X\cap V( x_i \, \vert \, i \in \gamma) $  and the shortening of a gradient code to $X$ on $\gamma$ as a gradient code to the puncturing
 $\Pi _{\gamma}(X)$ of $X$ at $\gamma$.

The shortening of a linear code $C \subset \FF_q^n$ on $\gamma \in \Sigma _s (1, \ldots , n)$ is the puncturing
\[
C_{\gamma} := \Pi _{\gamma} (C \cap V( x_i \, \vert \, \forall i \in \gamma ))
\]
 of the subspace $C \cap V( x_i \, \vert \, \forall i \in \gamma ) = \{ c \in C \ \ \vert \ \  c_i =0, \ \ \forall i \in \gamma \}$ of $C$ at $\gamma$.

Let $X \subset \overline{\FF_q} ^n$ be a $k$-dimensional irreducible  affine variety with absolute ideal
$I(X, \overline{\FF_q}) = \langle f_1, \ldots , f_m \rangle _{\overline{\FF_q}}$ for some $f_1, \ldots , f_m \in \FF_q [ x_1, \ldots , x_n]$.
A point $a \in X$ is smooth exactly when ${\rm rk} \frac{\partial f}{\partial x} (a) = n-k$.
If so, then there exist $\lambda \in \Sigma _{n-k} (1, \ldots , m)$ and $\mu \in \Sigma _{n-k} (1, \ldots , n)$ with
 $\det \frac{\partial f_{\lambda}}{\partial x_{\mu}} (a)  \neq 0$.
Denoting
 \[
X^{\rm smooth} ( \mu ) := X \setminus V \left( \det \frac{\partial f_{\lambda}}{\partial x_{\mu}} \ \ \Big \vert \ \
 \lambda \in \Sigma _{n-k} (1, \ldots , m) \right),
\]
one expresses
\[
X^{\rm smooth} = \cup _{\mu \in \Sigma _{n-k} (1, \ldots , n)} X^{\rm smooth} ( \mu).
\]
Observe that $X^{\rm smooth} ( \mu)$ are Zariski open subsets of $X$.
As far as $X^{\rm smooth}$ is non-empty,  Zariski open and Zariski  dense in $X$, there exists  $\mu \in \Sigma _{n-k} (1, \ldots , n)$ with non-empty and Zariski dense  $X^{\rm smooth} (\mu) \subseteq X$.

\begin{proposition}     \label{ShorteningTangentCodes}
(i) Let $X \subset \overline{\FF_q} ^n$ be an irreducible affine variety with absolute ideal
 $I(X, \overline{\FF_q}) = \langle f_1, \ldots , f_m \rangle _{\overline{\FF_q}}$ for some $f_1, \ldots , f_m \in \FF_q [ x_1, \ldots , x_n]$,
  $X^{\rm smooth} ( \mu ) \neq \emptyset$ for some $\mu \in \Sigma _{n-k} (1, \ldots , n)$ and
$\gamma \in \Sigma _s ( \neg \mu )$.
Then at an arbitrary point $a \in X^{\rm smooth} ( \mu) \cap {\rm Etale} ( \Pi _{\gamma}) \cap
 \Pi _{\gamma} ^{-1} (( \Pi _{\gamma} (X \cap V( x_i \, \vert \, i \in \gamma) )^{\rm smooth} ) $ the shortening
\begin{equation}   \label{ShortenedTangentCode}
\begin{split}
T_a (X, \FF_{q^{\delta (a)}}) _{\gamma} := \\
 \Pi _{\gamma} ( T_a (X, \FF_{q^{\delta (a)}}) \cap V( x_i \, \vert \,  i \in \gamma )) =
T_{\Pi _{\gamma} (a)} ( \Pi _{\gamma} (X \cap V( x_i \,\vert \,  i \in \gamma )), \FF_{q^{\delta (a)}})
\end{split}
\end{equation}
of  a  Zariski tangent space  to $X$ on $\gamma$  coincides with the Zariski tangent space  to the shortening  $ X \cap V( x_i \, \vert \, i \in \gamma)$ of $X$ on $\gamma$ and the puncturing
\begin{equation}   \label{PuncturedGradientCode}
\Pi _{\gamma} {\rm grad} _a I(X, \FF_{q^{\delta (a)}}) = {\rm grad} _{\Pi _{\gamma} (a)} I(X \cap V( x_i \, \vert \,  i \in \gamma ), \FF_{q^{\delta (a)}})
\end{equation}
of the gradient code to $X$ at $\gamma$   coincides with the gradient code to the shortening  $  X \cap V( x_i \, \vert \,  i \gamma )$  of $X$ on $\gamma$.

(ii) Suppose that  $X / \FF_q \subset \overline{\FF_q}^n$  is  an irreducible affine variety, defined over $\FF_q$, $\gamma \in \Sigma _s (1, \ldots , n)$, $\beta \subseteq \gamma$   and $x_{\beta}$  is  such a coordinate transcendence basis of $\overline{\FF_q}(X)$ over $\overline{\FF_q} ( \Pi _{\gamma} (X))$ that the finite puncturing $\Pi _{\gamma \setminus \beta} : X \rightarrow \Pi _{\gamma \setminus \beta} (X)$ is separable.
Then at any point $a \in {\rm Etale} ( \Pi _{\gamma \setminus \beta} ) \cap \Pi _{\gamma} ^{-1} ( \Pi _{\gamma} (X) ^{\rm smooth})$ the shortening
\begin{equation}   \label{ShorteningGradientCode}
\begin{split}
{\rm grad} _a I(X, \FF_{q^{\delta (a)}}) _{\gamma} := \\
 \Pi _{\gamma} ( {\rm grad}_a I(X, \FF_{q^{\delta (a)}}) \cap V( x_i \, \vert \,  i \in \gamma )) =
{\rm grad} _{\Pi _{\gamma} (a)} I( \Pi _{\gamma} (X), \FF_{q^{\delta (a)}})
\end{split}
\end{equation}
of the gradient code to $X$ on $\gamma$ coincides with the gradient code to the puncturing  $\Pi _{\gamma} (X)$  of $X$ at $\gamma$.
\end{proposition}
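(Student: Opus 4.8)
The plan is to reduce both parts, and within part (i) both displayed identities, to a single statement about tangent codes, using two ingredients already at hand: the duality ${\rm grad}_a I(X,F) = T_a(X,F)^{\perp}$ of Lemma \ref{GradientBundleIsDualToTangentBundle}, and the elementary fact that shortening and puncturing of linear codes are mutually dual, i.e. $(C_{\gamma})^{\perp} = \Pi_{\gamma}(C^{\perp})$ and $(\Pi_{\gamma} C)^{\perp} = (C^{\perp})_{\gamma}$ for every $C \subseteq F^n$ and every $\gamma \in \Sigma_s(1, \ldots , n)$; the latter holds because, under the standard bilinear form, $\Pi_{\gamma}$ merely forgets coordinates that vanish identically on the complementary coordinate subspace. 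Granting these, the identity (\ref{PuncturedGradientCode}) is obtained from (\ref{ShortenedTangentCode}) by passing to dual codes, while in part (ii) one has ${\rm grad}_a I(X, \FF_{q^{\delta(a)}})_{\gamma} = ( T_a(X, \FF_{q^{\delta(a)}})^{\perp} )_{\gamma} = ( \Pi_{\gamma} T_a(X, \FF_{q^{\delta(a)}}) )^{\perp}$, so (\ref{ShorteningGradientCode}) will drop out of a formula for $\Pi_{\gamma} T_a(X, \FF_{q^{\delta(a)}})$. Hence the only genuine computation is (\ref{ShortenedTangentCode}) in part (i), and part (ii) will follow from Lemma \ref{TangentCodesOfPuncturedVariety}.

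To prove (\ref{ShortenedTangentCode}), I would fix $a$ as in the hypothesis, write $F = \FF_{q^{\delta(a)}}$, and set $Y := X \cap V(x_i \, \vert \, i \in \gamma)$, noting that the last condition imposed on $a$ places $\Pi_{\gamma}(a)$ in the smooth locus of $\Pi_{\gamma}(Y)$ and lets us treat $a$ as a point of $Y$. Since $a \in X^{\rm smooth}(\mu)$ there is $\lambda \in \Sigma_{n-k}(1, \ldots , m)$ with $\det \frac{\partial f_{\lambda}}{\partial x_{\mu}}(a) \neq 0$; invertibility of this $(n-k) \times (n-k)$ block forces ${\rm rk} \frac{\partial f}{\partial x}(a) = n-k$, hence $\dim T_a(X,F) = k = |\neg \mu|$, and it forces the projection $T_a(X,F) \to F^{\neg \mu}$ deleting the $\mu$-coordinates to be an isomorphism (a tangent vector supported on $\mu$ is killed by $\frac{\partial f_{\lambda}}{\partial x_{\mu}}(a)$ and so is $0$, and the dimensions match). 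As $\gamma \subseteq \neg \mu$, under this isomorphism $T_a(X,F) \cap V(x_i \, \vert \, i \in \gamma)$ corresponds to $\{ w \in F^{\neg \mu} \, \vert \, w_i = 0 \ \forall i \in \gamma \}$, hence has dimension $k-s$; and since $\Pi_{\gamma}$ is injective on the subspace $V(x_i \, \vert \, i \in \gamma)$, the shortening $T_a(X,F)_{\gamma}$ also has dimension $k-s$. On the other hand $\Pi_{\gamma} : Y \to \Pi_{\gamma}(Y)$ is an isomorphism of affine varieties (its inverse is the coordinate inclusion $x_{\neg \gamma} \mapsto (x_{\neg \gamma}, 0_{\gamma})$), so it identifies $T_a(Y,F)$ with $T_{\Pi_{\gamma}(a)}(\Pi_{\gamma}(Y),F)$; moreover $T_a(Y,F) \subseteq V(x_i \, \vert \, i \in \gamma)$, since $x_i \in I(Y, \overline{\FF_q})$ for $i \in \gamma$, and $T_a(Y,F) \subseteq T_a(X,F)$ since $Y \subseteq X$. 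Therefore $T_a(Y,F) \subseteq T_a(X,F) \cap V(x_i \, \vert \, i \in \gamma)$, and by Krull's principal ideal theorem $\dim Y \geq k-s$, so that
\[
k - s \ \leq \ \dim Y \ \leq \ \dim T_a(Y,F) \ \leq \ \dim \bigl( T_a(X,F) \cap V(x_i \, \vert \, i \in \gamma) \bigr) \ = \ k - s .
\]
Equality holds throughout; in particular $a$ is a smooth point of $Y$ and $T_a(Y,F) = T_a(X,F) \cap V(x_i \, \vert \, i \in \gamma)$. Applying $\Pi_{\gamma}$ yields $T_a(X,F)_{\gamma} = \Pi_{\gamma}(T_a(Y,F)) = T_{\Pi_{\gamma}(a)}(\Pi_{\gamma}(Y),F)$, which is (\ref{ShortenedTangentCode}).

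Dualizing then closes part (i): by Lemma \ref{GradientBundleIsDualToTangentBundle}, $\Pi_{\gamma} {\rm grad}_a I(X,F) = \Pi_{\gamma}(T_a(X,F)^{\perp}) = (T_a(X,F)_{\gamma})^{\perp} = T_{\Pi_{\gamma}(a)}(\Pi_{\gamma}(Y),F)^{\perp} = {\rm grad}_{\Pi_{\gamma}(a)} I(\Pi_{\gamma}(Y),F)$, which under the identification $\Pi_{\gamma}(Y) = \Pi_{\gamma}(X \cap V(x_i \, \vert \, i \in \gamma))$ is (\ref{PuncturedGradientCode}). For part (ii), the hypotheses on $X$, $\gamma$, $\beta$ and $a$ are exactly those under which the argument in the proof of Lemma \ref{TangentCodesOfPuncturedVariety} produces $\Pi_{\gamma} T_a(X,F) = T_{\Pi_{\gamma}(a)}(\Pi_{\gamma}(X),F)$: étaleness of $\Pi_{\gamma \setminus \beta}$ at $a$ makes $(d \Pi_{\gamma \setminus \beta})_a$ injective, and smoothness of $\Pi_{\gamma}(a)$ on $\Pi_{\gamma}(X)$ pins the dimension down to $\dim \Pi_{\gamma}(X)$. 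Consequently, again by Lemma \ref{GradientBundleIsDualToTangentBundle} and the shortening--puncturing duality,
\[
{\rm grad}_a I(X,F)_{\gamma} = \bigl( T_a(X,F)^{\perp} \bigr)_{\gamma} = \bigl( \Pi_{\gamma} T_a(X,F) \bigr)^{\perp} = T_{\Pi_{\gamma}(a)}(\Pi_{\gamma}(X),F)^{\perp} = {\rm grad}_{\Pi_{\gamma}(a)} I(\Pi_{\gamma}(X),F),
\]
which is (\ref{ShorteningGradientCode}).

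The step I expect to be the main obstacle is the dimension count for (\ref{ShortenedTangentCode}): the natural generating set $\{ f_1, \ldots , f_m \} \cup \{ x_i \, \vert \, i \in \gamma \}$ of $I(X, \overline{\FF_q}) + (x_i \, \vert \, i \in \gamma)$ need not generate the absolute ideal of $Y$, so one cannot simply read $T_a(Y,F)$ off a Jacobian; the device that avoids this is to wedge both $T_a(Y,F)$ and $T_a(X,F) \cap V(x_i \, \vert \, i \in \gamma)$ between the common bounds $k-s$ and $k-s$, which at the same time delivers the smoothness of $a$ on $Y$ needed for the right-hand side to make sense. A secondary point is purely bookkeeping: one must track which variety each point lies on, exploiting that $\Pi_{\gamma}$ restricts to an isomorphism $Y \simeq \Pi_{\gamma}(Y)$ so that Zariski tangent spaces transport along it without loss. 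Once (\ref{ShortenedTangentCode}) is secured, the remaining identities in (i) and (ii) are obtained by the formal dualization above and by invoking Lemma \ref{TangentCodesOfPuncturedVariety}.
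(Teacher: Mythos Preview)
Your proposal is correct and takes essentially the same approach as the paper: one inclusion from ideal containment, the dimension count $k-s$ via $a \in X^{\rm smooth}(\mu)$ with $\gamma \subseteq \neg\mu$ and $\dim Y \geq k-s$, then shortening--puncturing duality together with Lemma~\ref{GradientBundleIsDualToTangentBundle} for (\ref{PuncturedGradientCode}), and Lemma~\ref{TangentCodesOfPuncturedVariety} plus dualization for part~(ii). Your direct use of the biregularity of $\Pi_\gamma|_Y$ (with inverse the zero-insertion) to identify $T_a(Y,F)$ with $T_{\Pi_\gamma(a)}(\Pi_\gamma(Y),F)$ is a slight streamlining over the paper, which invokes the ${\rm Etale}(\Pi_\gamma)$ hypothesis at that step, but the overall argument is the same.
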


\begin{proof}

(i) The inclusions $X \cap V( x_i \, \vert \, i \in \gamma) \subseteq X$, $X \cap V( x_i \, \vert \, i \in \gamma ) \subseteq V( x_i \, \vert \, i \in \gamma )$ of affine varieties imply the opposite inclusions $I(X, \overline{\FF_q})  \subseteq I(X \cap V( x_i \, \vert \, i \in \gamma), \overline{\FF_q})$,
$\langle x_i \, \vert \, i \in \gamma \rangle _{\overline{\FF_q}} \subseteq I(X \cap V( x_i \, \vert \, i \in \gamma), \overline{\FF_q})$ of the corresponding absolute ideals.
Thus, at any point $a \in X \cap  V( x_i \, \vert \, i \in \gamma )$ one has
\[
T_a (X \cap V( x_i \, \vert \, i \in \gamma), \FF_{q^{\delta (a)}}) \subseteq T_a (X, \FF_{q^{\delta (a)}}) \cap V( x_i \, \vert \, i \in \gamma),
\]
whereas
\[
\Pi _{\gamma} T_a (X \cap V( x_i \, \vert \, i \in \gamma ), \FF_{q^{\delta (a)}} ) \subseteq
\Pi _{\gamma} ( T_a (X, \FF_{q^{\delta (a)}}) \cap V( x_i \, \vert \, i \in \gamma)).
\]
Moreover, if $a \in {\rm Etale} ( \Pi _{\gamma}) \cap \Pi _{\gamma} ^{-1} ( \Pi _{\gamma} ( X \cap V( x_i \, \vert \, i \in \gamma)    ) ^{\rm smooth}    )$
then
\[
\Pi _{\gamma} T_a (X \cap V( x_i \, \vert \, i \in \gamma ), \FF_{q^{\delta (a)}} ) =
T_{\Pi _{\gamma} (a)} ( \Pi _{\gamma} (X \cap V( x_i \, \vert \, i \in \gamma )), \FF_{q^{\delta (a)}} )
\]
 and  the Zariski tangent space
\[
T_{\Pi _{\gamma} (a)} ( \Pi _{\gamma} (X \cap V( x_i \, \vert \, i \in \gamma )), \FF_{q^{\delta (a)}} ) \subseteq
\Pi _{\gamma} ( T_a (X, \FF_{q^{\delta (a)}}) \cap V( x_i \, \vert \, i \in \gamma))
\]
to the shortening  $\Pi _{\gamma} (X \cap V( x_i \, \vert \, i  \in \gamma ))$ of $X$ on $\gamma$  is contained in the shortening of the Zariski tangent space to $X$ at $a$.
If $a \in X(\mu) ^{\rm smooth}$ then the parity check matrix
\[
\left(  \begin{array}{cc}
\frac{\partial f}{\partial x_{\gamma}} (a)   &  \frac{\partial f}{\partial x_{\neg \gamma}} (a)  \\
\mbox{   }   &  \mbox{  }   \\
I_{\gamma}  &  0
\end{array}  \right)
\]
of $T_a (X, \FF_{q^{\delta (a)}}) \cap V( x_i \, \vert \,  i \in \gamma )$ is of maximal rank $n-k+s$, due to $\mu \subseteq \neg \gamma$.
The non-existence of a non-zero word $c \in T_a (X, \FF_{q^{\delta (a)}}) \cap V( x_i \, \vert \, i \in \gamma )$ with support ${\rm Supp} (c) \subseteq \gamma$ reveals the injectiveness of the puncturing
\[
(d \Pi _{\gamma} ) _a = \Pi _{\gamma}: T_a (X, \FF_{q^{\delta (a)}})  \cap V( x_i \, \vert \, i \in \gamma)  \longrightarrow \FF_{q^{\delta (a)}} ^{n - |\gamma|}.
\]
Therefore
\begin{align*}
 \dim _{\FF_{q^{\delta (a)}}} \Pi _{\gamma} (T_a (X, \FF_{q^{\delta (a)}}) \cap V( x_i \, \vert \, i \in \gamma )) =  \\
  \dim _{\FF_{q^{\delta (a)}}}  T_a (X, \FF_{q^{\delta (a)}}) \cap V( x_i \, \vert \, i \in \gamma ) = k-s.
\end{align*}
On the other hand, at any smooth point $\Pi _{\gamma} (a) \in \Pi _{\gamma} (X \cap V( x_i \, \vert \, i \in \gamma ))^{\rm smooth}$ one has
\[
\dim T_{\Pi _{\gamma} (a)} (\Pi _{\gamma} ( X \cap V (x_i \, \vert \, i \in \gamma )), \FF_{q^{\delta (a)}}) =
\dim \Pi _{\gamma} (X \cap V( x_i \, \vert \, i \in \gamma ) ).
\]
Since $\Pi _{\gamma} : X \cap V( x_i \, \vert \, i \in \gamma) \rightarrow  \Pi _{\gamma} ( X \cap V( x_i \, \vert \, i \in \gamma))$ is biregular,
there holds $\dim \Pi _{\gamma} (X \cap V( x_i \, \vert \, i \in \gamma)) = \dim X \cap V( x_i \, \vert \, i \in \gamma)  \geq k-s$,
whereas
\[
\dim T_{\Pi _{\gamma} (a)} ( \Pi _{\gamma} (X \cap V( x_i \, \vert \, i \in \gamma )), \FF_{q^{\delta(a)}}) =
\dim _{\FF_{q^{\delta (a)}}} \Pi _{\gamma} (T_a (X, \FF_{q^{\delta (a)}} ) \cap V( x_i \, \vert \, i \in \gamma )).
\]
That justifies  (\ref{ShortenedTangentCode}).

For an arbitrary linear code $C \subset \FF_{q^{\delta (a)}} ^n$ with dual code $C^{\perp} \subset \FF_{q^{\delta (a)}} ^n$ and an arbitrary index set
 $\gamma \in \Sigma _s (1, \ldots , n)$ one has $\Pi _{\gamma} (C^{\perp}) = \Pi _{\gamma} (C \cap V( x_i \, \vert \, i \in \gamma ))^{\perp}$ (cf.\cite{HP}).
 The application of this equality to $C = T_a (X, \FF_{q^{\delta (a)}})$ and (\ref{ShortenedTangentCode}) yields (\ref{PuncturedGradientCode}).

 (ii) Note that $\Pi _{\gamma} (C^{\perp} \cap V( x_i \, \vert \. i \in \gamma )) = \Pi _{\gamma} (C)^{\perp}$ for an arbitrary linear code $C \subset \FF_{q^{\delta (a)}} ^n$ with dual code $C^{\perp} \subset \FF_{q^{\delta (a)}} ^n$ and an arbitrary $\gamma \in \Sigma _s (1, \ldots , n)$ (cf.\cite{HP}).
 Plugging in $C = T_a (X, \FF_{q^{\delta (a)}})$ and combining with (\ref{GenericPuncturedTangentCode}), one obtains (\ref{ShorteningGradientCode}).

\end{proof}

\subsection{ Extension,   direct sum  and the $(u \vert u+v)$ construction  }

For an arbitrary field $\FF_q \subseteq F \subseteq \overline{\FF_q}$ let
\[
\varphi _{n+1} : F^n \longrightarrow F^{n+1},
\]
\[
\varphi _{n+1} (x_1, \ldots , x_n) = \left(  x_1, \ldots , x_n, - \sum\limits _{i=1} ^n x_i \right)  \ \ \mbox{  for  } \ \ \forall (x_1, \ldots , x_n) \in F^n
\]
be the embedding of $F^n$ in $F^{n+1}$ as the hyperplane with equation
\[
x_1 + \ldots + x_n + x_{n+1} =0.
\]
The image $\varphi _{n+1} (C) \subset \FF_{q^m} ^{n+1}$ of a linear code $C \subset \FF_{q^m} ^n$  is called the extension of $C$.
If $X \subset \overline{\FF_q}^n$,   is an affine variety then
\[
\varphi _{n+1} : X \longrightarrow \varphi _{n+1} (X),
\]
\[
\varphi _{n+1} \left( x_1, \ldots , x_n) := (x_1, \ldots , x_n, - \sum\limits _{i=1} ^n x_i \right)
\]
 is a biregular morphism and we say that
 $\varphi _{n+1} (X) \subset \overline{\FF_q} ^{n+1}$ is the extension of $X$.

 For arbitrary linear codes $C_1 \subset \FF_q^n$ and $C_2 \subset \FF_q ^m$ the direct sum
 \[
 C_1 \oplus C_2 := \{ (v_1, v_2) \ \ \vert \ \  v_1 \in C_1, \ \  v_2 \in C_2 \} \subset \FF_q ^{n+m}
 \]
 is a linear code of length $n+m$ and dimension
 \[
 \dim _{\FF_q} (C_1 \oplus C_2) = \dim _{\FF_q} (C_1) + \dim _{\FF_q} (C_2).
\]
It is easy to observe that the  corresponding operation on affine varieties $X \subset \overline{\FF_q}^n$ and $Y \subset \overline{\FF_q}^m$ is the direct product
 \[
 X \times Y := \{ (a,b) \ \ \vert \ \  a \in X, \ \ b \in Y \}.
 \]

The $(u \vert u+v)$ construction on linear codes $C_1 \subset \FF_q^n$ and $C_2 \subset \FF_q^n$  is the linear code
\[
(C_1 \vert C_1 +C_2) := \{ (u, u+v) \ \ \vert \ \  u \in C_1, \ \  v \in C_2 \} \subset \FF_q ^{2n}
\]
of length $2n$.
For an affine variety $X \subset \overline{\FF_q}^n$ and a morphism
\[
g = (g_1, \ldots , g_s) : \overline{\FF_q} ^n \longrightarrow \overline{\FF_q} ^s
\]
 with  $g_1, \ldots , g_s \in \FF_q [ x_1, \ldots , x_n]$, the fibered product
 \[
 X \times _{g(X)} \overline{\FF_q} ^n := \{ (a,b) \in X \times \overline{\FF_q} ^n \ \ \vert \ \  g(a) = g(b) \}
 \]
 is uniquely determined by the commutative diagram
 \[
 \begin{diagram}
 \node{X}  \arrow{s,r}{g}   \node{X \times _{g(X)} \overline{\FF_q}^n}  \arrow{w,t}{{\rm pr} _1}  \arrow{s,r}{{\rm pr} _2}   \\
 \node{g(X)}  \node{\overline{\FF_q}^n}  \arrow{w,t}{g}
 \end{diagram}
 \]
in which the parallel arrows correspond to maps with isomorphic fibres.
 Recall that an  affine variety $Y = g^{-1} (0^s) \subset \overline{\FF_q}^n$ with
 $I(Y, \overline{\FF_q}) = \langle g_1, \ldots , g_s \rangle _{\overline{\FF_q}}$ is a complete intersection if $\dim Y = n-s$.

 \begin{proposition}  \label{ExtendingDirectSumUV}
 (i) Let $X / \FF_q \subset \overline{\FF_q}^n$ be an irreducible affine variety, defined over $\FF_q$ and
 \[
 \varphi _{n+1} : \overline{\FF_q} ^n \longrightarrow \overline{\FF_q}^{n+1},
 \]
 \[
 \varphi _{n+1} (x_1, \ldots , x_n) =  \left( x_1, \ldots , x_n, - \sum\limits _{i=1} ^n x_i \right) \ \
  \mbox{   for  }   \ \ \forall (x_1, \ldots , x_n) \in \overline{\FF_q}^n
 \]
be the extending map.
Then the extension
\begin{equation}   \label{ExtendingVarietiesTangentCodes}
\varphi _{n+1} T_a (X, \FF_{q^{\delta (a)}}) = T_{\varphi _{n+1} (a)} ( \varphi _{n+1} (X), \FF_{q^{\delta (a)}})  \ \
 \mbox{  for  } \ \ \forall a \in X^{\rm smooth}
\end{equation}
of a Zariski tangent space to $X$ is a Zariski tangent space to the extension $\varphi _{n+1}(X)$ of $X$.

(ii) Let $X / \FF_q \subset \overline{\FF_q}^n$ and $Y / \FF _q \subseteq \overline{\FF_q}^n$ be irreducible affine varieties, defined over $\FF_q$.
Then the direct sum
\begin{equation}    \label{DirectSumProduct}
T_a (X, \FF_{q^{\delta (a,b)}}) \oplus T_b (Y, \FF_{q^{\delta (a, b)}}) = T_{(a,  b)} (X \times Y, \FF_{q^{\delta (a,b)}})  \ \
\mbox{   at   }  \ \ \forall  (a,b) \in X^{\rm smooth} \times Y^{\rm smooth}
\end{equation}
of tangent spaces to $X$ and $Y$ is a tangent space to $X \times Y$.

(iii) Let  $X / \FF_q \subset \overline{\FF_q}^n$   be   an irreducible affine variety, defined over $\FF_q$ and
 $g = (g_1, \ldots , g_s) : \overline{\FF_q} ^n \rightarrow \overline{\FF_q}^s$    with $g_1, \ldots , g_s \in \FF_q [ x_1, \ldots , x_n]$
be such a morphism that the fibre $Y _a:= g^{-1} (a) / \FF_q \subset \overline{\FF_q}^n$ through $a \in  X^{\rm smooth}$  is an irreducible complete intersection, containing $a$ in its smooth locus.
Then the $(u\vert u +v)$ construction
\begin{equation}    \label{FiberedProductConstruction}
 \left( T_a (X, \FF_{q^{\delta (a)}}) \ \  \vert \ \ T_a (X, \FF_{q^{\delta (a)}}) + T_a (Y_a, \FF_{q^{\delta (a)}}) \right) =
T_{(a,  a)} (X \times _{g(X)} \overline{\FF_q}^n, \FF_{q^{\delta (a)}})
\end{equation}
of Zariski tangent spaces to $X$ and   $Y_a $   is the tangent space to the fibered product $X \times _{g(X)} \overline{\FF_q}^n$ at $(a, a)$.
\end{proposition}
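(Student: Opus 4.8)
The plan is to handle all three items by the same scheme: write down an explicit polynomial generating set of the absolute ideal of the constructed affine variety, evaluate its Jacobian at the indicated point, and identify the null--space with the announced operation on the tangent codes. Throughout I use the coordinate description of $T_a(X,F)$ from Section 2 and the fact that the null--space of the Jacobian of \emph{any} generating set of the absolute ideal computes the Zariski tangent space.

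For (i): the substitution $x_{n+1}\mapsto-(x_1+\ldots+x_n)$ gives an isomorphism $\overline{\FF_q}[x_1,\ldots,x_{n+1}]/\langle x_1+\ldots+x_{n+1}\rangle\cong\overline{\FF_q}[x_1,\ldots,x_n]$ taking the absolute ideal of $\varphi_{n+1}(X)$ onto $I(X,\overline{\FF_q})=\langle f_1,\ldots,f_m\rangle$, so $I(\varphi_{n+1}(X),\overline{\FF_q})=\langle x_1+\ldots+x_{n+1},f_1,\ldots,f_m\rangle$ (with the $f_i$ not involving $x_{n+1}$), which is radical because the quotient ring is $\overline{\FF_q}[X]$. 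Evaluating the Jacobian of this generating set at $\varphi_{n+1}(a)$ one gets the row $(1,\ldots,1,1)$ together with the rows of $\tfrac{\partial f}{\partial x}(a)$ extended by a zero column; a vector $(v,t)$ in the kernel has $v\in T_a(X,\FF_{q^{\delta(a)}})$ and $t=-\sum_i v_i$, so the kernel is $\varphi_{n+1}T_a(X,\FF_{q^{\delta(a)}})$, which is (\ref{ExtendingVarietiesTangentCodes}). (In fact $\varphi_{n+1}\colon X\to\varphi_{n+1}(X)$ is biregular with differential $v\mapsto(v,-\sum_i v_i)$ at every point, so the identity holds at all $a\in X$.) For (ii): since $\overline{\FF_q}$ is algebraically closed, $\overline{\FF_q}[X]\otimes_{\overline{\FF_q}}\overline{\FF_q}[Y]$ is an integral domain, so with $I(X,\overline{\FF_q})=\langle f_1,\ldots,f_m\rangle$ in the $x$--variables and $I(Y,\overline{\FF_q})=\langle h_1,\ldots,h_r\rangle$ in the $y$--variables one has $I(X\times Y,\overline{\FF_q})=\langle f_1,\ldots,f_m,h_1,\ldots,h_r\rangle$; the Jacobian of this set at $(a,b)$ is block--diagonal with blocks $\tfrac{\partial f}{\partial x}(a)$ and $\tfrac{\partial h}{\partial y}(b)$, so its kernel is $T_a(X,\FF_{q^{\delta(a,b)}})\oplus T_b(Y,\FF_{q^{\delta(a,b)}})$, which is (\ref{DirectSumProduct}) (again at all points $(a,b)$).

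For (iii), put $Z:=X\times_{g(X)}\overline{\FF_q}^n$, $Y_a:=g^{-1}(g(a))$, and $J:=\langle f_1(x),\ldots,f_m(x),g_1(x)-g_1(y),\ldots,g_s(x)-g_s(y)\rangle\subset\overline{\FF_q}[x,y]$. The first step is the ideal claim $I(Z,\overline{\FF_q})=J$ in a Zariski neighbourhood of $(a,a)$. Here I would use that $\mathrm{pr}_1\colon Z\to X$ is the base change of $g\colon\overline{\FF_q}^n\to\overline{\FF_q}^s$ along $X\hookrightarrow\overline{\FF_q}^n\xrightarrow{\ g\ }\overline{\FF_q}^s$, with scheme--theoretic fibre over $a\in X^{\mathrm{smooth}}$ equal to $Y_a$ with structure ideal $\langle g_1-g_1(a),\ldots,g_s-g_s(a)\rangle$, which by the complete--intersection hypothesis is $I(Y_a,\overline{\FF_q})$; as $Y_a$ is irreducible of dimension $n-s$ and $\overline{\FF_q}^n,\overline{\FF_q}^s$ are regular, the local criterion for flatness (``miracle flatness'') makes $g$ flat near $a$, hence $\mathrm{pr}_1$ flat near $(a,a)$ with reduced irreducible fibre $Y_a$, and descent of integrality along a flat morphism over the irreducible base $X$ gives that $Z$ is reduced and irreducible at $(a,a)$, so $J$ agrees there with $I(Z,\overline{\FF_q})$. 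Granting this, the Jacobian of $J$ at $(a,a)$ has the rows of $\tfrac{\partial f}{\partial x}(a)$ extended by $n$ zeros (from the $f_i$) and the rows $\bigl(\tfrac{\partial g}{\partial x}(a)\ \ {-}\tfrac{\partial g}{\partial x}(a)\bigr)$ (from $g_j(x)-g_j(y)$). A pair $(u,w)$ is annihilated iff $u\in T_a(X,\FF_{q^{\delta(a)}})$ and $\tfrac{\partial g}{\partial x}(a)(u-w)^t=0$, i.e. iff $u\in T_a(X,\FF_{q^{\delta(a)}})$ and $w-u\in\ker\tfrac{\partial g}{\partial x}(a)=T_a(Y_a,\FF_{q^{\delta(a)}})$, using $\partial(g_j-g_j(a))/\partial x=\partial g_j/\partial x$. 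Setting $v:=w-u$ displays this kernel as $\{(u,u+v)\mid u\in T_a(X,\FF_{q^{\delta(a)}}),\,v\in T_a(Y_a,\FF_{q^{\delta(a)}})\}$, which is exactly $(T_a(X,\FF_{q^{\delta(a)}})\ \vert\ T_a(X,\FF_{q^{\delta(a)}})+T_a(Y_a,\FF_{q^{\delta(a)}}))$, proving (\ref{FiberedProductConstruction}).

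The hard part is the ideal--theoretic step in (iii): verifying that the obvious generators cut the fibered product out reducedly at $(a,a)$, so that the block Jacobian above really computes $T_{(a,a)}(Z,\cdot)$ --- this is precisely what the complete--intersection hypothesis on the fibre $Y_a$ is there to provide (one could alternatively check directly that $J$ is prime, using that $\mathrm{pr}_1$ then has integral generic fibre over $\overline{\FF_q}(X)$ and integral fibre $Y_a$, with $X$ irreducible). By contrast, (i) and (ii) are essentially formal: their only non--trivial ingredients are the radicality of $\langle x_1+\ldots+x_{n+1},f_1,\ldots,f_m\rangle$ and of $\langle f_1,\ldots,f_m,h_1,\ldots,h_r\rangle$, both clear over the algebraically closed field $\overline{\FF_q}$, together with the block shape of the respective Jacobians.
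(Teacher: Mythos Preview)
Your proof is correct, but it takes a genuinely different route from the paper's. The paper uses a uniform ``one inclusion plus dimension sandwich'' for all three parts: from the obvious containment $J\subseteq I(Z,\overline{\FF_q})$ it gets $T_{(a,a)}(Z)\subseteq C$ where $C$ is the explicit null--space of the Jacobian of $J$, then combines $\dim T_{(a,a)}(Z)\ge\dim Z$, a lower bound on $\dim Z$ (biregularity in (i), $\dim(X\times Y)=\dim X+\dim Y$ in (ii), Krull's height theorem applied to the $s$ equations cutting $Z$ from $X\times\overline{\FF_q}^n$ in (iii)), and the computation $\dim C=\dim Z$ from the smoothness hypotheses, to force equality. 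The paper never asks whether the explicit polynomials \emph{generate} the absolute ideal, only that they lie in it. Your approach instead establishes the stronger ideal--theoretic statement $J=I(Z,\overline{\FF_q})$ (globally in (i),(ii), locally in (iii)) and then reads the tangent space directly off the Jacobian of $J$. This buys you a real strengthening in (i) and (ii) --- your identities hold at \emph{every} point, not only at smooth ones --- while in (iii) it costs you the miracle--flatness detour, which is heavier than the paper's one--line Krull bound $\dim(X\times_{g(X)}\overline{\FF_q}^n)\ge\dim X+n-s$. Your invocation of ``descent of integrality'' is a bit loose (you check only the one fibre $Y_a$), but the conclusion is secured anyway: flatness of $\mathrm{pr}_1$ at $(a,a)$ gives $\dim_{(a,a)}V(J)=\dim X+(n-s)$, which equals the kernel dimension you computed, so the scheme $V(J)$ is regular --- hence reduced --- at $(a,a)$, and $J$ agrees with $I(Z)$ there without appealing to all fibres.
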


 \begin{proof}

(i) According to the inclusion
$I(X, \overline{\FF_q})  \subset I( \varphi _{n+1} (X), \overline{\FF_q}) \triangleleft \overline{\FF_q} [ x_1, \ldots , x_n, x_{n+1}]$
of the absolute ideals and  $x_1 + \ldots + x_n + x_{n+1} \in I( \varphi _{n+1}(X), \overline{\FF_q})$, the Zariski tangent space
$T_{\varphi _{n+1} (a)} ( \varphi _{n+1} (X), \FF_{q^{\delta (a)}})$ to $\varphi _{n+1} (X)$ at $\varphi _{n+1} (a) \in \varphi _{n+1} (X)$ is contained in the extension
\[
\varphi _{n+1} T_a (X, \FF_{q^{\delta (a)}}) :=  \left \{  \left(  v, - \sum\limits _{i=1} ^n v_i \right) \in \FF_{q^{\delta (a)}} ^{n+1} \ \ \Big \vert \ \  v \in T_a (X, \FF_{q^{\delta (a)}}) \right \}
\]
 of $T_a (X, \FF_{q^{\delta (a)}})$.
If $a \in X^{\rm smooth}$ then
\[
\dim _{\FF_{q^{\delta (a)}}} \varphi _{n+1} T_a (X, \FF_{q^{\delta (a)}}) = \dim T_a (X, \FF_{q^{\delta (a)}}) = \dim X.
\]
On the other hand, the biregular morphism $\varphi _{n+1} : X \rightarrow \varphi _{n+1} (X)$ has image $\varphi _{n+1} (X)$ of $\dim \varphi _{n+1} (X) = \dim X$, so that
\begin{align*}
\dim X = \dim \varphi _{n+1} (X) \leq \dim T_{\varphi _{n+1}(a)}  ( \varphi _{n+1} (X), \FF_{q^{\delta   (a)}}) \leq  \\
\dim _{\FF_{q^{\delta (a)}}} \varphi _{n+1} T_a (X, \FF_{q^{\delta (a)}}) = \dim X
\end{align*}
and there follows  (\ref{ExtendingVarietiesTangentCodes}) at all $a \in X^{\rm smooth}$.

(ii)  By $I(X, \overline{\FF_q}) \subseteq I(X \times Y, \overline{\FF_q}) \triangleleft \overline{\FF_q} [ x_1, \ldots , x_n, y_1, \ldots , y_m]$ and
$I(Y, \overline{\FF_q}) \triangleleft I(X \times Y, \overline{\FF_q})$ one has
\begin{align*}
T_{(a, b)} (X \times Y, \FF_{q^{\delta (a, b)}}) \subseteq T_a (X, \FF_{q^{\delta (a, b)}}) \times T_b (Y, \FF_{q^{\delta (a, b)}})  \simeq  \\
 T_a (X, \FF_{q^{\delta (a,  b)}}) \oplus T_b (Y, \FF_{q^{\delta (a, b)}}) .
\end{align*}
Note that $\dim (X \times Y) = \dim X + \dim Y$.
If $a \in X^{\rm smooth}$ and $b \in Y^{\rm smooth}$ then
\begin{align*}
\dim X + \dim Y = \dim (X \times Y) \leq \dim T_{(a,  b)} (X \times Y, \FF_{q^{\delta (a,  b)}}) \leq  \\
 \dim T_a (X, \FF_{q^{\delta (a, b)}}) \oplus
\dim T_b (Y, \FF_{q^{\delta (a,  b)}}) = \dim X  + \dim Y,
\end{align*}
whereas  (\ref{DirectSumProduct}).

(iii) The inclusions
 $I(X, \overline{\FF_q}) \subseteq I(X \times _{g(X)} \overline{\FF_q} ^n, \overline{\FF_q}) \triangleleft
 \overline{\FF_q} [ x_1, \ldots , x_n, y_1, \ldots , y_n]$ and
  $g_1(y) - g_1 (x), \ldots , g_s (y) - g_s (x) \in I(X \times _{g(X)} \overline{\FF_q} ^n, \overline{\FF_q})$ require  the Zariski tangent space
   $T_{(a,  b)} (X \times _{g(X)} \overline{\FF_q}^n, \FF_{q^{\delta (a, b)}})$  to be  contained in the $\FF_{q^{\delta (a, b)}}$-linear code
   $C_{(a,  b)}$ of length $2n$ with parity check matrix
   \[
   H_{(a, b)} = \left(  \begin{array}{cc}
   \frac{\partial f}{\partial x} (a)  &  0  \\
   \mbox{  }  &  \mbox{  }  \\
   - \frac{\partial g}{\partial x} (a) &  \frac{\partial g}{\partial y} (b)
   \end{array}  \right).
   \]
   Bearing in mind that $\frac{\partial g}{\partial y}(b)$ is the parity check matrix of $T_b (Y_a, \FF_{q^m})$ for all
    $m \in \NN$ with $\FF_{q^{m}} \ni b$, one concludes that
    \[
    C_{(a, a)} = (T_a (X, \FF_{q^{\delta (a)}}) \ \ \vert \ \  T_a (X, \FF_{q^{\delta (a)}}) + T_a (Y_a, \FF_{q^{\delta (a)}}))
    \]
(cf.\cite{HP}).
According to $a \in X^{\rm smooth} \cap  Y_a ^{\rm smooth}$, one has
\[
\dim T_a (X, \FF_{q^{\delta (a)}}) = \dim X,  \ \
\dim T_a ( Y_a, \FF_{q^{\delta (a)}}) = \dim Y_a   = n-s \ \ \mbox{  and  }
\]
\[
\dim C_{a \times a} = \dim X + n-s.
\]
On the other hand, the fibered product $X \times _{g(X)} \overline{\FF_q}^n$ is cut from $X \times \overline{\FF_q}^n$ by $s$ equations $g_i(x) = g_i(y)$, so that  $\dim (X \times _{g(X)} \overline{\FF_q}^n) \geq \dim X + n-s$.
Now,  the inclusion
\[
T_{a \times a} (X \times _{g(X)} \overline{\FF_q}^n, \FF_{q^{\delta (a)}}) \subseteq C_{a \times a}
\]
 implies that
\begin{align*}
\dim X + n-s \leq \dim (X \times _{g(X)} \overline{\FF_q}^n) \leq \dim T_{a \times a} (X \times _{g(X)} \overline{\FF_q}^n, \FF_{q^{\delta (a)}}) \leq  \\
\dim _{\FF_{q^{\delta (a)}}} C_{a \times a} = \dim X + n-s
\end{align*}
 and justifies  (\ref{FiberedProductConstruction}).

 \end{proof}


\section{Families of Hamming isometries}

Recall that the Hamming distance
\[
d(a,b) := | \{ 1 \leq i \leq n \ \ \vert \ \  a_i \neq b_i \}|
\]
between  two words $a, b \in \FF_q^n$ equals the number of their different components.
A map $\mathcal{I} : \FF_q^n \rightarrow \FF_q^n$ is called a Hamming isometry if it preserves the Hamming distance, i.e.,
 $d( \cI a, \cI b) = d(a,b)$ for $\forall a, b \in \FF_q^n$.
All Hamming isometries are bijective maps.
More precisely, if we assume the existence of different $a, b \in \FF_q^n$ with $\cI a = \cI b$ then $0 = d( \cI a, \cI b) = d(a,b)$ contradicts $a \neq b$.

\subsection{ Finite morphisms with isometric differentials}

 The present subsection provides a  pattern for  a  construction of a  global   morphism $\psi : \overline{\FF_q}^n \rightarrow \overline{\FF_q}^n$ and a hypersurface $V( \psi _o) \subset \overline{\FF_q}^n$, depending explicitly  on $\psi$,  such that the  differentials  of $\psi$  restrict to  linear Hamming isometries
 \[
(d \psi ) _a : T_a (X, \FF_{q^{\delta (a)}}) \longrightarrow T_{\psi (a)} ( \psi (X), \FF_{q^ {\delta (a)}})
\]
on the     tangent  codes   to  a generic     affine variety $X    \nsubseteq V( \psi _o)$ at a  generic  points $a \in X \setminus V( \psi _o)$.

 \begin{proposition}   \label{ExistenceMorphismIsometricDifferentials}
For arbitrary polynomials $\psi _1, \ldots , \psi _n \in \FF_q [ x_1, \ldots , x_n]$ and an arbitrary permutation $\sigma \in {\rm Sym} (n)$, let us consider the morphism
\[
\psi := ( x_{\sigma(1)} \psi _{\sigma (1)} (x_1^p, \ldots , x_n^p), \ldots , x_{\sigma (n)} \psi _{\sigma (n)} ( x_1^p, \ldots , x_n^p)) :
\overline{\FF_q}^n  \longrightarrow \overline{\FF_q}^n
 \]
 and the hypersurface $V( \psi _o) \subset \overline{\FF_q}^n$ with equation
\[
 \psi _o ( x_1, \ldots , x_n) :=  \psi _1(x_1^p, \ldots , x_n^p) \ldots \psi _n (x_1^p, \ldots , x_n^p),
 \]
where $p = {\rm char} (\FF_q)$ stands for the characteristic of the basic field $\FF_q$.
Then any irreducible  affine variety $X \subset \overline{\FF_q}^n$, which is not entirely contained in the hypersurface $V( \psi _o)$ has a non-empty Zariski open, Zariski dense subset
\[
W :=  \left[ X^{\rm smooth} \cap \psi ^{-1} ( \psi (X) ^{\rm smooth}) \right] \setminus V( \psi _o),
 \]
such that the differentials of $\psi$ restrict to $\FF_{q^{\delta (a)}}$-linear Hamming isometries
 \[
 (d \psi ) _a : T_a (X, \FF_{q^{\delta (a)}}) \longrightarrow T_{\psi (a)} (\psi (X), \FF_{q^{\delta (a)}})
 \]
 at all the points $a \in W$.
  \end{proposition}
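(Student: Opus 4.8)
The plan is to compute the Jacobian matrix of $\psi$ at a point $a \in \overline{\FF_q}^n$ and show that, modulo the vanishing of $\psi_o$, it acts on tangent codes as a composition of a coordinate permutation with a diagonal rescaling by nonzero scalars --- both of which are visibly linear Hamming isometries. First I would differentiate $\psi_i = x_{\sigma(i)}\psi_{\sigma(i)}(x_1^p,\ldots,x_n^p)$. Because each $\psi_j$ is evaluated at $(x_1^p,\ldots,x_n^p)$ and $\frac{\partial}{\partial x_k}(x_k^p) \equiv 0$ in characteristic $p$, the chain rule kills every contribution of $\psi_{\sigma(i)}$ to the derivative, leaving
\[
\frac{\partial \psi_i}{\partial x_k}(a) = \delta_{\sigma(i),k}\,\psi_{\sigma(i)}(a_1^p,\ldots,a_n^p)\qquad (1\le i,k\le n).
\]
Thus $(d\psi)_a = D_a P_\sigma$ where $P_\sigma$ is the permutation matrix of $\sigma$ and $D_a = \mathrm{diag}(\psi_1(a^p),\ldots,\psi_n(a^p))$, with $a^p := (a_1^p,\ldots,a_n^p)$. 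Outside $V(\psi_o)$ every diagonal entry $\psi_j(a^p)$ is nonzero, so $(d\psi)_a$ is an invertible monomial matrix, i.e. it permutes coordinates and multiplies each by a nonzero constant; such a map preserves the support of every vector and hence its Hamming weight, so it is a linear Hamming isometry of $\FF_{q^{\delta(a)}}^n$.

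Next I would take care of the ambient-space vs. variety bookkeeping so that $(d\psi)_a$ genuinely restricts to an \emph{isomorphism} $T_a(X,\FF_{q^{\delta(a)}}) \to T_{\psi(a)}(\psi(X),\FF_{q^{\delta(a)}})$. The set $W := \bigl[X^{\rm smooth} \cap \psi^{-1}(\psi(X)^{\rm smooth})\bigr]\setminus V(\psi_o)$ is the intersection of three Zariski-open subsets of $X$: $X^{\rm smooth}$ is open and dense since $X$ is irreducible; $\psi^{-1}(\psi(X)^{\rm smooth})$ is open by continuity of $\psi$ and nonempty because $\psi(X)$ is irreducible with nonempty smooth locus and $\psi$ is dominant onto $\psi(X)$; and $X\setminus V(\psi_o)$ is open and nonempty precisely because, by hypothesis, $X\nsubseteq V(\psi_o)$, so $W$ is a nonempty open --- hence dense --- subset of $X$. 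For $a\in W$, the differential $(d\psi)_a\colon T_a(\overline{\FF_q}^n)\to T_{\psi(a)}(\overline{\FF_q}^n)$ is the invertible monomial map above, so it is injective on the subspace $T_a(X,\FF_{q^{\delta(a)}})$ and maps it into $T_{\psi(a)}(\psi(X),\FF_{q^{\delta(a)}})$ (by functoriality of differentials, $(d\psi)_a T_a(X) \subseteq T_{\psi(a)}(\psi(X))$). To get equality of these images I compare dimensions: $\dim T_a(X,\FF_{q^{\delta(a)}}) = \dim X$ at the smooth point $a$, $\dim T_{\psi(a)}(\psi(X),\FF_{q^{\delta(a)}}) = \dim\psi(X)$ at the smooth point $\psi(a)$, and $\dim\psi(X)=\dim X$ because the monomial matrix $(d\psi)_a$ is invertible, so $\psi$ is generically finite onto $\psi(X)$, forcing $\dim\psi(X)=\dim X$. (Alternatively one invokes Lemma~\ref{FiniteSeparableAndEtalePuncturings}-style reasoning: an injective differential at a point over the smooth locus of the image forces the dimensions to agree and the differential to be surjective.) The chain of inequalities
\[
\dim X = \dim\psi(X) \le \dim T_{\psi(a)}(\psi(X),\FF_{q^{\delta(a)}}) = \dim (d\psi)_a T_a(X,\FF_{q^{\delta(a)}}) = \dim T_a(X,\FF_{q^{\delta(a)}}) = \dim X
\]
then gives $(d\psi)_a T_a(X,\FF_{q^{\delta(a)}}) = T_{\psi(a)}(\psi(X),\FF_{q^{\delta(a)}})$, so the restriction is a bijective linear Hamming isometry, as claimed.

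The computation itself is routine; the only genuinely delicate point is ensuring that $W$ is nonempty and dense, which is exactly where the hypothesis $X\nsubseteq V(\psi_o)$ and the irreducibility of $X$ (hence of $\psi(X)$, hence nonemptiness of $\psi(X)^{\rm smooth}$ and openness plus density of its preimage) are used; I would spell this out carefully, since without $X\nsubseteq V(\psi_o)$ the diagonal matrix $D_a$ could be singular at every point of $X$ and the statement would fail. A secondary subtlety is justifying that $(d\psi)_a$ really does land in and surject onto the tangent code of $\psi(X)$ rather than merely of the Zariski closure of $\psi(X)$; this is handled by working over the smooth locus of $\psi(X)$ and the dimension count above, exactly as in the proof of Lemma~\ref{FiniteSeparableAndEtalePuncturings}~(ii). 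Everything else --- that a monomial matrix is a Hamming isometry, that $\delta(\psi(a))$ divides $\delta(a)$ so the codes live over a common field after base change --- is immediate.
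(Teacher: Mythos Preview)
Your proof is correct and follows essentially the same approach as the paper: compute the Jacobian of $\psi$ (the paper first factors off the permutation $\sigma$ to get a purely diagonal Jacobian, whereas you keep it as a monomial matrix $P_\sigma D_a$ --- note your product should be $P_\sigma D_a$ rather than $D_a P_\sigma$, though this is immaterial), observe it is invertible monomial off $V(\psi_o)$, and verify that $W$ is a nonempty open subset of the irreducible $X$. Your dimension-count argument for surjectivity of $(d\psi)_a$ onto $T_{\psi(a)}(\psi(X),\FF_{q^{\delta(a)}})$ is in fact more explicit than what the paper writes, which only records that the differential is an isometry onto its image contained in the target tangent space.
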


 \begin{proof}

It suffices to prove the proposition for the ${\mathbb F}_q$-morphism
\[
\varphi : = \sigma ^{-1} \psi = ( \varphi _1 = x_1 \psi _1 (x_1^p, \ldots , x_n^p), \ldots , \varphi _n = x_n \psi _n (x_1^p, \ldots , x_n^p)) :
 X \longrightarrow \psi (X),
\]
as far as the permutation $\sigma ^{-1} \in {\rm Sym} (n)$ coincides with its differentials at any point $a \in \overline{\FF_q}^n$ and is a linear Hamming isometry.
Let
\begin{align*}
  \Phi   _p: \FF_q [ x_1, \ldots , x_n] \longrightarrow \FF_q [ x_1, \ldots , x_n],  \\
 \Phi _p   ( f( x_1, \ldots , x_n)) := f( x_1 ^p, \ldots , x_n ^p)
 \end{align*}
be  the Frobenius automorphism of $\FF_q [ x_1, \ldots , x_n]$ of degree $p = {\rm char} \FF_q$.
The differential
\[
(d \varphi) _a : T_a ( \overline{\FF_q}^n, \FF_{q^{\delta (a)}}) \longrightarrow  T_{\varphi (a)} ( \overline{\FF_q}^n, \FF_{q^{\delta (a)}})
\]
of the morphism  $\varphi  := ( x_1 \psi _1 (x_1^p, \ldots , x_n^p), \ldots , x_n \psi _n ( x_1^p, \ldots , x_n^p))$
 has matrix
 \[
\frac{\partial ( \psi _1, \ldots , \psi  _n)}{\partial ( x_1, \ldots , x_n)} (a) =
\left( \begin{array}{cccc}
 \psi _1  (\Phi _p(a))       &  0  &  \ldots  &  0  \\
0  &    \psi _2 (\Phi _p(a))   &  \ldots &  0  \\
\ldots  &  \ldots  &  \ldots  &  \ldots  \\
0  &  0  &  \ldots  &     \psi _n  (\Phi _p (a))
\end{array}  \right)
\]
  with respect to the basis $\left( \frac{\partial}{\partial x_j} \right) _a$, $ 1 \leq i \leq n$ of $T_a ( \overline{\FF_q}^n, \FF_{q^{\delta (a)}})$.
Note that at any point $a \in (X \setminus V( \psi _o)) $ the differential
 $(d \varphi  ) _a : T_a ( \overline{\FF_q}^n, \FF_{q^{\delta (a)}}) \rightarrow T_{\varphi  (a)} ( \overline{\FF_q}^n, \FF_{q^{\delta (a)}})$
is an $\FF_{q^{\delta (a)}}$-linear Hamming isometry and restricts to an $\FF_{q^{\delta (a)}}$-linear Hamming isometry
\[
(d \varphi ) _a : T_a (X, \FF_{q^{\delta (a)}}) \longrightarrow
 (d \varphi) _a T_a (X, \FF_{q^{\delta (a)}}) \subseteq T_{\varphi (a)} ( \varphi (X), \FF_{q^{\delta (a)}})
\]
onto its image.
We claim that $W \neq \emptyset$ is a non-empty Zariski open subset.
Due to the irreducibility of $X$ it suffices to note that $X^{\rm smooth} \neq \emptyset$, $X \setminus V( \psi _o) \neq \emptyset$ and
the non-empty Zariski open subset $\varphi (X)^{\rm smooth} \subseteq \varphi (X)$ pulls back to a non-empty Zariski open subset
 $X  \cap \varphi ^{-1} ( \varphi (X)^{\rm smooth}) \neq \emptyset$ of $X$.

 \end{proof}

\subsection{Interpolation of linear Hamming isometries by  a morphism }

The next proposition realizes  the members $\cI(a) : \FF_q^n \rightarrow \FF_q ^n$   of an arbitrary family $\cI \rightarrow S$ of $\FF_q$-linear Hamming isometries over $S \subseteq \FF_q^n$   by   the differentials  $(d \varphi) _a = \cI(a)$ of an appropriate
$\FF_q$-morphism $\varphi : \overline{\FF_q}^n \rightarrow   \overline{\FF_q}^n$.

\begin{proposition}    \label{DifferentialRealizationOfIsometries}
Let $\cI  \rightarrow S$ be a family of $\FF_q$-linear Hamming isometries  $\cI (a) \in {\rm GL}(n, \FF_q)$, $\cI (a) : \FF_q^n \rightarrow \FF_q^n$, parameterized by a subset $ S \subseteq \FF_q ^n$.
Then there exists an $\FF_q$-morphism $\varphi = ( \varphi _1, \ldots , \varphi _n) : \overline{\FF_q}^n \rightarrow \overline{\FF_q}^n$, whose differentials
 $(d \varphi ) _a = \cI (a)$ at $\forall a \in S$ coincide with the given isometries.
\end{proposition}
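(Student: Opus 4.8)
The plan is to mimic the construction used in Proposition \ref{DestabilizationMinDist} and Proposition \ref{ExistenceMorphismIsometricDifferentials}: build $\varphi$ coordinate-wise from Lagrange interpolation polynomials composed with the Frobenius automorphism $\Phi_p$, so that the partial derivatives of $\varphi_i$ at a point $a\in S$ recover precisely the matrix entries $\cI(a)_{ij}$, while the higher-order terms (which carry the $x_j^q$ factors or $x_j^p$ factors) contribute nothing to the differential at $a\in S$. Concretely, writing $\cI(a)_{ij}$ for the $(i,j)$-entry of the matrix of $\cI(a)\in{\rm GL}(n,\FF_q)$, I would set
\[
\varphi_i(x_1,\ldots,x_n):=\sum_{b\in\Phi_p(S)}\left[\sum_{j=1}^n\cI(\Phi_p^{-1}(b))_{ij}\,(x_j-x_j^q)\right]L^{b_1}_{\FF_q}(x_1^p)\cdots L^{b_n}_{\FF_q}(x_n^p)\in\FF_q[x_1,\ldots,x_n],
\]
exactly as in the proof of Proposition \ref{DestabilizationMinDist}, where $L^{\beta}_{\FF_q}(t)$ is the Lagrange basis polynomial with $L^{\beta}_{\FF_q}(\beta)=1$ and $L^{\beta}_{\FF_q}|_{\FF_q\setminus\{\beta\}}=0$, and $\Phi_p$ is the Frobenius automorphism of degree $p={\rm char}\FF_q$. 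This $\varphi=(\varphi_1,\ldots,\varphi_n):\overline{\FF_q}^n\to\overline{\FF_q}^n$ is manifestly an $\FF_q$-morphism.

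Next I would compute the differential. Because the factor $(x_j-x_j^q)$ vanishes identically on $\FF_q$ and its partial derivative with respect to $x_k$ is $\delta_{jk}$ (the $x_j^q$ term differentiating to $0$ in characteristic $p$), and because each $L^{b_\ell}_{\FF_q}(x_\ell^p)$ has partial derivative $0$ everywhere (it is a polynomial in $x_\ell^p$, so $\frac{\partial}{\partial x_k}L^{b_\ell}_{\FF_q}(x_\ell^p)\equiv 0$), the product rule gives, for $a=(a_1,\ldots,a_n)\in S$,
\[
\frac{\partial\varphi_i}{\partial x_k}(a)=\sum_{b\in\Phi_p(S)}\cI(\Phi_p^{-1}(b))_{ik}\,L^{b_1}_{\FF_q}(a_1^p)\cdots L^{b_n}_{\FF_q}(a_n^p)=\cI(\Phi_p^{-1}\Phi_p(a))_{ik}=\cI(a)_{ik},
\]
since $L^{b_1}_{\FF_q}(a_1^p)\cdots L^{b_n}_{\FF_q}(a_n^p)$ equals $1$ when $b=\Phi_p(a)$ and $0$ otherwise. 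Hence the Jacobian matrix $\frac{\partial(\varphi_1,\ldots,\varphi_n)}{\partial(x_1,\ldots,x_n)}(a)$ equals the matrix of $\cI(a)$, so that $(d\varphi)_a=\cI(a)$ on $T_a(\overline{\FF_q}^n,\FF_{q^{\delta(a)}})\simeq\FF_q^n$ for every $a\in S$, which is exactly the assertion. (Note $\delta(a)=1$ for $a\in S\subseteq\FF_q^n$, so no field extension is needed.)

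I do not anticipate a genuine obstacle here: the argument is essentially a verbatim specialization of the interpolation device already established in Proposition \ref{DestabilizationMinDist}, the only difference being that one interpolates the matrix entries of the isometries $\cI(a)$ rather than the parity-check matrices $\cH(a)$, and one reads off the differential rather than the tangent code. The mildly delicate point, which I would state explicitly, is the two separate uses of the Frobenius: the outer composition $L^{b_\ell}_{\FF_q}(x_\ell^p)$ ensures that the Lagrange factors contribute zero to every partial derivative (so the interpolation "freezes" correctly), while the factor $x_j-x_j^q$ ensures the interpolated data is attached to a term whose derivative at $\FF_q$-points is the clean Kronecker $\delta_{jk}$. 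Once these two observations are in place the proof is immediate, and the fact that each $\cI(a)$ is invertible is not even needed for the statement (though it is guaranteed, since all Hamming isometries are bijective). I would conclude by remarking that $\varphi$ need not itself be a Hamming isometry globally; only its differentials at the points of $S$ are prescribed.
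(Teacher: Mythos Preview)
Your proposal is correct and follows essentially the same approach as the paper: the paper defines exactly the same polynomials $\varphi_i$ via Lagrange interpolation composed with $\Phi_p$ and the factor $(x_j-x_j^q)$, computes the partial derivatives in the same way, and evaluates at $a\in S$ to recover $\cI(a)_{ij}$. Your added remarks (that invertibility of $\cI(a)$ is not actually used, and that $\varphi$ need not be a global isometry) are accurate extras not present in the paper's terser write-up.
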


\begin{proof}

Let us consider the polynomials
\[
\varphi _i (x_1, \ldots , x_n) :=
\sum\limits _{b \in \Phi _p (S)} \left[ \sum\limits _{j=1} ^n \cI ( \Phi _p ^{-1} (b)) _{ij} (x_j - x_j ^q) \right]
 L^{b_1} _{\FF_q} ( x_1 ^p) \ldots L^{b_n} _{\FF_q} (x_n ^p)
\]
for $1 \leq i \leq n$, where
\begin{align*}
\Phi _p : \overline{\FF_q}^n \longrightarrow \overline{\FF_q}^n,  \\
\Phi _p (a_1, \ldots , a_n) = (a_1^p, \ldots , a_n ^p) \ \ \mbox{  for } \ \ \forall a = (a_1, \ldots , a_n) \in \overline{\FF_q} ^n
\end{align*}
 is  the Frobenius automorphism of degree $p = {\rm char} \FF_q$ and
\[
L_{\FF_q} ^{\beta} (t) := \prod\limits _{\alpha \in \FF_a \setminus \{ \beta \} } \frac{t - \alpha}{\beta - \alpha} =
\begin{cases}
- t^{q-1} +1 & \text{ for $\beta =0$, }  \\
- t^{q-1} - \sum\limits _{s=1} ^{q-2} \beta ^{-s} t^s &  \text{ for $\beta \in \FF_q^* $ }
\end{cases}
\]
    stand for the Lagrange basis polynomials, used in Proposition \ref{DestabilizationMinDist}.
Straightforwardly,
\[
\frac{\partial \varphi _i}{\partial x_j} =
\sum\limits _{b \in \Phi _p (S)} \cI ( \Phi _p ^{-1} (b)) _{ij} L^{b_1} _{\FF_q} ( x_1 ^p) \ldots L ^{b_n} _{\FF_q} ( x_n ^p)
\]
for $\forall 1 \leq i,j \leq n$, whereas
\[
\frac{\partial \varphi _i}{\partial x_j} (a) = \cI (a)_{ij} \ \ \mbox{  at   } \ \ \forall a \in S \subseteq \FF_q^n.
\]
Therefore $\cI (a) \in {\rm GL} (n, \FF_q)$ is the matrix of the differential
\[
(d \varphi) _a : T_a ( \overline{\FF_q}^n, \FF_q) \longrightarrow T_{\varphi (a)} ( \overline{\FF_q}^n, \FF_q)
\]
with respect to the basis $\left( \frac{\partial}{\partial x_j} \right) _a$, $1 \leq j \leq n$ of $T_a ( \overline{\FF_q}^n, {\mathbb F}_q)$.

\end{proof}

Note  that the Frobenius automorphism
\[
\Phi _q : \overline{\FF_q}^n \longrightarrow \overline{\FF_q}^n,
\]
\[
\Phi _q (x_1, \ldots, x_n) = (x_1^q, \ldots , x_n ^q)
\]
 restricts to a bijective morphism $\Phi _q : X \rightarrow X$ on any  affine variety $X / \FF_q \subset \overline{\FF_q}^n$, defined over $\FF_q$.
   The morphism $\Phi _q$ is not an isomorphism,  as far as its inverse map  $\Phi _q ^{-1} : X \rightarrow X$ is not a morphism.
   For any $m \in \NN$ there arises a bijective map  $\Phi _q : X( \FF_{q^m}) \rightarrow X( \FF_{q^m})$ of the  set
$X( \FF_{q^m})$ of the $\FF_{q^m}$-rational points of $X$.
In  particular, the Frobenius automorphism $\Phi _q = {\rm Id} : X( \FF_q) \rightarrow X( \FF_q)$ restricts to the identity  on  the $\FF_q$-rational points of $X$.
One can view
\[
\Phi _q : T(X, \FF_{q^m}) \longrightarrow T(X, \FF_{q^m})
\]
as a non-linear Hamming isometry of the Zariski tangent bundles  $T(X, \FF_{q^m})$ for any $m \in \NN$.
Note that $\Phi _q : T_a (X, \FF_{q^m}) \rightarrow T_{\Phi _q(a)} (X, \FF_{q^m})$ interchanges the fibres over
 $a \in X( \FF _{q^m}) \setminus X( \FF_q)$ and  acts on the fibres $\Phi _q : T_a(X, \FF_{q^m}) \rightarrow T_a(X, \FF_{q^m})$ over $a \in X( \FF_q)$.



\begin{thebibliography}{99}



\bibitem{A}
    \newblock  C. W. Ayoub,
    \newblock    The decomposition theorem for ideals in polynomial rings over a domain,
     \emph{Journal of Algebra}, \textbf{76} (1982),   99--110.


\bibitem{BCGB}
 \newblock M. C. Beltrametti, E. Carletti, D. Gallarati, G. M. Bragadin,
  \newblock   \emph{  Lectures on Curves, Surfaces and Projective Varieties  (A Classical View of Algebraic Geometry)     }
  \newblock European Mathematical Society Textbooks, Z\"{u}rich, 2009.

\bibitem{CLSh}
     \newblock D. Cox, J. Little,  D. O'Shea,
     \newblock \emph{Ideals, Varieties, and Algorithms - An Introduction to Computational Algebraic Geometry and Commutative Algebra},
     \newblock Undergraduate Texts in Mathematics, Springer, 1997.



\bibitem{DL}
 \newblock  S. Dodunekov and  I. Landgev,
    \newblock  Near MDS-codes,
  \newblock  \emph{Journal of Geometry}, \textbf{54} (1995), 30--43.




\bibitem{EHV}
    \newblock  D. Eisebud, C. Huneke, W. Vascocelos,
    \newblock  Direct methods for primary decomposition,
     \emph{Inventiones Mathematicae}, \textbf{110} (1992),    207--235.



     \bibitem{GTZ}
    \newblock  P Gianni, B. Trager, G. Zacharias,
    \newblock  Gr\"{o}bner bases and primary decomposition of polynomial ideals,
     \emph{Journal of Symbolic Computation}, \textbf{6} (1988),   149--167.


\bibitem{H}
 \newblock   J. Harris,
  \newblock    \emph{  Algebraic Geometry - A First Course,     }
  \newblock Graduate Texts in  Mathematics, Springer, 1992.


\bibitem{HP}
 \newblock W. C. Huffman, V. Pless,
\newblock \emph{  Fundamentals of Error Correcting Codes},
\newblock Cambridge University Press, 2003.


\bibitem{KM}
 \newblock   A. Kasparian, I. Marinov,
  \newblock   Goppa families of linear codes,
   \newblock   Preprint.


\bibitem{M}
    \newblock   R. Matsumoto,
    \newblock  Computing the radical of an ideal in positive characteristic,
     \emph{Journal of Symbolic Computation}, \textbf{32} (2001),   263--271.

\bibitem{OG}
 \newblock  K. O'Grady,
  \newblock   \emph{  A First Course in Algebraic Geometry,     }
  \newblock   2012.


\bibitem{P}
 \newblock  R. Pellikaan
  \newblock   On the efficient decoding of algebraic-geometric codes,
   \newblock  \emph{   Eurocode 92      }   (P. Camion, P. Charpin and S. Harari eds.) Udine,
    CISM Courses and Lectures  \textbf{339}, Springer, Wien, 1993, 231--253.


\bibitem{PSS}
    \newblock   G. Pfister, A. Sadiq, S. Steidel,
    \newblock  An algorithm for primary decomposition in polynomial rings over the integers,
     \emph{Central European Journal of Mathematics}, \textbf{9} (2011),   897--904.

\bibitem{R}
 \newblock  M. Reid,
  \newblock   \emph{  Undergraduate Algebraic Geometry,     }
  \newblock London Mathematical Society Student Texts, 1989.



\bibitem{S}
    \newblock   A. Sausse,
    \newblock  A new approach to primary decomposition,
     \emph{Journal of Symbolic Computation}, \textbf{11} (1996),   1--15.


\bibitem{Sh}
 \newblock  I. R. Shafarevich,
  \newblock   \emph{ Basic Algebraic Geometry,       } v.1, 2,
  \newblock   Moscow, 1988.




\bibitem{TVZ}
 \newblock M. A. Tsfasman, S. G. Vl\v{a}dut, T. Zink,
  \newblock    Modular curves, Shimura curves, and Goppa codes, better than Varshamov-Gilbert bound,
   \newblock  \emph{  Math. Nachr.      }   \textbf{109} (1982), 21--28.



   \bibitem{W}
    \newblock  J. Wu,
    \newblock  On the algebraic variety decomposition,
     \emph{Systems Science and Ma\-the\-ma\-ti\-cal Sciences}, \textbf{9} (1996),   120--127.


\end{thebibliography}
 \end{document}